\newcommand{\lBrace}{\lbrace\mkern-4mu |}
\newcommand{\rBrace}{|\mkern-4mu \rbrace}
\newcommand{\bs}{b}
\newcommand{\tworows}[2]{$\genfrac{}{}{0pt}{}{\mbox{#1}}{\mbox{#2}}$}
\newtheorem{example}{Example}
\newtheorem{theorem}{Theorem}
\newtheorem{lemma}[theorem]{Lemma}
\newtheorem{proposition}[theorem]{Proposition}
\begin{document}

\title{Axiomatic Characterization of PageRank}

\author{Tomasz W\k{a}s}

\author{Oskar Skibski%
\thanks{\texttt{\{t.was, o.skibski\}@mimuw.edu.pl} \\
This work is partially based on \citet{Was:Skibski:2018:pagerank} presented at the 27th International Joint Conference on Artificial Intelligence (IJCAI-18). Compared to the conference publication, the set of axioms has changed which led to a new proof of uniqueness. All other parts of the paper are also new, including the proof of independence, the comparison with other centrality measures and the extended related work. \\
This work was supported by the National Science Centre under Grant No. 2018/31/B/ST6/03201 and the Foundation for Polish Science under Grant Homing/2016-1/7.}
}
%

\affil{University of Warsaw}

\date{}
 
\maketitle

\begin{abstract}
This paper examines the fundamental problem of identifying the most important nodes in a network.
To date, more than a hundred centrality measures have been proposed, each evaluating the position of a node in a network from a different perspective. 
Our work focuses on PageRank which is one of the most important centrality measures in computer science used in a wide range of scientific applications.
To build a theoretical foundation for choosing (or rejecting) PageRank in a specific setting, we propose to use an axiomatic approach.
Specifically, we propose six simple properties and prove that PageRank is the only centrality measure that satisfies all of them.
In this way, we provide the first axiomatic characterization of PageRank in its general form.
\end{abstract}




\section{Introduction}\label{section:introduction}
The problem of identifying the most important elements in an interconnected system is the fundamental question in network analysis. 
While the first attempts to answer this question systematically date back to the 1940s, in the last few decades we have experienced a fast growth of the literature on the subject of centrality analysis.
As a result, the centrality analysis has played an important role in the rapid development of many areas, including computer science \cite{Page:etal:1999,Weng:etal:2010}, physics \cite{Hill:Braha:2010,Brandes:Fleischer:2005}, bioinformatics~\cite{Jeong:etal:2001,Ivan:Grolmusz:2010} and social sciences \cite{Borgatti:2006,Brandes:2001}.

To date, more than a hundred centrality measures have been proposed, each evaluating the position of a node in a network from a different perspective~\cite{Brandes:Erlebach:2005}.
PageRank, introduced 20 years ago by \citet{Page:etal:1999} to assess the importance of web pages, quickly became one of the most popular centrality measures. 
Not only it plays a fundamental role in the success of the Google search engine, but it is often used in a wide range of scientific applications.
Other AI applications include finding the most influential users in the Twitter social network \cite{Weng:etal:2010} and improving propagation scheme of the graph neural networks~\cite{Klicpera:etal:2018}.
In other fields, PageRank was used to identify cancer genes in proteomic data \cite{Ivan:Grolmusz:2010}, evaluate prestige of journals based on the citation network \cite{Bollen:etal:2006}, but also in more exotic applications such as determining the best player in the history of tennis~\cite{Radicchi:2011} and the key procedure in the Linux kernel~\cite{Chepelianskii:2010}.

Networks, however, differ, and what gives desirable results in one type of network, may turn out to be ill-fitting in another type. 
For instance, cellular terrorist networks require significantly different methods to identify key terrorists~\cite{Lindelauf:etal:2013} than air-traffic networks to evaluate hub airports~\cite{Guimera:etal:2005}.
Given this, and the large number of centrality measures proposed in the literature, there is no clear reason to prefer PageRank over other centrality measures.
To put it more generally, a choice of a centrality measure for a specific application based on its performance in some other settings, its intuitive interpretation or its popularity, cannot be considered a well-founded scientific approach.
What is needed is a rigorous analysis which would justify that a given centrality measure has desirable properties for a specific goal.

In this paper, we propose to use an axiomatic approach to build a theoretical foundation for choosing (or rejecting) PageRank in a specific setting.
More in detail, we propose six axioms based on simple graph operations.
Our main result is that PageRank is the only centrality measure that satisfies all of these axioms.
This result gives new theoretical foundations of PageRank and highlights the properties which uniquely characterize this measure.

Our axiomatization consists of six axioms: \emph{Node Deletion}, \emph{Edge Deletion}, \emph{Edge Multiplication}, \emph{Edge Swap}, \emph{Node Redirect} and \emph{Baseline}.
The first five axioms are so called \emph{invariance axioms}---each of them is named after a graph operation that, under some additional restrictions, does not affect centralities of some nodes in the graph.
For example, Edge Multiplication states that replacing every outgoing edge of a node by a fixed number of copies does not affect centrality of any other node in the network.
The sixth axiom, \emph{Baseline}, specifies centrality of a node in a simple borderline case.
We prove that if a centrality measure satisfies all five invariance axioms, then it is equal to PageRank up to a scalar multiplication. 
Furthermore, if the centrality measure additionally satisfies Baseline, then it is equal to PageRank.

To illustrate the advantages of our approach, consider a network of matches between tennis players.
Here, every node represents a player and every edge represents a single match. 
Specifically, an edge from player A to player B represents a match played by both players and won by player B.
The resulting network, similarly to the World Wide Web, may have multiple edges between two nodes.
Several authors have applied PageRank to this network with an aim to determine the best player in the history of tennis (see, e.g., \cite{Radicchi:2011}).
While the results may or may not be correct, quick check of axioms shows that PageRank is not a good choice of a centrality measure.
Specifically, by looking at Edge Multiplication again, we know that multiplying the number of lost matches of one player by a large number, say 1000, will not affect the value of PageRank of any player. 
If we do such a multiplication for all the lost matches of the top player, we end up with a paradoxical result that the player who now lost most often is still ranked first, as a champion of all times.

As the domain of our axiomatization we choose directed multigraphs, i.e., directed graphs in which multiple edges between the same nodes are allowed. 
Directed multigraphs are a natural model of the World Wide Web which is the original application of PageRank. 
Also, multigraphs are simpler and more illustrative than graphs with edge weights. We note, however, our result can be easily adapted to the setting with simple graphs with real positive edge weights.
Also, following the seminal paper on PageRank, we assume nodes have weights that corresponds to their baseline importance or ''source of rank''.

The problem of finding an axiomatic characterization of PageRank has already been raised in the literature.
So far, however, no axiomatization has been found.
In the most related work, \citet{Altman:Tennenholtz:2005} proposed an axiomatization of the ranking provided by Seeley index, which is a simplified version of PageRank well-defined only for strongly connected graphs. 
Another axiomatization of Seeley index was provided by \citet{Slutzki:Volij:2006}.
\citet{Palacios-Huerta:Volij:2004} focused on a specific setting of a journal citation network and proposed an axiomatization of the invariant method which is equal to Seeley index of a journal divided by the number of its articles.
As we discuss in Section~\ref{section:simplified_pr_axioms}, these axiomatizations cannot be extended to PageRank.


The paper is organized as follows.
In Section~\ref{section:related_work}, the related work is described.
In Section~\ref{section:preliminaries}, we introduce the necessary concepts from graph theory used throughout the paper.
In Section~\ref{section:axioms}, we present our axioms and formulate the main theorem of the paper that PageRank is the only centrality measure that satisfies all of them.
Section~\ref{section:uniqueness} is devoted to the proof of this main theorem.
In Section~\ref{section:comparison}, we consider other centrality measures from the literature and using our axioms show how they differ from PageRank.
In Section~\ref{section:discussion}, we discuss different variants of the definition of PageRank that appear in the literature (Section~\ref{section:definitions}) and we relate our axiomatization to the axiomatizations of a simplified version of PageRank (Section~\ref{section:simplified_pr_axioms}).
Conclusions follow.
The paper contains three appendices:
In \ref{section:appendix:proof-2} we prove that PageRank indeed satisfies our axioms. \ref{section:appendix:independence} contains the proof that our axioms are independent. \ref{section:appendix:other_centralities} presents the analysis of other centrality measures.

\section{Related work}\label{section:related_work}

In this section, we summarize the related literature and compare it to our work.
We focus on two areas of research: the analysis of the properties of PageRank and the axiomatic approach to centrality measures. 

\subsection{Properties of PageRank}

Since PageRank was proposed by \citet{Page:etal:1999} in 1999, many papers appeared that examine its properties.
In the beginning, a majority of them focused on the computational aspects: PageRank was originally created to serve as a centrality measure for networks with millions of nodes, so its quick computation was considered the key issue.
In their original paper, \citet{Page:etal:1999} proposed a simple Power Method which enables the approximate calculation of PageRank.
Since then, many techniques were proposed for faster and more accurate calculations, e.g.: Extrapolation Method~\cite{Kamvar:etal:2003:extrapolation}, Adaptive Method~\cite{Kamvar:etal:2004}, Block Structure Method~\cite{Kamvar:etal:2003:exploiting}, DAG Structure Method~\cite{Arasu:etal:2002}, methods derived from spectral graph theory~\cite{DelCorso:etal:2005}, and others~\cite{Lee:etal:2003,Brezinski:Redivo-Zaglia:2006}.
See a survey by \citet{Berkhin:2005} for a comprehensive comparison of these methods.

Another line of research closely related to our work is the sensitivity analysis of PageRank.
The sensitivity analysis studies what is the impact on PageRank of a small change in the decay factor or the network. 
In turn, our work identifies such graph modifications that do not change PageRank at all.
A number of papers examines the impact of changing the decay factor on the computation of PageRank and the centrality itself.
In the early work, \citet{Pretto:2002} showed that the decay factor affects the ranking of nodes resulting from PageRank.
\citet{Boldi:etal:2005} analyzed how does the measure behave when the decay factor is close to 1 and showed that this is not always desirable.
Another group of research in that area focuses on changes in the network structure: \citet{Kamvar:Haveliwala:2003} estimated the bounds for the impact of such changes and \citet{Chien:etal:2004} proved that adding an edge $(u,v)$ always has a positive impact on both PageRank of node $v$ and its position in the ranking of nodes based on PageRank.
See the work of \citet{Langville:Meyer:2004} for a survey on early results on this topic.

Finally, also similarly to our work, there are also plenty of results that focus on theoretical similarities and differences between PageRank and other centrality measures.
Notably, \citet{Franceschet:2011} observed the resemblance of PageRank with several other measures, including 
HITS~\cite{Kleinberg:1999} and SALSA~\cite{Lempel:Moran:2001} algorithms, the invariant method~\cite{Pinski:Narin:1976}, Katz centrality~\cite{Katz:1953} and Leontief input-output model~\cite{Leontief:1951}.
Similarly, \citet{Vigna:2016} summarized the history of measures that---like PageRank---applies the theory of linear maps to adjacency matrices and discussed how does PageRank fits in this wider context.
\citet{Baeza-Yates:etal:2006} defined a class of centrality measures based on a decay factor to which PageRank belongs and showed how these measures approximate each other.
In turn, \citet{Bianchini:etal:2005} analyzed how groups of pages can reorganize their connections in order to increase their PageRank and \citet{Avrachenkov:Litvak:2006} proved a bound for an increase in PageRank that a node can obtain by changes in its outgoing edges.

A comprehensive study and a survey of computational and theoretical properties of PageRank can be found in a book by \citet{Langville:Meyer:2011}.

\subsection{Axiomatic analysis of centrality measures}

The axiomatic analysis of centrality measures began with the seminal work of \citet{Sabidussi:1966}.
Sabidussi proposed a number of simple axioms and argued that they should be satisfied by all reasonable centrality measures.
Most of the centrality measures used nowadays violate at least one of these axioms.
\citet{Nieminen:1973} proposed a similar axiomatization for directed graphs that precludes centrality measures which, unlike closeness and decay centralities, are not based on distances.
More recently, \citet{Boldi:Vigna:2014}, in a similar approach, introduced three new axioms and analyzed the satisfiability of these axioms.
The authors showed that out of popular centrality measures only the harmonic centrality, an alternative to closeness centrality, satisfies all of them.
More recently, \citet{Boldi:etal:2017} analyzed centrality measures from the perspective of one specific axiom called \emph{rank monotonicity} and proved that PageRank satisfies it.

Another approach, presented also in this paper, is to develop a set of axioms and show that they uniquely characterize some centrality measure, i.e., to show that they can be satisfied only by one centrality measure.
To this day, such axiomatic characterizations have been proposed for several centrality measures.

Arguably due to their complex nature, a lot of effort was put to axiomatize feedback centralities.
The first such axiomatization was proposed by \citet{Brink:Gilles:2000} for beta measure in directed graphs and later on extended by \citet{Brink:etal:2008} to undirected graphs.
Seeley index, which is a simplified version of PageRank, was considered by \citet{Altman:Tennenholtz:2005} and \citet{Palacios-Huerta:Volij:2004}.
In Section~\ref{section:simplified_pr_axioms}, we discuss these two axiomatizations in more detail, as they are most closely related to our work.
\citet{Dequiedt:Zenou:2017} created axiomatizations of eigenvector and Katz centralities in undirected graphs by studying graphs in which some nodes have a fixed centrality.
\citet{Was:Skibski:2018:eigenvector} also provided a joint axiomatization of eigenvector and Katz centralities, but in directed graphs.
\citet{Kitti:2016} created yet another axiomatic characterization of eigenvector centrality where the axioms are defined as the algebraic properties of the adjacency matrix.

The class of centrality measures based on distances was studied by \citet{Garg:2009} and later on by \citet{Skibski:Sosnowska:2018:distance}.
In the later paper, the first axiomatization of closeness centrality was proposed.
Decay centrality has been extended to a random-walk version using the axiomatic approach by \citet{Was:etal:2019:rwd}.

More recently, game-theoretic centrality measures based on the coalitional game theory have gain popularity in the literature~\cite{Lindelauf:etal:2013,Amer:Gimenez:2004,Michalak:etal:2015:defeating}. 
Several axiomatic results have been obtained.
\citet{Skibski:etal:2018:gtc} created an axiomatic characterization of the whole class of game-theoretic centralities.
In turn, \citet{Skibski:etal:2019:attachment} proposed the first axiomatized game-theoretic centrality measure under the name \emph{attachment centrality}.

In a more general approach, \citet{Bloch:etal:2019} showed that many popular centrality measures can be axiomatized with the same set of meta-axioms parametrized with different node statistics.

In a number of the listed papers, including \cite{Brink:Gilles:2000,Skibski:etal:2019:attachment,Garg:2009}, degree centrality was also axiomatized as the borderline case of more complex characterization scheme. 
Notably, to date, no axiomatization of betweenness centrality has been proposed.

\section{Preliminaries}\label{section:preliminaries}
In this section, we introduce the necessary concepts and notations pertaining to graph theory that will be used throughout the paper. 
Since the popularity of PageRank was initiated by its application to the World Wide Web, we will employ this domain as the background of our narrative.

The basic element of the World Wide Web is a \emph{webpage}, or simply a \emph{page}.
Pages are connected with each other through \emph{links}---each page contains links (sometimes called \emph{hyperlinks}) to other pages.
A link for the page it refers to is called a \emph{backlink}.
The same link may appear on a page multiple times.
Also, a page may contain a link to itself. 

In graph theory, a suitable model of the World Wide Web is a directed multigraph.
A \emph{multigraph} is a pair, $G=(V,E)$, where $V$ is a set of nodes that represent pages and $E$ is a multiset of edges that represent links.
Specifically, an edge is an ordered pair of nodes, i.e., $(u,v)$ for some nodes $u,v$, where $u$ is the start of an edge and $v$ is its end.
Consequently, $(u,v)$ is an \emph{outgoing edge} (i.e., a link) for node $u$ and an \emph{incoming edge} (i.e., a backlink) for node $v$.
In multigraphs, one edge may appear several times; hence, $E$ is formally a multiset. 
To emphasize this fact, we will list elements of $E$ using the double brackets $\lBrace \dots \rBrace$ and for two multisets $E$, $E'$, we will denote their sum and difference by $E \sqcup E'$ and $E - E'$, respectively.
From now on, we will refer to directed multigraphs simply as \emph{graphs}.
See Fig.~\ref{figure:main} for illustration.

\begin{figure}[t]
\centering
\begin{tikzpicture}
  \def\x{0.8cm} 
  \def\y{0cm} 
  \def\arrdist{0.3cm}

  \tikzset{
    node_blank/.style={circle,draw,minimum size=0.6cm,inner sep=0, color=white}, 
    node/.style={circle,draw,minimum size=0.6cm,inner sep=0, fill = black!05}, 
    edge/.style={sloped,-latex,above,font=\footnotesize}, 
    el/.style={below,font=\footnotesize}, 
    operation/.style={sloped,>=stealth,above,font=\footnotesize},
    arrow/.style={draw, single arrow, minimum width = 0.9cm, minimum height=\y-6*\x+\s, fill=black!10},
    blank/.style={}
  } 
  \node[node] (A_1) at (\y+2*\x, 4*\x) {$v_1$};
  \node[node] (A_2) at (\y+4*\x, 3*\x) {$v_2$}; 
  \node[node] (A_3) at (\y+4*\x, 1*\x) {$v_3$}; 
  \node[node] (A_4) at (\y+2*\x, 0*\x) {$v_4$};
  \node[node] (A_5) at (\y+0*\x, 1*\x) {$v_5$}; 
  \node[node] (A_6) at (\y+0*\x, 3*\x) {$v_6$}; 
  \node[node] (A_7) at (\y+1.25*\x, 2*\x) {$v_7$}; 
  \node[node] (A_8) at (\y+2.75*\x, 2*\x) {$v_8$};
  \node[node_blank] (A_start) at (\y - \arrdist, 2*\x) {};
  \node[node_blank] (A_end) at (\y+4*\x + \arrdist, 2*\x) {};

  \path[->,draw,thick]
  (A_1) edge[edge]  (A_8)
  (A_2) edge[edge]  (A_3)
  (A_4) edge[edge]  (A_8)
  (A_5) edge[edge, bend right=15]  (A_4)
  (A_5) edge[edge, bend right=15]  (A_6)
  (A_5) edge[edge]  (A_7)
  (A_6) edge[edge, bend right=15]  (A_5)
  (A_6) edge[edge, bend left=15]  (A_1)
  (A_6) edge[edge]  (A_7)
  (A_7) edge[edge]  (A_1)
  (A_8) edge[edge, bend left = 15]  (A_2)
  (A_8) edge[edge, bend right = 15]  (A_2)
  (A_8) edge[edge]  (A_7)
  ;
\end{tikzpicture}
\caption{A graph---a theoretical model of the structure of the World Wide Web. Circles represent nodes, i.e., pages on the World Wide Web, and arrows represent directed edges, i.e., links.
Specifically, an arrow from one node to another represents a link on the former page that refers to the later one. 
Note that one node can have several edges to another node, as one page can have many links to another page.}
\label{figure:main}
\end{figure}
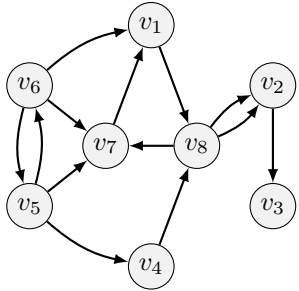

Let us introduce the notation regarding edges \emph{incident} to node $v$, i.e., its outgoing and incoming edges.
The multiset of outgoing edges of node $v$ is denoted by $\Gamma^+_v(G)$ and incoming edges by $\Gamma^-_v(G)$. 
Two nodes $u,v$ are called \emph{out-twins} when the multisets of ends of their outgoing edges are equal: $\Gamma^+_v(G) = \{(v,w) : (u,w) \in \Gamma^+_u(G)\}$.
The \emph{out-degree} of node $v$, denoted by $\deg^+_v(G)$, is the number of its outgoing edges.
If a node has no outgoing edges, then it is called a \emph{sink}.
If it has no incoming edges, then it is called a \emph{source}.
Finally, a node is \emph{isolated} if it is both a sink and a source, i.e., it has no incident edges.

Many graph-theory concepts that we will use rely on the notion of a \emph{path}.
A \emph{path} is a sequence of nodes $(v_1,\dots,v_k)$ with $k > 1$ such that every two consecutive nodes are connected by an edge.
If for two nodes, $u,v$, there exist a path from $u$ to $v$, i.e., path $(v_1,\dots,v_k)$ in which $v_1=u$ and $v_k=v$, then $v$ is called a \emph{successor} of node $u$ and $u$ is a \emph{predecessor} of node $v$.
The length of a shortest such path is called a \emph{distance} from $u$ to $v$, denoted by $dist_{u,v}(G)$.
Node $v$ is a \emph{direct successor} of $u$ (and $u$ is a \emph{direct predecessor} of $v$) if this distance equals 1, i.e., if edge $(u,v)$ is in the graph.
We denote the set of all successors of node $v$ by $S_v(G)$ and all direct successors of $v$ by $S^1_v(G)$.
Analogously, we denote the set of all predecessors of node $v$ by $P_v(G)$ and all direct predecessors of $v$ by $P^1_v(G)$.
Note that a node may be its own successor or predecessor.
Finally, a \emph{cycle} is a path, $(v_1,\dots,v_k)$, on which all nodes are distinct and such that there exists an edge from $v_k$ to $v_1$ in the graph.

We have discussed so far how to model the structure of connections of the World Wide Web using graphs.
PageRank, however, is based not only on the topology of the World Wide Web, but also takes into account a basic exogenously given importance of each page. 
As argued by \citet{Page:etal:1999}, this importance can be used to personalize the measure for a specific user, for example by assuming the importance is non-zero only for a browser homepage or user's bookmarks.
Also, it can model how well a page fits into a given topic~\cite{Haveliwala:2002} or the fact that a page is trusted~\cite{Gyongyi:etal:2004}.
If no such information about the importance is available, it can be assumed that the importance of all pages are equal.
To include the basic importance of each page in our model, we will assume node have weights. Formally, a \emph{graph with node weights} is a pair $(G,\bs)$, where $G$ is a graph and $\bs$ is a \emph{node weight function} that maps a node to a non-negative real number. 
Hence, the weight of a node $v$ is denoted by $\bs(v)$.
We will also use $\bs(G)$ to denote the sum of weights of all nodes in graph $G$.

We are now ready to formally define the concept of \emph{centrality measures} and PageRank, in particular.
A \emph{centrality measure} $F$ is a function that for every node $v$ in a graph with node weights $(G,\bs)$ assigns a non-negative real value, denoted by $F_v(G,\bs)$.
Typically, the higher the number, the more \emph{important} or \emph{central} the node is.
\emph{PageRank}~\cite{Page:etal:1999} is a centrality measure defined as a unique function that satisfies the following recursive equation:
\begin{equation}
\label{eq:pr:main}
PR^a_v(G, \bs) = a \cdot \left( \sum_{(u,v) \in \Gamma^-_v(G)} \frac{PR^a_u(G,\bs)}{\deg_u^+(G)} \right) + \bs(v),
\end{equation}
where $a \in [0,1)$ is a decay factor which is an additional external parameter usually set to a value close to $1$, such as $0.85$ or $0.9$ (see~\ref{section:appendix:proof-2} for the proof of uniqueness).
We note that many variants on the definition of PageRank appear in the literature.
PageRank, as defined above, is unnormalized and may not sum up to one.
We discuss other variants in details in Section~\ref{section:definitions}.

The high-level intuition behind equation~\eqref{eq:pr:main} can be captured by saying that \emph{``important pages link to other important pages''}.
In more detail, according to equation~\eqref{eq:pr:main}, the importance of a node $v$ is defined as a sum of two factors: node's baseline importance, $\bs(v)$, and some share of the importance of nodes that link to $v$ multiplied by the decay factor $a$.
To interpret this share, consider node $u$ which is a direct predecessor of $v$ and assume that the importance of this node is equally distributed among its outgoing edges. 
Thus, from edge $(u,v)$, node $v$ obtains exactly the importance of node $u$ divided by the number of outgoing edges of $u$.
Note that we sum over all edges in the multiset of incoming edges of $v$, so edge $(u,v)$ can be taken into account multiple times.

To illustrate the way PageRank works, \citet{Page:etal:1999} proposed the \emph{random surfer model}.
The surfer starts browsing the World Wide Web from a random page (with distribution specified by the weights). 
Then, in each step of her walk with the probability $a$ the surfer chooses one of the links on a page and clicks it which moves her to the next page, or with the probability $1-a$ the surfer gets bored and jumps to a random page on the World Wide Web, as she would start browsing the World Wide Web all over again.
Now, PageRank is equal (up to a scalar multiplication) to the probability that we find the surfer on a page at an arbitrary time in the distant future, i.e., the limiting probability.
Note that we assume here that every page has at least one link; we discuss in Section 7.1 how to deal with pages without links.

The random surfer model provides an intuition on how PageRank works. 
However, jumps make the random walk hard to analyze.
As an alternative, it can also be assumed that the surfer instead of a jump simply stops surfing altogether.
Now, for this model, PageRank is equal (up to a scalar multiplication) to the expected number of times the surfer visits a page.
Details of this model that we named \emph{busy random surfer model} can be found in~\ref{section:appendix:proof-2}.

\begin{example}\label{example:preliminaries}
Consider PageRank of nodes in the graph from Fig.~\ref{figure:main} for $a = 0.9$, assuming each node has weight one:
\begin{center}
\begin{tabular}{c|cccccccc}
$v$ & $v_1$ & $v_2$ & $v_3$ & $v_4$ & $v_5$ & $v_6$ & $v_7$ & $v_8$ \\
\hline
$PR_v^a$ & $4.91$ & $5.02$ & $5.52$ & $1.43$ & $1.43$ & $1.43$ & $3.87$ & $6.71$
\end{tabular}
\end{center}
The most important node is $v_8$ which is the end of all outgoing edges of nodes $v_1$ and $v_4$. 
Hence, it aggregates the importance of both these nodes.
On the second and third place we have $v_2$ to which the most important node $v_8$ has two out of three outgoing edges and consequently $v_3$ which gets the only outgoing edge of $v_2$.
In turn, at the bottom of the ranking, we have three nodes: $v_5$ and $v_6$ which have incoming edges only from each other and $v_4$ who have exactly the same set of incoming edges as $v_6$; hence, it has the same PageRank.

Finally, consider node $v_7$.
Note that $v_7$ has more incoming edges than $v_8$ which is selected as the most important by PageRank.
However, these edges are from less important nodes which also have several other outgoing edges. 
As a result, their importance is distributed among other nodes and, according to PageRank, $v_7$ turn out to be the fourth least important node.
\end{example}


\section{Axioms for PageRank}\label{section:axioms}
Let us present our main result: the axiomatic characterization of PageRank.
Specifically, we introduce six simple properties, or \emph{axioms}, that PageRank satisfies.
Some of these axioms are satisfied also by other known centrality measures (see Section~\ref{section:comparison}).
However, what is important, PageRank is a unique centrality measure that satisfies all of them which is stated in Theorem~\ref{theorem:main}. The axioms are:

\begin{itemize}
\item \textit{\textbf{Node Deletion} (removing an isolated node from the graph does not affect centralities of other nodes):
For every graph $G = (V,E)$, node weights $\bs$ and isolated node $u \in V$ it holds that
$$ F_v(G,\bs) = F_v \big( \big( V \setminus \{u\}, E \big), \bs \big) \quad \mbox{for every } v \in V \setminus \{u\}.$$
}
\item \textit{\textbf{Edge Deletion} (removing an edge from the graph does not affect centralities of nodes which are not successors of the start of this edge):
For every graph $G = (V,E)$, node weights $\bs$ and edge $(u,w) \in E$ it holds that
$$ F_v(G,\bs) = F_v \big( \big( V , E - \lBrace (u,w) \rBrace \big), \bs \big) \quad \mbox{for every } v \in V \setminus S_u(G).$$
}
\item \textit{\textbf{Edge Multiplication} (creating additional copies of the outgoing edges of a node does not affect the centrality of any node):
For every graph $G=(V,E)$, node weights $\bs$, node $u \in V$ and $k \in \mathbb{N}$ it holds that
$$F_v(G,\bs) = F_v\big((V,E \sqcup \underbrace{\Gamma^+_u(G) \sqcup \dots \sqcup \Gamma^+_u(G)}_{k}), \bs\big) \quad \mbox{for every } v \in V.$$
}
\item \textit{\textbf{Edge Swap} (swapping ends of two outgoing edges of nodes with equal centralities and out-degrees does not affect the centrality of any node):
For every graph $G=(V,E)$, node weights $\bs$ and edges $ (u, u'), (w, w') \in E$
such that $F_u(G,\bs)=F_w(G,\bs)$ and
$\deg^+_u(G)=\deg^+_w(G)$
it holds that
\begin{equation*}
F_v(G,\bs) = F_v\big( (V, E - \lBrace(u,u'), (w,w')\rBrace \sqcup \lBrace(u,w'),(w,u')\rBrace), \bs) \quad \mbox{for every } v \in V.
\end{equation*}
}
\item \textit{\textbf{Node Redirect} (redirecting a node into its out-twin sums up their centralities and does not affect the centrality of other nodes):
For every graph $G=(V,E)$, node weights $\bs$ and out-twins $u,w \in V$ it holds that
\begin{equation*}
F_v(G,\bs) = F_v \big(R_{u \rightarrow w}(G, \bs) \big) \quad \mbox{for every } v \in V \setminus \{u,w\}
\end{equation*}
and $F_u(G,\bs) + F_w(G,\bs) = F_w\big(R_{u \rightarrow w}(G, \bs) \big)$, where
$R_{u \rightarrow w}(G,\bs) = \big( \big( V \setminus \{u\} , E - \Gamma^+_u(G)-\Gamma^-_u(G)  \sqcup \lBrace (v,w) : (v,u) \in \Gamma^-_u(G) \land v \neq u \rBrace \big), b' \big)$, where $\bs'(w) = \bs(u) + \bs(w)$ and $\bs'(v) = \bs(v)$ for every $v \in V \setminus \{u,w\}$.
}
\item \textit{\textbf{Baseline} (the centrality of an isolated node is equal to its weight):
For every graph $G = (V,E)$, node weights $\bs$ and isolated node $v \in V$ it holds that $F_v(G, \bs) = \bs(v)$.
}
\end{itemize}

The first five axioms are \emph{invariance axioms}.
Each invariance axiom is characterized by a graph operation, additional conditions on a graph and a set of nodes; given this, the axiom states that if the conditions are satisfied, then the graph operation does not affect the centrality of nodes in question.
Each axiom is named after the graph operation it considers.
Node Deletion and Edge Deletion concerns removing an isolated node or an outgoing edge of a node which is not a predecessor.
Edge Multiplication and Edge Swap focus on edge modifications: the former axiom considers replacing each outgoing edge of one node by multiple copies; the later one concerns swapping the ends of outgoing edges of two nodes with the same centrality and out-degree.
Finally, Node Redirect, considers removing a node from the graph and rewiring its incoming edges to its out-twin, i.e., a node with the same outgoing edges.
This operation is called a \emph{redirecting}.
Here, the axiom states not only that this operation does not affect centralities of other nodes, but also that the total centrality of both out-twins remain unchanged.
We illustrate all invariance axioms on Fig.~\ref{figure:axioms}.

The invariance axioms characterize PageRank up to a scalar multiplication, i.e., they are satisfied not only by PageRank, but also by PageRank multiplied by some constant.
In order to uniquely characterize PageRank, the last, sixth axiom called Baseline specifies the centrality of an isolated node.
The following theorem presents our main result.

\begin{theorem}\label{theorem:main}
A centrality measure satisfies Node Deletion, Edge Deletion, Edge Multiplication, Edge Swap, Node Redirect, and Baseline if and only if it is PageRank.
\end{theorem}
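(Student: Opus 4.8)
The plan is to prove the two implications of the ``if and only if'' separately. The soundness direction---that $PR^a$ satisfies all six axioms---I expect to be routine, checked directly from the defining equation~\eqref{eq:pr:main}. \emph{Baseline} is immediate because an isolated node has empty $\Gamma^-_v(G)$. For \emph{Node Deletion}, \emph{Edge Multiplication} and \emph{Edge Swap} I would argue through uniqueness of the solution of~\eqref{eq:pr:main}: I verify that the old PageRank values still solve the recursion of the modified graph and then invoke uniqueness of the fixed point. The only points needing care are that \emph{Edge Multiplication} leaves each ratio between the multiplicity of an outgoing edge of $u$ and $\deg^+_u(G)$ invariant, since both scale by the number of copies $k+1$, and that under the hypotheses of \emph{Edge Swap} the two rerouted edges carry equal score because $PR_u/\deg^+_u(G) = PR_w/\deg^+_w(G)$. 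For \emph{Edge Deletion} I would instead expand~\eqref{eq:pr:main} as the Neumann series $\sum_{k \ge 0} a^k M^k \bs$, where $M_{wu}$ is the multiplicity of $(u,w)$ divided by $\deg^+_u(G)$, so that $PR_v$ becomes a weighted sum over all walks ending at $v$; any walk using $(u,w)$ and ending at $v$ exhibits a path from $w$ to $v$ and hence $v \in S_u(G)$, so for $v \notin S_u(G)$ deleting $(u,w)$ removes no contributing walk. \emph{Node Redirect} is the most delicate: I would substitute $\widehat{PR}_w := PR_u + PR_w$ and $\widehat{PR}_x := PR_x$ otherwise into the recursion of $R_{u\to w}(G,\bs)$, using that out-twins share both out-degree and per-successor multiplicities, and handle separately the edges incident to both $u$ and $w$ that the redirection discards or turns into loops.

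For uniqueness, assume $F$ satisfies all six axioms. I would first extract the decay factor from a canonical small graph: in the two-node graph with the single edge $u \to v$ and weights $\bs(u)=1$, $\bs(v)=0$, \emph{Edge Deletion} applied to $(u,v)$ leaves $F_u$ unchanged (as $u \notin S_u(G)$) and \emph{Baseline} then gives $F_u=1$, so I set $a := F_v \ge 0$ (the centrality of $v$ in this two-node graph). The goal becomes $F = PR^a$ on every weighted graph, which I would attack by induction on a complexity measure of $(G,\bs)$ (number of edges, with node count breaking ties). Each axiom serves as a rewriting rule relating $F(G)$ to $F(G')$ for a strictly simpler $G'$; since $PR^a$ obeys the same rules by the soundness direction and $F(G')=PR^a(G')$ by the inductive hypothesis, the two measures must agree on $G$. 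The base case, a graph of isolated nodes, is settled by \emph{Baseline}. Because the five invariance axioms are homogeneous, they can only pin $F$ down up to a positive scalar; \emph{Baseline}, through the isolated-node values, fixes that scalar to $1$.

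The heart of the argument---and the step I expect to be the main obstacle---is to exhibit, for every graph that is not already a union of isolated nodes, an admissible sequence of moves that reduces it. Two difficulties stand out. First, \emph{Edge Deletion} only controls the centralities of non-successors of an edge's start, so it cannot on its own dismantle a graph containing directed cycles, where there is no topological order along which to peel nodes ``downstream first.'' Second, the two strongest reductions have centrality-dependent preconditions---\emph{Edge Swap} needs two nodes of equal centrality and out-degree, and \emph{Node Redirect} needs a genuine out-twin---yet the centralities are precisely the unknowns, so naive use is circular. My plan to break this is to first normalise all out-degrees of non-sinks to their least common multiple using \emph{Edge Multiplication} (which changes nothing), making the out-degree precondition of \emph{Edge Swap} automatic; and then to manufacture the required centrality equalities rather than compute them, by adjoining auxiliary out-twin copies (justified by reading \emph{Node Redirect} as an equation in both directions) that are sources whose own centralities equal their weight by \emph{Edge Deletion} together with \emph{Baseline}, and whose mutual equalities follow from symmetry. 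With these in hand, \emph{Edge Swap} and \emph{Node Redirect} can be used to merge nodes and collapse cycles. Proving that such a reduction always exists and that it reconstructs exactly the PageRank recursion with the single constant $a$ is the crux; I would expect the busy random surfer interpretation of Section~\ref{section:proof-2}, which reads $F_v$ as an expected number of visits, to be the bookkeeping device certifying that the reduction computes $\sum_{k\ge 0} a^k M^k \bs$.
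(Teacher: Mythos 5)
Your soundness half is fine and essentially equivalent to the paper's: your recursion-substitution checks for Node Deletion, Edge Multiplication, Edge Swap and Node Redirect, and your Neumann-series walk argument for Edge Deletion, are the same computations the paper packages via the busy-random-surfer expansion (Theorem~\ref{theorem:rpm}). The genuine gap is in the uniqueness half, and it is exactly where you yourself say the crux lies. First, your induction is not well-founded: every substantive move in your own plan \emph{enlarges} the graph---adjoining auxiliary out-twin sources, splitting nodes by reading Node Redirect backwards, padding out-degrees to a common multiple via Edge Multiplication---so ``number of edges, node count breaking ties'' does not decrease along the reduction. The paper instead runs two different inductions: on the number of predecessors for acyclic graphs (Lemma~\ref{lemma:no-cycles}) and on the number of \emph{cycles} in general (Lemma~\ref{lemma:cycles}). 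Second, and decisively, your device for breaking the circularity of Edge Swap's equal-centrality precondition only yields equalities \emph{among the freshly adjoined sources}; to dismantle a cycle you need $F$-equality between an auxiliary node and a node $w$ sitting \emph{on the cycle}, which is precisely the unknown. The paper's missing idea is: compute the number $x_w = PR^{a_F}_w(G,\bs)$ in advance, adjoin a source $s$ of weight $x_w$ and a sink $t$, reroute $w$'s outgoing edges to $t$ while giving $s$ copies of them (strictly fewer cycles), verify by direct substitution into equation~\eqref{eq:pr:main} that all PageRank values are preserved and $PR^{a_F}_s = PR^{a_F}_w$ in the modified graph, conclude $F_s = F_w$ from the cycle induction, and only then Edge-Swap the edges back. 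Nothing in your proposal produces this equality, and without it cyclic graphs are out of reach, as you correctly suspected.

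Two further holes. You set $a := F_v \ge 0$ in the two-node graph but never show $a < 1$; without that, $PR^a$ is not even well defined, since for $a \ge 1$ the recursion~\eqref{eq:pr:main} can fail to have a unique solution on graphs with cycles. The paper needs a dedicated and nontrivial argument here (Lemma~\ref{lemma:1-arrow-af}), comparing the 1-arrow graph against the self-loop graph $((\{v\},\{(v,v)\}),[1])$ via Edge Swap and Node Redirect to force the strict inequality. Finally, your appeal to ``homogeneity'' of the invariance axioms is not a proof that the centrality of a source is \emph{linear} in its weight: the axioms give only additivity, via Node Redirect applied to isolated nodes and to merged 1-arrow sinks, and linearity then follows only because additivity combines with non-negativity through the Cauchy functional equation (Lemmas~\ref{lemma:source-node} and~\ref{lemma:1-arrow-df}). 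Without this step there is no single constant $c_F$ (nor a single $a$) to speak of, so the scalar that Baseline is supposed to fix has not been shown to exist.
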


The next section is devoted to the proof of Theorem~\ref{theorem:main}.

Let us discuss the interpretation of the axioms with respect to the World Wide Web.
The first two axioms---Node Deletion and Edge Deletion---identify elements (pages and links) which are irrelevant for the importance of a page in question.
Node Deletion considers a page with no links or backlinks (e.g., a resource hidden on the server).
The axiom states that such a page does not have any impact on the rest of the network and its removal does not affect the importance of all the remaining pages.

For the Edge Deletion, imagine that there is a page A from which it is not possible to reach page B through a sequence of links (the studies show that such pairs of pages are very common~\cite{Broder:etal:2000}).
For example, imagine that page A has only links to its subpages that do not have external links themselves.
The axiom states that the links on A do not have an impact on the importance of B.
Hence, if we remove one of them, it will not affect the importance of B.
In particular, if A cannot be reached from A, i.e., if it is not possible to enter page A again after leaving through one of its links, then links of A does not affect also its own importance.
We note that Edge Deletion combined with Node Deletion implies that the importance of a page depends solely on the part of the World Wide Web from which this page can be reached.

Our next axiom, Edge Multiplication, considers multiplying the whole content of the page several times. 
This operation naturally increases the number of backlinks for many pages.
However, the axiom states that the importance of these pages, as well as all other pages in the network, do not change.
This means that the absolute number of links on a page does not matter as long as the proportion of links to other pages remains the same.
Looking from a different perspective, Edge Multiplication can be interpreted as robustness to manipulations by creating a large number of backlinks. 
Regardless of the number of links, the impact of a page is fixed to some extent.
As there is no cost of creating a link on the World Wide Web, avoiding such a manipulation lay at the foundation of PageRank.

For the next axiom, Edge Swap, consider a case where there are two equally important pages with an equal number of links.
The axiom states that the links from these pages have equal impact. It does not matter for the importance of any page from which of these two pages it has a backlink.
Hence, ends of edges can be swapped without affecting the importance of pages they link to and any other pages in the network.

For Node Redirect, imagine that there are two copies of the same page, i.e., two pages with identical content and links. 
Their backlinks, however, can differ.
Node Redirect states that URL redirecting, i.e., removing one of the copies and redirecting its incoming traffic to the other one, does not change the importance of other pages.
Moreover, the total importance of both pages will also remain intact.
At a high level, this axiom concerns a simple manipulation technique through creating several copies of the same page: the axiom states that merging them into one page does not change importance of any other page in the network.

Edge Multiplication and Node Redirect identify two manipulation techniques that do not affect PageRank of a page.
However, we note that PageRank is not resilient to other types of manipulations.
In particular, by modifying links a page may increase its PageRank~\cite{Was:etal:2019:rwd}.

Our last axiom, Baseline, considers a page without any links nor backlinks.
Such a page does not profit from the network structure, as it is not connected to any other page.
Hence, the axiom states that its centrality is equal to its basic importance.

\begin{figure}[t]
\centering
\begin{tikzpicture}
  \def\s{0.5cm} 
  \def\x{0cm} 
  \def\y{0cm} 
  \def\arrdist{0.6cm}

  \tikzset{
    node_blank/.style={circle,draw,minimum size=0.5cm,inner sep=0, color=white}, 
    node/.style={circle,draw,minimum size=0.5cm,inner sep=0, fill = black!05}, 
    edge/.style={sloped,-latex,above,font=\footnotesize}, 
    el/.style={below,font=\footnotesize}, 
    operation/.style={sloped,>=stealth,above,font=\footnotesize},
    arrow/.style={draw, single arrow, minimum width = 0.9cm, minimum height=\x-6*\s+\s, fill=black!10},
    blank/.style={}
  } 
  
  \def\x{0cm} 
  \def\y{0cm} 
  
  \node[node, minimum size=0.63cm, black!15] (A_0) at (\x+2*\s, 4*\s + \y) {$v_1$}; 
  \node[node] (A_1) at (\x+2*\s, 4*\s + \y) {$v_1$}; 
  \node[node] (A_2) at (\x+4*\s, 3*\s + \y) {$v_2$}; 
  \node[node] (A_3) at (\x+4*\s, 1*\s + \y) {$v_3$}; 
  \node[node] (A_4) at (\x+2*\s, 0*\s + \y) {$v_4$};
  \node[node] (A_5) at (\x+0*\s, 1*\s + \y) {$v_5$}; 
  \node[node] (A_6) at (\x+0*\s, 3*\s + \y) {$v_6$}; 
  \node[node, minimum size=0.63cm, black!15] (A_0) at (\x+2*\s, 2*\s + \y) {$v_7$}; 
  \node[node] (A_7) at (\x+2*\s, 2*\s + \y) {$v_7$}; 
  \node[blank] (A_) at (\x+4*\s-0.02cm, -0.2cm + \y) {$G_f$};
  \node[node_blank] (NR_start) at (\x+4*\s + \arrdist, 2*\s + \y) {};

  \path[->,draw,thick]
  (A_1) edge[edge, bend left=20, looseness = 0.9]  (A_2)
  (A_4) edge[edge, bend right=20, looseness = 0.9]  (A_3)
  (A_5) edge[edge, bend right=25]  (A_2)
  (A_5) edge[edge, bend right=20, looseness = 0.9]  (A_4)
  (A_5) edge[edge, bend right=15]  (A_6)
  (A_5) edge[edge] (A_7)
  (A_6) edge[edge, bend left=20, looseness = 0.9]  (A_1)
  (A_6) edge[edge]  (A_2)
  (A_6) edge[edge, bend right=15]  (A_5)
  (A_6) edge[edge]  (A_7)
  (A_7) edge[edge]  (A_2)
  ;
  
  \def\x{6cm} 
  
  \node[node] (A_1) at (\x+2*\s, 4*\s + \y) {$v_1$}; 
  \node[node] (A_2) at (\x+4*\s, 3*\s + \y) {$v_2$}; 
  \node[node] (A_3) at (\x+4*\s, 1*\s + \y) {$v_3$}; 
  \node[node] (A_4) at (\x+2*\s, 0*\s + \y) {$v_4$};
  \node[node] (A_5) at (\x+0*\s, 1*\s + \y) {$v_5$};
  \node[node] (A_6) at (\x+0*\s, 3*\s + \y) {$v_6$};
  \node[blank] (A_) at (\x+4*\s-0.02cm, -0.2cm + \y) {$G_e$};
  \node[node_blank] (NR_end) at (\x - \arrdist, 2*\s + \y) {};
  \node[node_blank] (ES_start) at (\x+4*\s + \arrdist, 2*\s + \y) {};

  \path[->,draw,thick]
  (A_1) edge[edge, bend left=20, looseness = 0.9]  (A_2)
  (A_4) edge[edge, bend right=20, looseness = 0.9]  (A_3)
  (A_5) edge[edge, bend right=10]  (A_1)
  (A_5) edge[edge, bend right=25, preaction={draw,black!15,-,double=black!15,double distance=2\pgflinewidth}]  (A_2)
  (A_5) edge[edge, bend right=20, looseness = 0.9]  (A_4)
  (A_5) edge[edge, bend right=15]  (A_6)
  (A_6) edge[edge] (A_1)
  (A_6) edge[edge, bend left=20, looseness = 0.9]  (A_1)
  (A_6) edge[edge]  (A_2)
  (A_6) edge[edge, bend right=15, preaction={draw,black!15,-,double=black!15,double distance=2\pgflinewidth}]  (A_5)
  ;
  
  \def\x{12cm} 
 
  \node[node_blank] (ES_end) at (\x - \arrdist, 2*\s + \y) {};
  \node[node_blank] (EM_start) at (\x+2*\s, 0.1cm + 4*\s + \y) {}; 
  \node[node] (A_1) at (\x+2*\s, 4*\s + \y) {$v_1$};
  \node[node] (A_2) at (\x+4*\s, 3*\s + \y) {$v_2$}; 
  \node[node] (A_3) at (\x+4*\s, 1*\s + \y) {$v_3$}; 
  \node[node] (A_4) at (\x+2*\s, 0*\s + \y) {$v_4$};
  \node[node] (A_5) at (\x+0*\s, 1*\s + \y) {$v_5$}; 
  \node[node, minimum size=0.63cm, black!15] (A_0) at (\x+0*\s, 3*\s + \y) {$v_6$}; 
  \node[node] (A_6) at (\x+0*\s, 3*\s + \y) {$v_6$};
  \node[blank] (A_) at (\x+4*\s-0.02cm, -0.2cm + \y) {$G_d$};

  \path[->,draw,thick]
  (A_1) edge[edge, bend left=20, looseness = 0.9]  (A_2)
  (A_4) edge[edge, bend right=20, looseness = 0.9]  (A_3)
  (A_5) edge[edge, bend right=10]  (A_1)
  (A_5) edge[edge, bend right=20, looseness = 0.9]  (A_4)
  (A_5) edge[edge, looseness=5, out = 200, in = 120] (A_5)
  (A_5) edge[edge]  (A_6)
  (A_6) edge[edge] (A_1)
  (A_6) edge[edge, bend left=20, looseness = 0.9]  (A_1)
  (A_6) edge[edge]  (A_2)
  (A_6) edge[edge, bend right=25]  (A_2)
  ;

  \def\y{3.7cm} 
  \def\x{12cm} 

  \node[node_blank] (ED_start) at (\x - \arrdist, 2*\s + \y) {};
  \node[node_blank] (EM_end) at (\x+2*\s, -0.1cm + \y) {};  
  \node[node] (A_1) at (\x+2*\s, 4*\s + \y) {$v_1$};
  \node[node] (A_2) at (\x+4*\s, 3*\s + \y) {$v_2$}; 
  \node[node] (A_3) at (\x+4*\s, 1*\s + \y) {$v_3$}; 
  \node[node] (A_4) at (\x+2*\s, 0*\s + \y) {$v_4$};
  \node[node] (A_5) at (\x+0*\s, 1*\s + \y) {$v_5$}; 
  \node[node] (A_6) at (\x+0*\s, 3*\s + \y) {$v_6$};
  \node[blank] (A_) at (\x+4*\s-0.02cm, -0.2cm + \y) {$G_c$};

  \path[->,draw,thick]
  (A_1) edge[edge, bend left=20, looseness = 0.9]  (A_2)
  (A_4) edge[edge, bend right=20, looseness = 0.9, preaction={draw,black!15,-,double=black!15,double distance=2\pgflinewidth}]  (A_3)
  (A_5) edge[edge, bend right=10]  (A_1)
  (A_5) edge[edge, bend right=20, looseness = 0.9]  (A_4)
  (A_5) edge[edge, looseness=5, out = 200, in = 120] (A_5)
  (A_5) edge[edge]  (A_6)
  (A_6) edge[edge, bend left=20, looseness = 0.9]  (A_1)
  (A_6) edge[edge]  (A_2)
  ;

  \def\x{6cm} 

  \node[node] (A_1) at (\x+2*\s, 4*\s + \y) {$v_1$};
  \node[node] (A_2) at (\x+4*\s, 3*\s + \y) {$v_2$}; 
  \node[node, minimum size=0.63cm, black!15] (A_0) at (\x+4*\s, 1*\s + \y) {$v_3$}; 
  \node[node] (A_3) at (\x+4*\s, 1*\s + \y) {$v_3$}; 
  \node[node] (A_4) at (\x+2*\s, 0*\s + \y) {$v_4$};
  \node[node] (A_5) at (\x+0*\s, 1*\s + \y) {$v_5$}; 
  \node[node] (A_6) at (\x+0*\s, 3*\s + \y) {$v_6$};
  \node[blank] (A_) at (\x+4*\s-0.02cm, -0.2cm + \y) {$G_b$};
  \node[node_blank] (ND_start) at (\x - \arrdist, 2*\s + \y) {};
  \node[node_blank] (ED_end) at (\x+4*\s + \arrdist, 2*\s + \y) {};

  \path[->,draw,thick]
  (A_1) edge[edge, bend left=20, looseness = 0.9]  (A_2)
  (A_5) edge[edge, bend right=10]  (A_1)
  (A_5) edge[edge, bend right=20, looseness = 0.9]  (A_4)
  (A_5) edge[edge, looseness=5, out = 200, in = 120] (A_5)
  (A_5) edge[edge]  (A_6)
  (A_6) edge[edge, bend left=20, looseness = 0.9]  (A_1)
  (A_6) edge[edge]  (A_2)
  ;
  
  \def\x{0cm} 

  \node[node] (A_1) at (\x+2*\s, 4*\s + \y) {$v_1$};
  \node[node] (A_2) at (\x+4*\s, 3*\s + \y) {$v_2$}; 
  \node[node] (A_4) at (\x+2*\s, 0*\s + \y) {$v_4$};
  \node[node] (A_5) at (\x+0*\s, 1*\s + \y) {$v_5$}; 
  \node[node] (A_6) at (\x+0*\s, 3*\s + \y) {$v_6$};
  \node[blank] (A_) at (\x+4*\s-0.02cm, -0.2cm + \y) {$G_a$};
  \node[node_blank] (ND_end) at (\x+4*\s + \arrdist, 2*\s + \y) {};

  \path[->,draw,thick]
  (A_1) edge[edge, bend left=20, looseness = 0.9]  (A_2)
  (A_5) edge[edge, bend right=10]  (A_1)
  (A_5) edge[edge, bend right=20, looseness = 0.9]  (A_4)
  (A_5) edge[edge, looseness=5, out = 200, in = 120] (A_5)
  (A_5) edge[edge]  (A_6)
  (A_6) edge[edge, bend left=20, looseness = 0.9]  (A_1)
  (A_6) edge[edge]  (A_2)
  ;

  \path[->,draw,very thick]
  (ND_start) edge[operation] node[el] {Node Deletion} (ND_end)
  (ED_start) edge[operation] node[el] {Edge Deletion} (ED_end)
  (EM_start) edge[operation] node[el, label={[align=center]90:Edge\\Multiplication}] {} (EM_end)
  (ES_start) edge[operation] node[el] {Edge Swap} (ES_end)
  (NR_start) edge[operation] node[el] {Node Redirect} (NR_end)
  ;
  
\end{tikzpicture}
\caption{An illustration of invariance axioms. In all graphs $\bs(v_i)=1$ for $i \in \{1,\dots,6\}$ and $\bs(v_7) = 0$.}
\label{figure:axioms}
\end{figure}
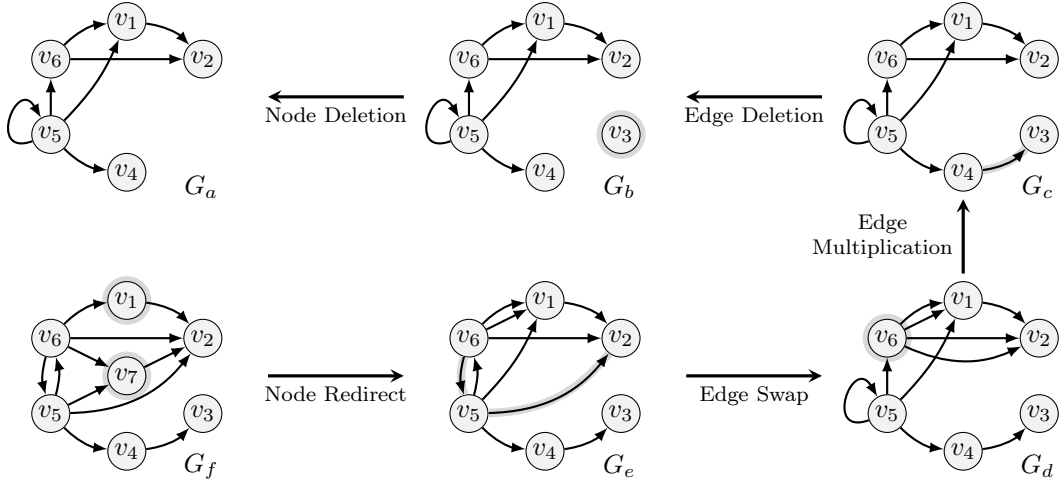

\begin{example}
As an illustration of our axioms consider graphs in Fig.~\ref{figure:axioms}.
Assume that centrality measure $F$ satisfies our five invariance axioms. 
Then, the centrality of node $v_1$ is not affected by the operations that transform graph $G_e$ into $G_a$ and equals the sum of centralities of $v_1$ and $v_7$ in graph $G_f$, i.e., $F_{v_1}(G_a,\bs) = F_{v_1}(G_f,\bs) + F_{v_7}(G_f,\bs)$.
More in detail:
\begin{itemize}
\item $F_{v_1}(G_a,\bs) = F_{v_1}(G_b,\bs)$ from \emph{Node Deletion}.
Node $v_3$ is isolated in graph $G_b$, hence its deletion does not affect the centralities of the remaining nodes.
\item $F_{v_1}(G_b,\bs) = F_{v_1}(G_c,\bs)$ from \emph{Edge Deletion}.
The only successor of node $v_4$ is node $v_3$,
thus deleting edge $(v_4,v_3)$ does not affect centralities of nodes other than $v_3$.
\item $F_{v_1}(G_c,\bs) = F_{v_1}(G_d,\bs)$ from \emph{Edge Multiplication},
since $G_d$ is obtained from $G_c$ by doubling the edges of node $v_6$.
\item $F_{v_1}(G_d,\bs) = F_{v_1}(G_e,\bs)$ from \emph{Edge Swap}.
Nodes $v_5$ and $v_6$ have both 4 outgoing edges.
If they have the same centrality (note that they have incoming edges only from themselves), then exchanging edges $(v_5,v_2), (v_6,v_5)$ for edges $(v_6,v_2), (v_5,v_5)$ does not affect the centrality of any node.
\item $F_{v_1}(G_e,\bs) = F_{v_1}(G_f,\bs) + F_{v_7}(G_f,\bs)$ from \emph{Node Redirect}.
Nodes $v_1$ and $v_7$ both have only one edge to node $v_2$, hence they are out-twins.
Therefore, redirecting $v_7$ into $v_1$ does not affect the centrality of other nodes and the centrality of node $v_1$ becomes the sum of the centralities of $v_1$ and $v_7$.
\end{itemize}
\end{example}


\section{Proof of the main theorem}\label{section:uniqueness}
In this section, we present the proof of Theorem~\ref{theorem:main}.
Here, we focus on showing that if a centrality measure satisfies all six of our axioms, then it must also satisfy PageRank recursive equation. 
Hence, it is PageRank for some decay factor $a \in [0,1)$.
The proof that PageRank for every decay factor $a \in [0,1)$ satisfies all six axioms can be found in~\ref{section:appendix:proof-2}.
Both parts combined imply Theorem~\ref{theorem:main}.
Additionally, in Section~\ref{section:proof-3}, we show that all six axioms are independent and necessary in our characterization.

Before we proceed, let us introduce some additional notation and definitions. 
Two graphs $G=(V,E)$, $G'=(V',E')$ are called \emph{disjoint} if their set of nodes are disjoint: $V \cap V' = \emptyset$.
In such a case, their sum $G+G'$ is a graph $(V \cup V',E \sqcup E')$.
For arbitrary node weights $\bs$, $\bs'$, the sum $(G,\bs) + (G',\bs')$ is a graph $(G+G',\bs+\bs')$, where $(\bs+\bs')(v) = \bs(v)$ for $v \in V$ and $(\bs+\bs')(v) = \bs'(v)$ for $v \in V'$.
For a natural number $k \in \mathbb{N}$ and an arbitrary multiset $E$, $k \cdot E$ denotes a union of $k$ copies of $E$, i.e., $0 \cdot E = \emptyset$ and $k \cdot E = ((k-1) \cdot E) \sqcup E$ for $k \ge 1$.

To denote small graphs, we use the following notation: 
\[ (G,\bs) = \big( ( \{v_1,\dots,v_n\},\lBrace e_1,\dots,e_m \rBrace ), [b_1,\dots,b_n] \big) \]
which means $G=( \{v_1,\dots,v_n\},\lBrace e_1,\dots,e_m \rBrace )$ and $\bs(v_i)=b_i$ for $i \in \{1,\dots,n\}$.
We will use a class of simple graphs called \emph{$k$-out stars}.
A $k$-out star is a graph that consists of $k+1$ nodes---one source and $k$ sinks---and $k$ edges connecting the source to all sinks in which all sinks have zero weights (we note that the assumption on the node weights is usually not a part of the definition).
Formally, graph $G = (V,E)$ with weights $\bs$ is a $k$-out star if $V = \{u, v_1, \dots, v_k\}$, $E = \lBrace (u,v_1), \dots, (u,v_k) \rBrace$ and $\bs(v_i) = 0$ for every $i \in \{1,\dots,k\}$.
In particular, every 1-out star is of the form $((\{u,v\}, \lBrace u,v \rBrace), [x,0])$, for some nodes $u,v$ and $x \in \mathbb{R}_{\ge 0}$.

For two nodes $u,v$, we denote the number of times edge $(u,v)$ appears in $G$ by $\#_{(u,v)}(G)$.


Now, let us move to the proof that if a centrality measure $F$ satisfies first five axioms, i.e., Node Deletion, Edge Deletion, Edge Multiplication, Edge Swap, and Node Redirect, then it is equal to PageRank for some decay factor up to a scalar multiplication.
Formally, we show that there exist two constants $c_F \in \mathbb{R}_{\ge 0}$ and $a_F \in [0,1)$, such that for every graph $(G,\bs)$ it holds that $F(G,\bs) = c_F \cdot PR^{a_F}(G,\bs)$.
Our proof has the following structure:
\begin{itemize}
\item First, we show that the centrality measure $F$ satisfies two basic properties: \emph{Locality}---the centrality of a node depends only on the part of the graph connected to it (Lemma~\ref{lemma:locality}) and \emph{Source Node}---the centrality of a source is equal to its weight multiplied by some non-negative constant (Lemma~\ref{lemma:source-node}). 
Moreover, this constant is the same for every source in every graph: it will be our constant $c_F$.
\item Then, we show that the centrality of a sink in 1-out star graph is equal to the centrality of a source multiplied by some non-negative constant (Lemma~\ref{lemma:1-arrow-df}). 
Moreover, this constant is the same for every 1-out star and lies in the interval $[0,1)$ (Lemma~\ref{lemma:1-arrow-af}): it will be our constant $a_F$.
\item Having defined $a_F$ and $c_F$, we turn our attention to proving that in every graph the centrality of any node is equal to PageRank with decay factor $a_F$ multiplied by $c_F$. 
We do it by considering increasingly complex graphs. 
Specifically, we start with 1-out stars (Lemma~\ref{lemma:1-arrow}) and then $k$-out stars (Lemma~\ref{lemma:k-arrow}). 
Furthermore, we consider arbitrary graphs with no cycles (Lemma~\ref{lemma:no-cycles}) and ultimately arbitrary graphs with possible cycles (Lemma~\ref{lemma:cycles}).
\end{itemize}
Finally, if $F$ additionally satisfies Baseline, then $c_F=1$ and $F(G,\bs) = PR^{a_F}(G,\bs)$ for every graph $(G,\bs)$ (Lemma~\ref{lemma:baseline}).


First, let us focus on a basic property of \emph{Locality}, which is implied by our first two axioms: Node Deletion and Edge Deletion.
Locality states that if a graph consists of several disjoint parts (also called \emph{connected components}), then the centrality of a node can be calculated by looking only at the part it is in~\cite{Skibski:etal:2019:attachment}.

\begin{lemma}\label{lemma:locality} (Locality)
If a centrality measure $F$ satisfies Node Deletion and Edge Deletion, then for every two disjoint graphs $G=(V,E)$ and $G'=(V',E')$, node weights $\bs$ and $\bs'$ and node $v \in V$ it holds that
$F_v\big((G,\bs)+(G',\bs')\big) = F_v(G,\bs)$.
\end{lemma}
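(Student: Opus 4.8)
The plan is to remove the component $G'$ from the combined graph $(G,\bs)+(G',\bs')$ by peeling off $G'$ edge by edge and then node by node, using Edge Deletion and Node Deletion respectively, in such a way that the centrality of the fixed node $v \in V$ is preserved at every step. Since $G$ and $G'$ are disjoint, no node of $G'$ is a successor of any node of $V$, and no node of $V$ is a successor of any node of $V'$; this is exactly the structural fact that makes both axioms applicable.

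First I would deal with the edges of $G'$. Consider the combined graph and pick any edge $(u,w) \in E'$. Its start $u$ lies in $V'$, and because the two graphs are disjoint there is no path from $u$ into $V$; hence $v \notin S_u$ in the combined graph. Edge Deletion therefore gives that removing $(u,w)$ does not change $F_v$. Iterating this over all edges of the multiset $E'$ (in any order — at each stage the remaining graph still has $v$ outside the successor set of the start of the edge being removed, since deleting edges only shrinks successor sets), I obtain
\begin{equation*}
F_v\big((G,\bs)+(G',\bs')\big) = F_v\big((V \cup V', E),\, \bs+\bs'\big),
\end{equation*}
where all edges incident to $V'$ have been removed, so every node of $V'$ is now isolated.

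Next I would remove the nodes of $V'$ one at a time. After the edge-deletion phase, each $u \in V'$ is isolated (it has neither incoming nor outgoing edges in the current graph), so Node Deletion applies and $F_v$ is unaffected by deleting $u$. Iterating over all of $V'$ leaves precisely the graph $(V,E)$ with weight function $\bs+\bs'$ restricted to $V$, which equals $\bs$ on $V$ by the definition of the sum of weighted graphs. Hence $F_v\big((G,\bs)+(G',\bs')\big) = F_v(G,\bs)$, as required.

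I do not expect a serious obstacle here; the only point needing care is the bookkeeping that justifies applying each axiom at the correct intermediate stage. In particular, one must check that $v$ remains outside $S_u$ throughout the edge-removal phase (which holds because successor sets can only shrink as edges are deleted) and that each node of $V'$ is genuinely isolated before Node Deletion is invoked (which holds because all its incident edges were removed in the first phase). Formally these iterations are finite inductions on $|E'|$ and $|V'|$, but the argument is routine once the disjointness is exploited.
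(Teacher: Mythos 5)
Your proof is correct and takes essentially the same route as the paper: delete every edge of $E'$ using Edge Deletion (disjointness guaranteeing $v \notin S_u$ for each start $u \in V'$), which leaves all nodes of $V'$ isolated, and then remove them one by one with Node Deletion. Your explicit observation that successor sets only shrink as edges are deleted, so the hypothesis of Edge Deletion persists through the iteration, is a point the paper leaves implicit but changes nothing in substance.
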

\begin{proof}
Fix $v \in V$ and consider an arbitrary edge $(u,w) \in E'$.
Since graphs $G$ and $G'$ are disjoint, there is no path from $u$ to $v$.
In particular, $u$ is not a successor of $v$. 
Hence, from Edge Deletion, if we remove edge $(u,w)$, the centrality of node $v$ will remain unchanged.
Using this argument for all the edges from $E'$ we get that removing them does not affect the centrality of node $v$:
\begin{equation}
\label{eq:locality:1}
	F_v((G,\bs) + (G',\bs')) = F_v((V \cup V', E \sqcup E'),\bs+\bs') = F_v((V \cup V', E), \bs+\bs').
\end{equation}
Now, in graph $(V \cup V', E)$ all nodes from $V'$ are isolated.
Hence, from Node Deletion, removing these nodes does not affect the centrality of node $v$ as well:
\begin{equation}
\label{eq:locality:2}
	F_v((V \cup V', E), \bs+\bs') = F_v((V,E), \bs) = F_v(G,\bs).
\end{equation}
Combining equations~\eqref{eq:locality:1} and~\eqref{eq:locality:2} yields the thesis.
\end{proof}


In the second lemma, we prove that if a centrality measure satisfies Node Deletion, Edge Deletion and Node Redirect, then it also satisfies the property of \emph{Source Node}: the centrality of a source, i.e., a node without incoming edges, is proportional to its weight.
This property is similar to Baseline, but there are two differences:
First, Baseline applies only to isolated nodes and Source Node apply to all sources.
Second, Baseline implies that the centrality of a node is equal, not proportional, to its weight.

\begin{lemma}\label{lemma:source-node} (Source Node)
If a centrality measure $F$ satisfies Node Deletion, Edge Deletion and Node Redirect, then there exists a constant $c_F \in \mathbb{R}_{\ge 0}$ such that for every graph $G=(V,E)$, weights $\bs$ and every source $v \in V$ it holds:
$F_v(G,\bs) = c_F \cdot \bs(v)$.
Specifically, $c_F = F_{w}((\{w\}, \emptyset), [1])$ for an arbitrary node $w$.
\end{lemma}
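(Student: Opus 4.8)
The plan is to reduce the centrality of an arbitrary source to the centrality of a single isolated node carrying the source's weight, and then to pin down the latter using Node Redirect. First I would observe that a source $v$ of $G=(V,E)$ is not a successor of any node: any path ending at $v$ would supply $v$ with an incoming edge, contradicting that $v$ is a source; in particular $v \notin S_v(G)$. Hence Edge Deletion applies to each outgoing edge $(v,w) \in \Gamma^+_v(G)$ and leaves $F_v$ unchanged, and since deleting an outgoing edge keeps $v$ a source, the argument iterates over all of $\Gamma^+_v(G)$. Once these edges are removed, $v$ is isolated, so Locality (Lemma~\ref{lemma:locality}) discards the remaining component and gives $F_v(G,\bs) = F_v\big((\{v\},\emptyset),[\bs(v)]\big)$. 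Writing $g_w(x) := F_w\big((\{w\},\emptyset),[x]\big)$, it then suffices to show $g_w(x) = c_F \cdot x$.

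The heart of the argument is that Node Redirect turns $g$ into an additive function. The key observation is that any two isolated nodes are out-twins, since both have empty multisets of outgoing edges. So for fresh labels $u,w$ I would take $H = \big((\{u,w\},\emptyset),[x,y]\big)$ and redirect $u$ into $w$; as there are no incident edges to remove or rewire, $R_{u \rightarrow w}(H) = \big((\{w\},\emptyset),[x+y]\big)$. Node Redirect together with Locality then yields $g_u(x) + g_w(y) = g_w(x+y)$. Taking $x=y=0$ (with either node redirected into the other) shows $g_\ell(0)=0$ for every label $\ell$, and then setting $y=0$ gives $g_u(x) = g_w(x)$ for all labels, so I may write $g := g_w$ unambiguously. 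The general case now reads $g(x)+g(y) = g(x+y)$, i.e.\ Cauchy's functional equation on $\mathbb{R}_{\ge 0}$. A pleasant feature is that Node Redirect supplies this label-independence for free, so no separate anonymity assumption is required.

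Finally, non-negativity of the centrality measure forces $g \ge 0$, and an additive non-negative function is non-decreasing, since $g(x+y) = g(x)+g(y) \ge g(x)$ whenever $y \ge 0$. Monotone solutions of Cauchy's equation are linear, so $g(x) = c_F \cdot x$ with $c_F := g(1) = F_w\big((\{w\},\emptyset),[1]\big) \ge 0$, exactly the claimed constant. Combining this with the reduction of the first step yields $F_v(G,\bs) = g(\bs(v)) = c_F \cdot \bs(v)$ for every source $v$, as desired. I expect the only genuinely delicate point to be recognizing that redirecting between isolated out-twins encodes additivity of weights; the Edge Deletion reduction to an isolated node and the standard passage from additivity plus monotonicity to linearity are routine.
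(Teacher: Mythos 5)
Your proof is correct and takes essentially the same route as the paper's: reduce a source to the single-node graph via Edge Deletion and Locality, use Node Redirect on isolated out-twins to derive additivity and label-independence of $x \mapsto F_w\big((\{w\},\emptyset),[x]\big)$, and conclude linearity from non-negativity via Cauchy's functional equation. The only differences are cosmetic: you perform the reduction to the isolated node first rather than last, and you spell out the monotonicity step that the paper handles by citing Cauchy directly.
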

\begin{proof}
We begin by considering graphs with one node and zero edges, i.e., graphs of the following form: $((\{v\}, \emptyset), [x])$ for some node $v$ and $x \in \mathbb{R}_{\ge 0}$. 
We will later show the relation between such graphs and sources in arbitrary graphs.

Consider two graphs $((\{u\},\emptyset), [x])$ and $((\{v\},\emptyset), [y])$ for arbitrary $u \neq v$ and $x,y \in \mathbb{R}_{\ge 0}$.
Let $(G,\bs)$ be their sum: $(G,\bs) = ((\{u,v\}, \emptyset), [x,y])$.
Since both $u$ and $v$ are isolated in $(G,\bs)$, from Node Deletion we know that their centralities are the same as in the original graphs. In particular:
\begin{equation} \label{eq:source-node:1}
F_u(G,\bs) + F_v(G,\bs) = F_u((\{u\},\emptyset), [x]) + F_v((\{v\},\emptyset), [y]).
\end{equation}
Nodes $u$ and $v$ are out-twins in $(G,\bs)$ (both have the same empty set of outgoing edges), so from Node Redirect, redirecting node $v$ into $u$ increases the centrality of $u$ by the centrality of $v$.
Such a redirecting results in graph $((\{u\}, \emptyset), [x+y])$, so we get:
\begin{equation} \label{eq:source-node:2}
F_u((\{u\}, \emptyset), [x+y]) = F_u(R_{v \rightarrow u}(G,\bs)) = F_u(G,\bs) + F_v(G,\bs).
\end{equation}
Combining equations \eqref{eq:source-node:1} and \eqref{eq:source-node:2} we have:
\begin{equation}\label{eq:source-node:3}
F_u((\{u\}, \emptyset), [x+y]) = F_u((\{u\}, \emptyset), [x]) + F_v((\{v\}, \emptyset), [y]).
\end{equation}
We make the following observations:
\begin{enumerate}
\item[(a)] $F_v((\{v\}, \emptyset), [0]) = 0$ for every $v$ (from equation~\eqref{eq:source-node:3} with $y=0$);
\item[(b)] $F_v((\{v\}, \emptyset), [y]) = F_u((\{u\}, \emptyset), [y])$ for every $u \neq v$ and $y \in \mathbb{R}_{\ge 0}$ (from equation~\eqref{eq:source-node:3} with $x=0$ and (a));
\item[(c)] $F_v((\{v\}, \emptyset), [x+y]) = F_v((\{v\}, \emptyset), [x]) + F_v((\{v\}, \emptyset), [y])$ for every $v$ and $x,y \in \mathbb{R}_{\ge 0}$ (from equation~\eqref{eq:source-node:3} and (b)).
\end{enumerate}
Note that (b) implies that the centrality of $v$ in the graph with node weights $((\{v\}, \emptyset), [x])$ depends solely on weight $x$. 
In other words, there exists a function $f: \mathbb{R}_{\ge 0} \rightarrow \mathbb{R}$ such that $F_v((\{v\}, \emptyset), [x]) = f(x)$.
Since centralities are non-negative, we know that $f$ is also non-negative: $f(x) \ge 0$ for every $x \in \mathbb{R}_{\ge 0}$.
On the other hand, from (c) we know that $f$ is additive: $f(x+y) = f(x)+f(y)$ for every $x,y \in \mathbb{R}_{\ge 0}$.
Non-negativity and additivity combined imply that $f$ is linear~\cite{Cauchy:1821}: $f(x) = c_F \cdot x$ for some $c_F \in \mathbb{R}_{\ge 0}$ for every $x \in \mathbb{R}_{\ge 0}$.
As a result, we know that that there exists $c_F \in \mathbb{R}_{\ge 0}$ such that for every node $v$:
\begin{equation}\label{eq:source-node:4}
F_v((\{v\}, \emptyset), [x]) = c_F \cdot x.
\end{equation}

Now, let $G = (V,E)$ be an arbitrary graph with node weights $\bs$ and $v$ be a source in $G$.
Since $v$ has no incoming edges, we know that it is not its own successor.
Hence, from Edge Deletion, removing its outgoing edges does not affect its centrality: $F_v(G,\bs) = F_v((V, E - \Gamma^+_v(G)), \bs)$.
In the resulting graph, $v$ is isolated, so from Locality (Lemma~\ref{lemma:locality}) we have that $F_v((V, E - \Gamma^+_v(G)), \bs) = F_v((\{v\}, \emptyset), [\bs(v)])$.
This combined with equation~\eqref{eq:source-node:4} yields the thesis.

Finally, from equation~\eqref{eq:source-node:4} for $v = w$ and $x=1$ we get that $c_F = F_{w}((\{w\}, \emptyset), [1])$ which concludes the proof.
\end{proof}

Consider an arbitrary graph $(G,\bs)$ in which node $v$ is a source.
Since the set of incoming edges of $v$ is empty, i.e., $\Gamma_v^-(G) = \emptyset$, from PageRank recursive equation~\eqref{eq:pr:main} we know that $PR_v^{a_F}(G,\bs) = \bs(v)$ for every decay factor $a_F \in [0,1)$.
This implies that---regardless of the decay factor $a_F$---centrality of $v$ is equal to PageRank of $v$  multiplied by $c_F$: $F_v(G, \bs) = c_F \cdot PR_v^{a_F}(G, \bs)$.


Now, let us focus on 1-out star graphs.
Recall that in a 1-out star there is one sink and one source connected by an edge and the sink has zero weight.
In the next lemma, we prove that the centrality of the sink is proportional to the weight of the source.

\begin{lemma}\label{lemma:1-arrow-df}
If a centrality measure $F$ satisfies Node Deletion, Edge Deletion, Edge Multiplication, Edge Swap, and Node Redirect, then there exists a constant $d_F \in \mathbb{R}_{\ge 0}$ such that for every 1-out star graph $(G,\bs) = ((\{u,v\}, \lBrace (u,v)\rBrace), [x,0])$ it holds $F_v(G,\bs) = d_F \cdot x$.
\end{lemma}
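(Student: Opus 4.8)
The plan is to mirror the structure of the Source Node lemma (Lemma~\ref{lemma:source-node}): first show that, for a fixed pair of nodes, the sink centrality is an additive, node-independent function of the source weight, and then invoke Cauchy's functional equation to upgrade additivity to linearity. Write $g_{u,v}(x) := F_v\big((\{u,v\},\lBrace (u,v)\rBrace),[x,0]\big)$ for the sink centrality in the 1-arrow graph on nodes $u,v$; the goal is to prove $g_{u,v}(x)=d_F\cdot x$ with $d_F$ independent of $u,v$.

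First I would take two disjoint 1-arrow graphs $A=((\{u_1,v_1\},\lBrace (u_1,v_1)\rBrace),[x,0])$ and $B=((\{u_2,v_2\},\lBrace (u_2,v_2)\rBrace),[y,0])$ and form their sum $H=A+B$. By Locality (Lemma~\ref{lemma:locality}), $F_{v_1}(H)=g_{u_1,v_1}(x)$ and $F_{v_2}(H)=g_{u_2,v_2}(y)$. The heart of the argument is a carefully chosen pair of redirects. The two sinks $v_1,v_2$ are out-twins (both have empty out-neighbourhoods), so Node Redirect lets me merge $v_2$ into $v_1$, producing the graph $C$ on nodes $\{u_1,u_2,v_1\}$ with edges $\lBrace (u_1,v_1),(u_2,v_1)\rBrace$ and weights $[x,y,0]$, while the redirect equation records $F_{v_1}(C)=F_{v_1}(H)+F_{v_2}(H)$. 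The key point is that in $C$ the two sources $u_1,u_2$ have \emph{become} out-twins, since each now points exactly once to the common sink $v_1$. A second redirect, merging $u_2$ into $u_1$, therefore collapses $C$ into the 1-arrow graph $((\{u_1,v_1\},\lBrace (u_1,v_1)\rBrace),[x+y,0])$ without altering the centrality of $v_1$ (as $v_1\notin\{u_1,u_2\}$). Chaining the two redirect identities yields the master relation
\[
  g_{u_1,v_1}(x+y) = g_{u_1,v_1}(x) + g_{u_2,v_2}(y).
\]

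From this single relation both node-independence and additivity fall out, exactly as observations (a)--(c) do in Lemma~\ref{lemma:source-node}. Setting $y=0$ gives $g_{u_2,v_2}(0)=0$ for every pair; setting $x=0$ then gives $g_{u_1,v_1}(y)=g_{u_2,v_2}(y)$, so the function does not depend on the choice of nodes and may be written simply as $f$. The relation now reads $f(x+y)=f(x)+f(y)$, i.e. $f$ is additive on $\mathbb{R}_{\ge 0}$; since centralities are non-negative so is $f$, and a non-negative additive function is linear \citep{Cauchy:1821}, giving $f(x)=d_F\cdot x$ for some $d_F\in\mathbb{R}_{\ge 0}$, which is the claim.

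The main obstacle is arranging for the two out-twin conditions to hold at the right moments: a priori $u_1,u_2$ are \emph{not} out-twins (they point to different sinks), so the redirect on the sources only becomes admissible after the redirect on the sinks has fused the two targets into one. I would therefore verify carefully, using the explicit formula for $R_{u\rightarrow w}$ in the Node Redirect axiom, that merging the sinks leaves exactly the edges $(u_1,v_1)$ and $(u_2,v_1)$ (so that $u_1,u_2$ genuinely share the single out-neighbour $v_1$), and that merging the sources removes $u_2$ together with its unique outgoing edge and transfers its weight, yielding precisely the intended 1-arrow graph. As in the Source Node lemma, this argument uses only Node Deletion and Edge Deletion (through Locality) together with Node Redirect; Edge Multiplication and Edge Swap, though available by hypothesis, are not needed here.
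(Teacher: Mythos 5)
Your proposal is correct and follows essentially the same route as the paper's own proof: sum two disjoint 1-arrow graphs, use Locality, redirect the out-twin sinks first so that the sources become out-twins and can then be redirected, obtain the additive master relation, and conclude linearity from non-negativity via Cauchy's functional equation. The only detail the paper spells out that you pass over is the extension of node-independence to pairs $\{u,v\}$ and $\{u',v'\}$ that are \emph{not} pairwise disjoint, which it handles by comparing both to a fresh auxiliary pair $u'',v''$ --- a one-line fix entirely in the spirit of your argument.
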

\begin{proof}
Our proof has a similar structure as the proof of Lemma~\ref{lemma:source-node}, but instead of graphs with a single node we consider 1-out stars.

\begin{figure}[t]
\centering
\begin{tikzpicture}
  \def\x{0.6cm} 
  \def\y{0cm} 

  \tikzset{
    node_blank/.style={circle,draw,minimum size=0.5cm,inner sep=0, color=white}, 
    node/.style={circle,draw,minimum size=0.5cm,inner sep=0, fill = black!05}, 
    edge/.style={sloped,-latex,above,font=\footnotesize},
    arrow/.style={draw, single arrow, minimum width = 0.9cm, minimum height=\y-6*\x+\s, fill=black!10},
    blank/.style={}
  }
  
  \node[node] (u) at (\y + 0*\x, 3*\x) {$u$};
  \node[node] (v) at (\y + 0*\x, 0*\x) {$v$};
  
  \node[blank] (G) at (\y + 1.5*\x-0.07cm, -0.7cm) {$(G,\bs)$};

  \path[->,draw,thick]
  (u) edge[edge]  (v)
  ;
  
  \def\y{1.5cm} 
  
  \node[node] (u) at (\y + 0*\x, 3*\x) {$u'$};
  \node[node] (v) at (\y + 0*\x, 0*\x) {$v'$};
  
  \node[blank] (G) at (\y + 0*\x-0.02cm, -0.7cm) {};

  \path[->,draw,thick]
  (u) edge[edge]  (v)
  ;
  
  \def\y{4cm} 
  
  \node[node] (u) at (\y + 0*\x, 3*\x) {$u$};
  \node[node] (v) at (\y + 1*\x, 0*\x) {$v$};
  \node[node] (u') at (\y + 2*\x, 3*\x) {$u'$};
  
  \node[blank] (G) at (\y + 1*\x-0.02cm, -0.7cm) {$(G',\bs')$};

  \path[->,draw,thick]
  (u) edge[edge]  (v)
  (u') edge[edge]  (v)
  ;
  
  \def\y{8cm} 
  
  \node[node] (u) at (\y + 0*\x, 3*\x) {$u$};
  \node[node] (v) at (\y + 0*\x, 0*\x) {$v$};
  
  \node[blank] (G) at (\y + 0*\x-0.02cm, -0.7cm) {$R_{u' \rightarrow u}(G',\bs')$};

  \path[->,draw,thick]
  (u) edge[edge]  (v)
  ;
  
\end{tikzpicture}
\caption{Graphs illustrating the proof of Lemma~\ref{lemma:1-arrow-df}.}
\label{figure:1-arrow-df}
\end{figure}
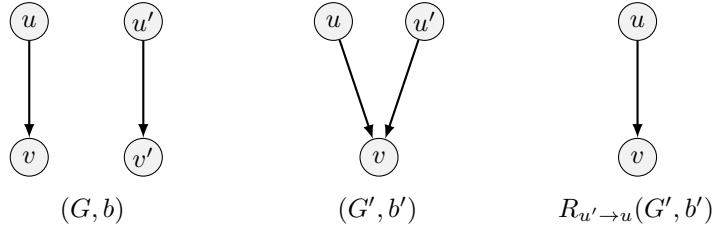

Consider two arbitrary 1-out stars $((\{u,v\}, \lBrace (u,v) \rBrace), [x,0])$ and $((\{u',v'\}, \lBrace (u',v') \rBrace), [y,0])$ such that node $u,v,u',v'$ are distinct.
Let $G$ be their sum: $(G,\bs) = ((\{u,v,u',v'\}, \lBrace (u,v), (u',v') \rBrace),$ $[x,0,y,0])$ (see Fig.~\ref{figure:1-arrow-df} for illustration). 
From Locality (Lemma~\ref{lemma:locality}) we know that centralities of all nodes in $(G, \bs)$ are the same as in the original graphs.
In particular,
\begin{equation}\label{eq:1-arrow-df:1}
F_v(G, \bs) + F_{v'}(G, \bs) = F_v((\{u,v\}, \lBrace (u,v) \rBrace), [x,0]) + F_{v'}((\{u',v'\}, \lBrace (u',v') \rBrace), [y,0]).
\end{equation}
Moreover, nodes $v$ and $v'$ are out-twins in $(G, \bs)$ (both have the same empty set of outgoing edges). 
Hence, from Node Redirect, redirecting $v'$ into $v$ increases the centrality of $v$ by the centrality of $v'$. 
Let us denote the resulting graph by $(G', \bs') = ((\{u,v,u'\}, \lBrace (u,v), (u',v) \rBrace), [x,0,y])$ (see Fig.~\ref{figure:1-arrow-df}).
Thus:
\begin{equation}\label{eq:1-arrow-df:2}
F_v(G',\bs') = F_v(R_{v' \rightarrow v}(G, \bs)) = F_v(G,\bs) + F_{v'}(G,\bs).
\end{equation}
Furthermore, nodes $u$ and $u'$ are out-twins in $(G',\bs')$ (both have only one edge to node $v$). 
Hence, again from Node Redirect, redirecting $u'$ into $u$ does not affect the centrality of node $v$.
Such a redirecting results in graph $((\{u,v\}, \lBrace (u,v) \rBrace), [x+y,0])$ which is a 1-out star (see Fig.~\ref{figure:1-arrow-df}), and we get:
\begin{equation}\label{eq:1-arrow-df:3}
F_v((\{u,v\}, \lBrace (u,v) \rBrace), [x+y,0]) = F_v(R_{u' \rightarrow u}(G',\bs')) = F_v(G',\bs').
\end{equation}
Combining equations~\eqref{eq:1-arrow-df:1}--\eqref{eq:1-arrow-df:3} we have:
\begin{equation}\label{eq:1-arrow-df:4}
F_v((\{u,v\}, \lBrace (u,v) \rBrace), [x+y,0]) =  F_v((\{u,v\}, \lBrace (u,v) \rBrace), [x,0]) + F_{v'}((\{u',v'\}, \lBrace (u',v') \rBrace), [y,0]).
\end{equation}
From equation~\eqref{eq:1-arrow-df:4} we get the following conditions:
\begin{enumerate}
\item[(a)] $F_{v'}((\{u',v'\}, \lBrace (u',v') \rBrace), [0,0]) = 0$ (from equation~\eqref{eq:1-arrow-df:4} with $y=0$);
\item[(b)] $F_v((\{u,v\}, \lBrace (u,v) \rBrace), [0,0]) = 0$ (from (a) and the fact that $u',v'$ were chosen arbitrary);
\item[(c)] $F_{v}((\{u,v\}, \lBrace (u,v) \rBrace), [y,0]) = F_{v'}((\{u',v'\}, \lBrace (u',v') \rBrace), [y,0])$ for every $y \in \mathbb{R}_{\ge 0}$ (from equation~\eqref{eq:1-arrow-df:4} with $x=0$ and (b));
\item[(d)] $F_{v}((\{u,v\}, \lBrace (u,v) \rBrace), [x+y,0]) = F_{v}((\{u,v\}, \lBrace (u,v) \rBrace), [x,0]) + F_{v}((\{u,v\}, \lBrace (u,v) \rBrace), [y,0])$ for every $x,y \in \mathbb{R}_{\ge 0}$ (from equation~\eqref{eq:1-arrow-df:4} and (c)).
\end{enumerate}
From the fact that nodes were chosen arbitrarily, we know that (c) holds for every four pairwise distinct nodes $u,v,u',v'$.
However, to prove that $F_{v}((\{u,v\}, \lBrace (u,v) \rBrace), [y,0])$ does not depend on nodes' names we need to prove the equality holds for any two pairs of nodes.

To this end, assume they are not distinct, i.e., $u \neq v$ and $u' \neq v'$, but $\{u,v\} \cap \{u',v'\} \neq \emptyset$.
Then, let us take two nodes $u'',v''$ which are different than $u,v,u',v'$ and consider graph $((\{u'',v''\}, \lBrace (u'',v'') \rBrace), [y,0])$ constructed from them.
Using (c) two times we get that:
\[ F_{v}((\{u,v\}, \lBrace (u,v) \rBrace), [y,0]) = F_{v''}((\{u'',v''\}, \lBrace (u'',v'') \rBrace), [y,0]) = F_{v'}((\{u',v'\}, \lBrace (u',v') \rBrace), [y,0]) \]
for every $y \in \mathbb{R}_{\ge 0}$ which implies (c) also for not pairwise distinct nodes.
Furthermore, this means that the centrality of a sink in a 1-out star graph depends only on the weight of the source, i.e., there exists a function $f: \mathbb{R}_{\ge 0} \rightarrow \mathbb{R}$ such that $F_v(\{u,v\}, \lBrace u,v \rBrace, [x,0]) = f(x)$.
Since centralities are non-negative, we know that $f$ is also non-negative: $f(x) \ge 0$ for every $x \in \mathbb{R}_{\ge 0}$. 
On the other hand, from (d) we know that $f$ is additive: $f(x+y) = f(x) + f(y)$ for every $x,y \in \mathbb{R}_{\ge 0}$. 
Now, non-negativity combined with additivity implies that $f$ is linear~\cite{Cauchy:1821}: $f(x) = d_F \cdot x$.
This concludes the proof.
\end{proof}


In the next lemma, building upon Lemma~\ref{lemma:1-arrow-df}, we show that there exists a constant $a_F \in [0,1)$ such that the centrality of a sink in 1-out star graph equals $a_F \cdot c_F \cdot x$ where $x$ is the weight of the source (see Lemma~\ref{lemma:source-node} for the definition of $c_F$). 

\begin{lemma}\label{lemma:1-arrow-af}
If a centrality measure $F$ satisfies Node Deletion, Edge Deletion, Edge Multiplication, Edge Swap, and Node Redirect, then there exists a constant $a_F \in [0,1)$ such that for every 1-out star graph $(G,\bs) = ((\{u,v\}, \lBrace (u,v)\rBrace), [x,0])$ it holds $F_v(G,\bs) = a_F \cdot c_F \cdot x$.
Specifically, $a_F = F_{w}((\{w',w\}, \lBrace (w',w) \rBrace), [1,0])/c_F$ for arbitrary nodes $w,w'$ if $c_F > 0$ and $a_F = 0$, otherwise.
\end{lemma}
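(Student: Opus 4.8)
The plan is to keep the definition of $a_F$ exactly as stated and reduce the lemma to a single inequality. By Lemma~\ref{lemma:1-arrow-df} we already know that $F_v(G,\bs)=d_F\cdot x$ for every 1-arrow graph, with $d_F\in\mathbb{R}_{\ge 0}$. When $c_F>0$ I would simply set $a_F:=d_F/c_F$; then $F_v(G,\bs)=d_F x=a_F c_F x$ holds by construction and $a_F\ge 0$ is immediate from $d_F\ge 0$, so the entire content is to show $a_F<1$, i.e.\ $d_F<c_F$. When $c_F=0$ I set $a_F:=0$, and here I must instead show $d_F=0$ so that $F_v(G,\bs)=d_F x=0=a_F c_F x$; I expect this to drop out of the same feedback argument in its degenerate form (if the source contributes nothing, the sink cannot have positive centrality).

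The strict inequality $a_F<1$ cannot come from any acyclic construction: Node Deletion, Edge Deletion, Edge Multiplication and Node Redirect all preserve acyclicity, and Edge Deletion is powerless on a directed cycle because there every node is a successor of every other. So I would manufacture a genuine directed cycle using the one axiom that can, namely Edge Swap. Concretely, take two disjoint self-loop gadgets, i.e.\ nodes $v,v'$ each carrying a single self-loop and equal weight $x$; by symmetry they have equal centrality and equal out-degree $1$, so Edge Swap applies to the two self-edges $(v,v)$ and $(v',v')$ and turns them into the $2$-cycle $v\leftrightarrow v'$ without changing any centrality. Combined with Locality (Lemma~\ref{lemma:locality}), this identifies the centrality of a self-looped node of weight $x$ with that of a node in a symmetric $2$-cycle, giving a cyclic object whose centrality I can reason about; let $S$ denote this common value.

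The heart of the proof is then a self-consistency equation of the form $S=c_F x+a_F S$. The intended reading is that a self-looped node of weight $x$ behaves simultaneously as a source of weight $x$ (contributing $c_F x$ by Lemma~\ref{lemma:source-node}) and as the sink of a single edge emanating from a node of centrality $S$ (contributing $a_F S$, the $1$-arrow factor applied to the loop). Once this equation is available we are done: solving gives $S=c_F x/(1-a_F)$, and since $S$ must be a finite non-negative real while $c_F x>0$, the cases $a_F=1$ and $a_F>1$ are both impossible, forcing $a_F<1$; the same equation with $c_F=0$ reads $S=a_F S$ and handles the degenerate case. I expect the derivation of this equation to be the main obstacle, because Lemma~\ref{lemma:1-arrow-df} only relates a sink's centrality to the \emph{weight} of a \emph{source} predecessor, whereas on the loop the relevant predecessor is the node itself, which is not a source. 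Bridging this gap---showing that traversing one edge of the cycle multiplies centrality by exactly $a_F$ even though the predecessor is not a source---is where the real effort lies, most plausibly by externalising the loop's feedback as a fresh source of weight $S/c_F$ and matching the two graphs' centralities through a further Edge Swap and Node Redirect argument.
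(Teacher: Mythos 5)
Your reduction---define $a_F=d_F/c_F$ when $c_F>0$ and $a_F=0$ otherwise, so that everything rests on $d_F<c_F$ (resp.\ $d_F=0$)---is exactly the paper's, and the construction you name in your final sentence (externalising the loop's feedback as a fresh source of weight $S/c_F$ and matching the two graphs through Edge Swap and Node Redirect) \emph{is} the paper's proof. But as submitted there are genuine gaps. The most concrete one is the step you justify ``by symmetry'': the axiom system contains no isomorphism or anonymity axiom, so you may not assert that two distinct self-loop nodes $v,v'$ of equal weight have equal centrality; they are not out-twins (their outgoing edges have different ends), so Node Redirect gives nothing, and Lemma~\ref{lemma:source-node} covers only sources. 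Without that equality the precondition of Edge Swap fails and your $2$-cycle cannot be built. The paper avoids comparing two loop nodes altogether: it takes a \emph{single} loop node $v$ of weight $1$ with centrality $S$, sets $y=S/c_F$, and places next to it a disjoint $1$-arrow graph with source $u$ of weight $y$ and sink $w$. Then $F_u=c_F\cdot y=S=F_v$ is \emph{derived} from Source Node (Lemma~\ref{lemma:source-node}), Locality (Lemma~\ref{lemma:locality}) and the definition of $y$, and both nodes have out-degree $1$, so Edge Swap legitimately exchanges $(u,w)$ and $(v,v)$ for $(u,v)$ and $(v,w)$; Edge Deletion and Node Deletion strip off $w$, and Node Redirect (merging an isolated source of weight $1$ into the sink $v$) trades the weight of $v$ for $c_F$, yielding $F_v((\{u,v\},\lBrace(u,v)\rBrace),[y,0])=F_v((\{u,v\},\lBrace(u,v)\rBrace),[y,1])-c_F$, i.e.\ $d_F\cdot y=c_F\cdot y-c_F$. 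This is precisely your equation $S=c_F+a_F S$, and it finishes immediately: $(c_F-d_F)\cdot y=c_F>0$ forces $y>0$ and $d_F<c_F$ simultaneously. So the $2$-cycle gadget is both unjustified and unnecessary, and the self-consistency equation that you correctly identify as the heart of the lemma is exactly the part you deferred rather than proved.

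The second gap is the case $c_F=0$, which does not ``drop out of the same feedback argument in its degenerate form'': the externalisation divides by $c_F$, and the putative relation $S=a_F S$ only gives $S(1-a_F)=0$, which says nothing about $d_F$. The paper handles it by a separate, simpler Edge Swap: take two disjoint $1$-arrow graphs with source weights $1$ and $0$; since $c_F=0$, both sources have centrality $0$ (Lemma~\ref{lemma:source-node}) and out-degree $1$, so the ends of their edges may be swapped without changing any centrality; after the swap the original sink $v$ hangs off the weight-$0$ source, so $F_v=0$ by Lemma~\ref{lemma:1-arrow-df} and Locality, while before the swap $F_v=d_F$, whence $d_F=0$. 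With these two repairs your outline coincides with the paper's argument; in particular, your structural observation that the other four invariance axioms preserve acyclicity, so that the bound $a_F<1$ must be extracted from a cycle via Edge Swap, is correct and is indeed how the paper proceeds.
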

\begin{proof}
So far, we have proved that there exist constants $c_F,d_F \in \mathbb{R}_{\ge 0}$ such that for every 1-out star the centrality of the source equals $c_F \cdot x$ (Lemma~\ref{lemma:source-node}) and the centrality of the sink equals $d_F \cdot x$ (Lemma~\ref{lemma:1-arrow-df}), where $x$ is the weight of the source.
Hence, to show that the centrality of the sink equals $a_F \cdot c_F \cdot x$ for some $a_F \in [0,1)$ it is enough to prove that if $c_F = 0$, then $d_F = 0$, and if $c_F \neq 0$, then $d_F < c_F$.

Assume $c_F = 0$. Consider graph $(G,\bs) = ((\{u,v,u',v'\}, \lBrace (u,v), (u',v') \rBrace), [1,0,0,0])$ which is a sum of two 1-out stars.
Since $c_F = 0$, from Source Node (Lemma~\ref{lemma:source-node}) we know that $F_u(G,\bs) = 0 = F_{u'}(G,\bs)$ and from Lemma~\ref{lemma:1-arrow-df} and Locality (Lemma~\ref{lemma:locality}) $F_v(G,\bs) = d_F$.
Nodes $u$ and $u'$ each have one edge, so from Edge Swap we get that if we replace ends of both these edges centralities in the graph will not change.
Formally, for graph $(G',\bs') = ((\{u,v,u',v'\}, \lBrace (u,v') (v,u') \rBrace), [1,0,0,0])$ we have that $F_v(G',\bs') = F_v(G,\bs)$.
However, from Lemma~\ref{lemma:1-arrow-df} and Locality (Lemma~\ref{lemma:locality}) we get that $F_v(G',\bs') = 0$.
Hence, $d_F = 0$.

\begin{figure}[t]
\centering
\begin{tikzpicture}
  \def\x{0.6cm} 
  \def\y{0cm} 

  \tikzset{
    node_blank/.style={circle,draw,minimum size=0.5cm,inner sep=0, color=white}, 
    node/.style={circle,draw,minimum size=0.5cm,inner sep=0, fill = black!05}, 
    edge/.style={sloped,-latex,above,font=\footnotesize},
    arrow/.style={draw, single arrow, minimum width = 0.9cm, minimum height=\y-6*\x+\s, fill=black!10},
    blank/.style={}
  }
  
  \node[node] (u) at (\y + 0*\x, 0*\x) {$u$};
  \node[node] (v) at (\y + 1*\x, 2*\x) {$v$};
  
  \node[blank] (G) at (\y + 1*\x-0.02cm, -0.7cm) {$G$};

  \path[->,draw,thick]
  (u) edge[edge]  (v)
  ;
  
  \def\y{3cm} 
  
  \node[node] (u) at (\y + 0*\x, 0*\x) {$u$};
  \node[node] (v) at (\y + 1*\x, 2*\x) {$v$};
  \node[node] (w) at (\y + 2*\x, 0*\x) {$w$};
  
  \node[blank] (G) at (\y + 1*\x-0.02cm, -0.7cm) {$G'$};

  \path[->,draw,thick]
  (u) edge[edge]  (w)
  (v) edge[edge, out =320, in = 60, looseness=7]  (v)
  ;
  
  \def\y{6cm} 
  
  \node[node] (u) at (\y + 0*\x, 0*\x) {$u$};
  \node[node] (v) at (\y + 1*\x, 2*\x) {$v$};
  \node[node] (w) at (\y + 2*\x, 0*\x) {$w$};
  
  \node[blank] (G) at (\y + 1*\x-0.02cm, -0.7cm) {$G''$};

  \path[->,draw,thick]
  (u) edge[edge]  (v)
  (v) edge[edge]  (w)
  ;
  
  \def\y{9cm} 
  
  \node[node] (u) at (\y + 0*\x, 0*\x) {$u$};
  \node[node] (v) at (\y + 1*\x, 2*\x) {$v$};
  \node[node] (v') at (\y + 2*\x, 0*\x) {$v'$};
  
  \node[blank] (G) at (\y + 1*\x-0.02cm, -0.7cm) {$G^*$};

  \path[->,draw,thick]
  (u) edge[edge]  (v)
  ;
  
\end{tikzpicture}
\caption{Graphs illustrating the proof of Lemma~\ref{lemma:1-arrow-af}.}
\label{figure:1-arrow-af}
\end{figure}
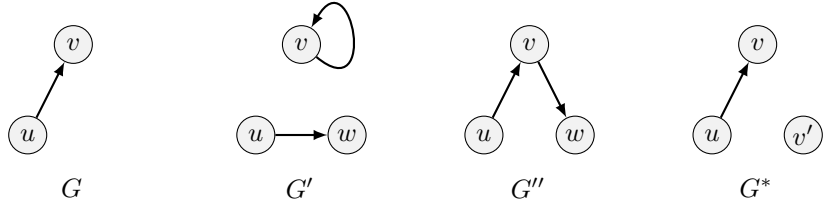

Assume now that $c_F \neq 0$. To show that $d_F < c_F$, we will show that in one particular 1-out star the centrality of the sink is strictly smaller than the centrality of the source which building upon the above general results will imply the thesis.
Let $u,v$ be two arbitrary distinct nodes and let $y$ be the centrality of $v$ in graph $((\{v\}, \lBrace (v,v) \rBrace), [1])$ divided by $c_F$: $y = F_v((\{v\}, \lBrace (v,v) \rBrace), [1]) / c_F$.
We will show that in 1-out star $((\{u,v\}, \lBrace (u,v) \rBrace), [y,0])$ the centrality of the sink is smaller than the centrality of the source:
\begin{equation}\label{eq:1-arrow-af:0} 
F_u((\{u,v\}, \lBrace (u,v) \rBrace), [y,0]) > F_v((\{u,v\}, \lBrace (u,v) \rBrace), [y,0]).
\end{equation}
To this end, we begin by proving that the centralities of the source and the sink in graph $(G,\bs) = ((\{u,v\}, \lBrace (u,v) \rBrace), [y, 1])$ are equal:
\begin{equation}\label{eq:1-arrow-af:1}
F_u((\{u,v\}, \lBrace (u,v) \rBrace), [y, 1]) = F_v((\{u,v\}, \lBrace (u,v) \rBrace), [y, 1]).
\end{equation}

To prove equation~\eqref{eq:1-arrow-af:1}, first let us consider graph $(G',\bs') = ((\{u,v,w\}, \lBrace (u,w), (v,v) \rBrace), [y,1,0])$ which is the sum of two graphs: 1-out star $((\{u,w\}, \lBrace (u,w) \rBrace), [y,0])$ and already mentioned graph with one node $((\{v\}, \lBrace (v,v) \rBrace), [1])$ (see Fig.~\ref{figure:1-arrow-af} for illustration).
From Locality (Lemma~\ref{lemma:locality}) and the definition of $y$, we know that the centrality of node $v$ equals $c_F \cdot y$.
On the other hand, $u$ is a source in $(G',\bs')$, so from Source Node (Lemma~\ref{lemma:source-node}), we know that its centrality also equals $c_F \cdot y$. 
Hence, centralities of both nodes are equal:
\[ F_u(G',\bs') = F_v(G',\bs'). \]
Since $u$ and $v$ have the same centralities in $(G',\bs')$ and both have one outgoing edge, from Edge Swap we know that exchanging the ends of these edges does not affect the centralities in the graph.
Let $(G'',\bs'')$ be the resulting graph: $(G'',\bs'') = ((\{u,v,w\}, \lBrace (u,v), (v,w) \rBrace), [y,1,0])$ (see Fig.~\ref{figure:1-arrow-af}). 
We get that:
\[ F_u(G'',\bs'') = F_v(G'',\bs''). \]
Furthermore, since $v$ is not its own successor and is not a successor of $u$, i.e., $v \not \in S_u(G'',\bs'')$, $v \not \in S_v(G'',\bs'')$, from Edge Deletion we can remove edge $(v,w)$ from $(G'',\bs'')$ without affecting centralities of nodes $u,v$.
Moreover, after deleting edge $(v,w)$ node $w$ is isolated, so from Node Deletion $w$ can also be deleted without affecting these centralities.
In this way, we obtain graph $(G,\bs) = ((\{u,v\}, \lBrace (u,v) \rBrace), [y,1])$ which proves equation~\eqref{eq:1-arrow-af:1}.

Let us go back to proving equation~\eqref{eq:1-arrow-af:0}.
Observe that graph $((\{u,v\}, \lBrace (u,v) \rBrace), [y, 0])$ is obtained from $(G,\bs)$ by changing the weight of $v$ to zero.
Since $u$ is a source also in this graph and its weight did not change, from Source Node (Lemma~\ref{lemma:source-node}) we know that its centrality is the same as in graph $(G,\bs)$:
\begin{equation}\label{eq:1-arrow-af:2}
F_u((\{u,v\}, \lBrace (u,v) \rBrace), [y,0]) = F_u((\{u,v\}, \lBrace (u,v) \rBrace), [y,1]).
\end{equation}
Now, let us turn our attention to the centrality of $v$.
Consider graph $(G^*,\bs^*) = ((\{u,v,v'\}, \lBrace (u,v) \rBrace),$ $[y, 0, 1])$ (see Fig.~\ref{figure:1-arrow-af}).
Since $v'$ is a source in this graph, we know that $F_{v'}(G^*,\bs^*) = c_F$.
Moreover, both $v$ and $v'$ do not have any outgoing edges in $(G^*,\bs^*)$, so from Node Redirect, redirecting $v'$ into $v$ increases the centrality of $v$ by the centrality of $v'$, i.e., by $c_F$. 
Such redirecting results in graph $((\{u,v\}, \lBrace (u,v) \rBrace), [y,1])$.
Hence, we get:
\begin{equation}\label{eq:1-arrow-af:3}
F_v((\{u,v\}, \lBrace (u,v) \rBrace), [y,0]) = F_v((\{u,v\}, \lBrace (u,v) \rBrace), [y,1]) - c_F.
\end{equation}
Since we assumed that $c_F > 0$, equation~\eqref{eq:1-arrow-af:3} combined with equations~\eqref{eq:1-arrow-af:1} and \eqref{eq:1-arrow-af:2} implies equation~\eqref{eq:1-arrow-af:0}.

Therefore, indeed, $F_w((\{w', w), \lBrace (w', w)\rBrace), [1,0]) = a_F \cdot c_F$ for arbitrary nodes $w,w'$. 
Hence, if $c_F > 0$ we get that $a_F = F_w((\{w', w), \lBrace (w', w)\rBrace), [1,0])/c_F$ and if $c_F = 0$, $a_F$ can be defined arbitrary: we will assume $a_F = 0$.
\end{proof}


A direct corollary from Source Node (Lemma~\ref{lemma:source-node}) and Lemma~\ref{lemma:1-arrow-af} is the fact that in 1-out stars centralities are equal to PageRank with decay factor $a_F$ multiplied by $c_F$ (see these lemmas for the definitions of $a_F$ and $c_F$).

\begin{lemma}\label{lemma:1-arrow}
If a centrality measure $F$ satisfies Node Deletion, Edge Deletion, Edge Multiplication, Edge Swap, and Node Redirect, then for every 1-out star graph $(G,\bs) = ((\{u,v\}, \lBrace (u,v)\rBrace), [x,0])$ it holds $F(G,\bs) = c_F \cdot PR^{a_F}(G,\bs)$.
Specifically, $F_u(G,\bs) = c_F \cdot x$ and $F_v(G,\bs) = a_F \cdot c_F \cdot x$.
\end{lemma}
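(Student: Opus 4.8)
The plan is to treat this statement as a pure assembly of the two preceding lemmas together with a direct evaluation of the PageRank recursion on the 1-arrow graph. Since the substantive work—establishing the constants $c_F$ and $a_F$ and pinning down their ranges—has already been carried out in Lemma~\ref{lemma:source-node} and Lemma~\ref{lemma:1-arrow-af}, the only thing that remains is to verify that the values those lemmas produce coincide with $c_F \cdot PR^{a_F}$.

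First I would evaluate $PR^{a_F}$ on $(G,\bs) = ((\{u,v\}, \lBrace (u,v)\rBrace), [x,0])$ straight from the recursive equation~\eqref{eq:pr:main}. Node $u$ is a source, so $\Gamma^-_u(G) = \emptyset$ and the recursion collapses to $PR^{a_F}_u(G,\bs) = \bs(u) = x$. Node $v$ has the single incoming edge $(u,v)$, weight $\bs(v)=0$, and its lone predecessor $u$ has $\deg^+_u(G)=1$, so $PR^{a_F}_v(G,\bs) = a_F \cdot PR^{a_F}_u(G,\bs)/\deg^+_u(G) + \bs(v) = a_F \cdot x$.

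Next I would read off the centralities from the earlier lemmas and match them. Because $u$ is a source, Source Node (Lemma~\ref{lemma:source-node}) yields $F_u(G,\bs) = c_F \cdot \bs(u) = c_F \cdot x$; because $(G,\bs)$ is a 1-arrow graph, Lemma~\ref{lemma:1-arrow-af} gives $F_v(G,\bs) = a_F \cdot c_F \cdot x$. Comparing with the PageRank values above, $F_u(G,\bs) = c_F \cdot x = c_F \cdot PR^{a_F}_u(G,\bs)$ and $F_v(G,\bs) = a_F \cdot c_F \cdot x = c_F \cdot PR^{a_F}_v(G,\bs)$, which is precisely the asserted identity $F(G,\bs) = c_F \cdot PR^{a_F}(G,\bs)$, with the two specific values stated in the lemma.

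There is essentially no obstacle here: the statement is a corollary, and the single point requiring any care is the bookkeeping in the PageRank recursion for $v$—namely that its unique predecessor $u$ has out-degree exactly one, so that the division by $\deg^+_u(G)$ leaves the contributed value unchanged. I would therefore keep the argument short, explicitly noting that the heavy lifting resides in Lemma~\ref{lemma:source-node} and Lemma~\ref{lemma:1-arrow-af}, and that this lemma merely records their combination in PageRank form.
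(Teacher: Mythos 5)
Your proposal is correct and matches the paper's own proof essentially verbatim: both combine Source Node (Lemma~\ref{lemma:source-node}) for $F_u$ with Lemma~\ref{lemma:1-arrow-af} for $F_v$, then evaluate the PageRank recursion~\eqref{eq:pr:main} directly on the 1-arrow graph to get $PR^{a_F}_u = x$ and $PR^{a_F}_v = a_F \cdot x$. No gaps; this lemma is indeed just a corollary recording the two earlier results in PageRank form.
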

\begin{proof}
From Source Node (Lemma~\ref{lemma:source-node}) we know that $F_u(G,\bs) = c_F \cdot x$ and from Lemma~\ref{lemma:1-arrow-af} we know that $F_v(G,\bs) = a_F \cdot c_F \cdot x$. 
On the other hand, from PageRank recursive equation~\eqref{eq:pr:main} we have that $PR^{a_F}_u(G,\bs) = x$ (node $u$ has weight $x$ and no incoming edges) and $PR^{a_F}_v(G,\bs) = a_F \cdot PR^{a_F}_u(G,\bs) = a_F \cdot c_F \cdot x$ (node $v$ has a zero weight and only one incoming edge from node $u$ with $\deg_u^+(G)=1$). This concludes the proof.
\end{proof}


In the next lemma, we extend the result from Lemma~\ref{lemma:1-arrow} concerning 1-out stars to $k$-out stars.
More in detail, we show that the centrality of every sink in $k$-out star equals $a_F \cdot c_F \cdot x/k$, where $x$ is the weight of the source.
Hence, the centrality $a_F \cdot c_F \cdot x$ of the sink in a 1-out star is split equally among all $k$ sinks.

\begin{lemma}\label{lemma:k-arrow}
If centrality measure $F$ satisfies Node Deletion, Edge Deletion, Edge Multiplication, Edge Swap, and Node Redirect, then for every $k$-out star graph
\[
    (G,\bs) = ((\{u,v_1,\dots,v_k\},\lBrace(u,v_1),\dots,(u,v_k)\rBrace), [x,0,\dots,0])
\]
it holds that $F(G,\bs) = c_F \cdot PR^{a_F}(G,\bs)$.
Specifically, $F_u(G,\bs) = c_F \cdot x$ and $F_{v_i}(G,\bs) = a_F \cdot c_F \cdot x / k$ for every $i \in \{1,\dots,k\}$.
\end{lemma}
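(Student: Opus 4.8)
The plan is to establish the two claimed values separately. The source value is immediate: since $u$ has no incoming edges, Source Node (Lemma~\ref{lemma:source-node}) gives $F_u(G,\bs) = c_F \cdot x$, which matches $c_F \cdot PR^{a_F}_u(G,\bs) = c_F \cdot x$ because $\Gamma^-_u(G) = \emptyset$. The entire difficulty lies in the sinks: I must show that the total value $a_F c_F x$ assigned to the sink of a $1$-arrow graph is split \emph{evenly} among the $k$ sinks, i.e. each $v_i$ gets $a_F \cdot c_F \cdot x / k$. The obstacle is that none of the axioms directly equates the centralities of two \emph{distinct} nodes, so I cannot simply invoke symmetry of the sinks; instead I will route the argument through an auxiliary graph whose sinks are genuinely interchangeable by edge swaps.

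I would introduce an auxiliary graph $B$ that is symmetric in all sinks: take $k$ sources $u_1,\dots,u_k$, each of weight $x/k$, and $k$ sinks $v_1,\dots,v_k$ of weight $0$, with exactly one edge $(u_j,v_i)$ for every pair $i,j$ (a complete bipartite multigraph). The first key observation is that $B$ collapses to $G$. The sources $u_1,\dots,u_k$ are pairwise out-twins, so redirecting $u_2,\dots,u_k$ into $u_1$ via Node Redirect successively deletes their outgoing edges, adds nothing (they have no incoming edges), and accumulates their weights, leaving a single source of weight $x$ pointing once to each sink, that is, the graph $G$ with $u_1$ playing the role of $u$. Since Node Redirect leaves the centrality of every node other than the merged pair intact, this yields $F_{v_i}(G,\bs) = F_{v_i}(B,\bs')$ for every sink $v_i$.

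Next I would rewrite $B$ into a disjoint union $B'$ whose $j$-th component joins $u_j$ to $v_j$ by $k$ parallel edges. The crucial point is that $B$ and $B'$ are bipartite multigraphs with the same degree sequence (every source has out-degree $k$, every sink in-degree $k$), so one passes from $B$ to $B'$ by a sequence of edge swaps, each replacing $(u_a,s),(u_b,t)$ by $(u_a,t),(u_b,s)$. Throughout this process no incoming edge is ever created at a source, so every $u_j$ stays a source of weight $x/k$; hence at each step all sources have equal centrality $c_F \cdot x/k$ and equal out-degree $k$, which is exactly what Edge Swap requires, and each swap preserves all centralities. Consequently $F_{v_i}(B,\bs') = F_{v_i}(B',\bs')$. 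Finally, by Locality (Lemma~\ref{lemma:locality}) the centrality of $v_j$ in $B'$ equals its centrality in the single component $\big((\{u_j,v_j\}, k\cdot\lBrace (u_j,v_j)\rBrace), [x/k,0]\big)$, and since this component is obtained from the $1$-arrow graph $((\{u_j,v_j\}, \lBrace (u_j,v_j)\rBrace), [x/k,0])$ by multiplying the outgoing edges of $u_j$, Edge Multiplication together with Lemma~\ref{lemma:1-arrow} gives $F_{v_j} = a_F \cdot c_F \cdot (x/k)$. Chaining the three equalities gives $F_{v_i}(G,\bs) = a_F \cdot c_F \cdot x / k$, and a direct check of equation~\eqref{eq:pr:main} with $\deg^+_u(G)=k$ confirms this equals $c_F \cdot PR^{a_F}_{v_i}(G,\bs)$.

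I expect the main obstacle to be the edge-rearrangement step: one must argue both that the required sequence of swaps exists and that Edge Swap is applicable at every intermediate graph. Applicability is guaranteed because the sources never acquire incoming edges and therefore remain equal-weight sources of out-degree $k$ all the way through. The existence of the sequence is the classical fact that two bipartite multigraphs with identical margins are connected by $2$-swaps; here it can also be realised explicitly by repeatedly swapping a misplaced edge $(u_j,v_i)$ with an edge $(u_i,v_i)$-candidate so as to move one edge at a time onto the diagonal, which I would spell out to keep the argument self-contained.
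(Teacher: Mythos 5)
Your proposal is correct and follows essentially the same route as the paper's proof: the complete bipartite auxiliary graph with $k$ equal-weight sources collapsed to $G$ via Node Redirect, the swap rearrangement onto the diagonal justified by Source Node keeping all sources at centrality $c_F\cdot x/k$, and then Edge Multiplication, Locality and Lemma~\ref{lemma:1-arrow} on the resulting disjoint $1$-arrow components. The only cosmetic difference is that you invoke the classical fact that bipartite multigraphs with equal margins are connected by $2$-swaps, whereas the paper simply performs the explicit sequence of swaps $(u_i,v_j),(u_j,v_i)\mapsto(u_i,v_i),(u_j,v_j)$ over all unordered pairs $i,j$, which is the self-contained realization you said you would spell out.
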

\begin{proof}
Consider an arbitrary $k$-out star graph: $(G,\bs) = ((\{u_1,v_1,\dots,v_k\}, \lBrace (u_1,v_1), \dots, (u_1,v_k)\rBrace),$ $[x,0,\dots,0])$ (note that for notational convenience we will denote the source node by $u_1$, not $u$).
See Fig.~\ref{figure:k-arrow} for illustration.
We need to prove that $F_{v_i}(G,\bs) = a_F \cdot c_F \cdot x/k$ for every $i \in \{1,\dots,k\}$.
To this end, through a series of invariance operations, we will show that splitting the source of a $k$-out star into $k$ separate sources, each with $1$ edge and $1/k$ of the original weight does not affect the centralities of sinks.
In so doing, we obtain $k$ separate 1-out stars and Lemma~\ref{lemma:1-arrow} will imply the thesis.

First, take $k-1$ distinct nodes, $u_2,\dots,u_k$, that do not appear in $(G,\bs)$ and consider a graph obtained from $(G,\bs)$ by splitting $u_1$ into $k$ nodes, $u_1,\dots,u_k$, each with the same edges as $u_1$ in $(G,\bs)$ and $1/k$ of the original weight (see Fig.~\ref{figure:k-arrow}):
\[ (G',\bs') = ((\{u_1,\dots,u_k,v_1,\dots,v_k\}, \bigsqcup_{i=1}^k \lBrace (u_i, v_1), \dots, (u_i, v_k) \rBrace), [x/k, \dots, x/k, 0, \dots, 0]). \]
More formally, we have $\bs'(u_i) = x/k$ and $\bs'(v_i) = 0$ for every $i \in \{1,\dots,k\}$.
Graph $(G',\bs')$ contains $k$ identical sources ($u_1,\dots,u_k$) and $k$ identical sinks ($v_1,\dots,v_k$).
All sources are out-twins, so from Node Redirect redirecting one of the nodes $u_2,\dots,u_k$ into $u_1$ does not affect the centralities of sinks. 
This operation does not change edges of $u_1$ and the remaining sources, hence they are still out-twins. 
By performing all such redirectings one by one, i.e., redirecting $u_2$ into $u_1$, $u_3$ into $u_1$ and so on, we will eventually obtain the original graph $(G,\bs)$.
Hence, we have: $F_{v_i}(G', \bs') = F_{v_i}(G, \bs)$ for every $i \in \{1,\dots,k\}$.

Next, fix arbitrary $i,j \in \{1,\dots,k\}$ and consider replacing in graph $(G',\bs')$ edges $(u_i,v_j), (u_j,v_i)$ with edges $(u_i,v_i),(u_j,v_j)$, i.e., an edge swap.
Both sources $u_i$ and $u_j$ have exactly $k$ outgoing edges and from Source Node (Lemma~\ref{lemma:source-node}) they have the same centrality equal to $c_F \cdot x/k$.
Hence, from Edge Swap this operation does not affect centralities in the graph.
Moreover, it does not affect the number of outgoing edges of any node.
Hence, by sequentially replacing edges $(u_i,v_j), (u_j,v_i)$ with $(u_i,v_i),(u_j,v_j)$ for all (unordered) pairs $i,j \in \{1,\dots,k\}$ we obtain graph:
\[ (G'',\bs'') = ((\{u_1,\dots,u_k,v_1,\dots,v_k\}, k \cdot \lBrace (u_1, v_1), \dots, (u_k, v_k) \rBrace), [x/k, \dots, x/k, 0, \dots, 0]), \]
(see Fig.~\ref{figure:k-arrow}) and we know that centralities of sinks did not change: $F_{v_i}(G'', \bs'') = F_{v_i}(G', \bs')$ for every $i \in \{1,\dots,k\}$.

In graph $(G'',\bs'')$ each source $u_i$ has $k$ edges, all to the same node $v_i$. 
From Edge Multiplication we know that replacing these $k$ edges with only one edge does not affect centralities in the graph.
Hence, for a graph 
\[ (G^*,\bs^*) = ((\{u_1,\dots,u_k,v_1,\dots,v_k\}, \lBrace (u_1, v_1), \dots, (u_k, v_k) \rBrace), [x/k, \dots, x/k, 0, \dots, 0]), \]
we get $F_{v_i}(G^*, \bs^*) = F_{v_i}(G'', \bs'')$ for every $i \in \{1,\dots,k\}$.

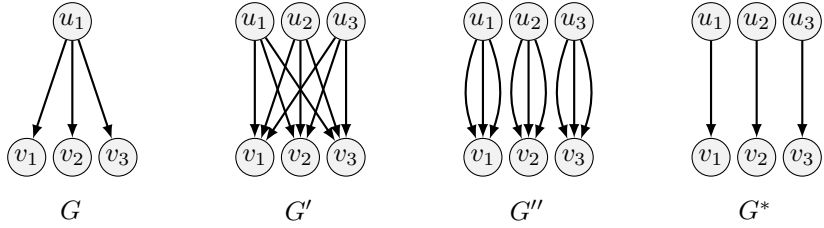
\begin{figure}[t]
\centering
\begin{tikzpicture}
  \def\x{0.6cm} 
  \def\y{0cm} 

  \tikzset{
    node_blank/.style={circle,draw,minimum size=0.5cm,inner sep=0, color=white}, 
    node/.style={circle,draw,minimum size=0.5cm,inner sep=0, fill = black!05}, 
    edge/.style={sloped,-latex,above,font=\footnotesize},
    arrow/.style={draw, single arrow, minimum width = 0.9cm, minimum height=\y-6*\x+\s, fill=black!10},
    blank/.style={}
  }
  
  \node[node] (u_1) at (\y + 1*\x, 3*\x) {$u_1$};
  \node[node] (v_1) at (\y + 0*\x, 0*\x) {$v_1$};
  \node[node] (v_2) at (\y + 1*\x, 0*\x) {$v_2$};
  \node[node] (v_3) at (\y + 2*\x, 0*\x) {$v_3$};
  
  \node[blank] (G) at (\y + 1*\x-0.02cm, -0.7cm) {$G$};

  \path[->,draw,thick]
  (u_1) edge[edge]  (v_1)
  (u_1) edge[edge]  (v_2)
  (u_1) edge[edge]  (v_3)
  ;
  
  \def\y{3cm} 
  
  \node[node] (u_1) at (\y + 0*\x, 3*\x) {$u_1$};
  \node[node] (u_2) at (\y + 1*\x, 3*\x) {$u_2$};
  \node[node] (u_3) at (\y + 2*\x, 3*\x) {$u_3$};
  \node[node] (v_1) at (\y + 0*\x, 0*\x) {$v_1$};
  \node[node] (v_2) at (\y + 1*\x, 0*\x) {$v_2$};
  \node[node] (v_3) at (\y + 2*\x, 0*\x) {$v_3$};
  
  \node[blank] (G) at (\y + 1*\x-0.02cm, -0.7cm) {$G'$};

  \path[->,draw,thick]
  (u_1) edge[edge]  (v_1)
  (u_1) edge[edge]  (v_2)
  (u_1) edge[edge]  (v_3)
  (u_2) edge[edge]  (v_1)
  (u_2) edge[edge]  (v_2)
  (u_2) edge[edge]  (v_3)
  (u_3) edge[edge]  (v_1)
  (u_3) edge[edge]  (v_2)
  (u_3) edge[edge]  (v_3)
  ;
  
  \def\y{6cm} 
  
  \node[node] (u_1) at (\y + 0*\x, 3*\x) {$u_1$};
  \node[node] (u_2) at (\y + 1*\x, 3*\x) {$u_2$};
  \node[node] (u_3) at (\y + 2*\x, 3*\x) {$u_3$};
  \node[node] (v_1) at (\y + 0*\x, 0*\x) {$v_1$};
  \node[node] (v_2) at (\y + 1*\x, 0*\x) {$v_2$};
  \node[node] (v_3) at (\y + 2*\x, 0*\x) {$v_3$};
  
  \node[blank] (G) at (\y + 1*\x-0.02cm, -0.7cm) {$G''$};

  \path[->,draw,thick]
  (u_1) edge[edge, bend left = 20]  (v_1)
  (u_1) edge[edge]  (v_1)
  (u_1) edge[edge, bend right = 20]  (v_1)
  (u_2) edge[edge, bend left = 20]  (v_2)
  (u_2) edge[edge]  (v_2)
  (u_2) edge[edge, bend right = 20]  (v_2)
  (u_3) edge[edge, bend left = 20]  (v_3)
  (u_3) edge[edge]  (v_3)
  (u_3) edge[edge, bend right = 20]  (v_3)
  ;
  
  \def\y{9cm} 
  
  \node[node] (u_1) at (\y + 0*\x, 3*\x) {$u_1$};
  \node[node] (u_2) at (\y + 1*\x, 3*\x) {$u_2$};
  \node[node] (u_3) at (\y + 2*\x, 3*\x) {$u_3$};
  \node[node] (v_1) at (\y + 0*\x, 0*\x) {$v_1$};
  \node[node] (v_2) at (\y + 1*\x, 0*\x) {$v_2$};
  \node[node] (v_3) at (\y + 2*\x, 0*\x) {$v_3$};
  
  \node[blank] (G) at (\y + 1*\x-0.02cm, -0.7cm) {$G^*$};

  \path[->,draw,thick]
  (u_1) edge[edge]  (v_1)
  (u_2) edge[edge]  (v_2)
  (u_3) edge[edge]  (v_3)
  ;
  
\end{tikzpicture}
\caption{Example graphs illustrating the proof of Lemma~\ref{lemma:k-arrow} for $k=3$.}
\label{figure:k-arrow}
\end{figure}

Finally, observe that graph $G^*$ is a sum of $k$ separate 1-out stars. 
Hence, from Locality (Lemma~\ref{lemma:locality}) and Lemma~\ref{lemma:1-arrow} we get that for every $i \in \{1,\dots,k\}$ we have:
\[ F_{v_i}(G^*, \bs^*) = F_{v_i}((\{u_i,v_i\}, \lBrace (u_i,v_i) \rBrace ), [x/k, 0]) = a_F \cdot c_F \cdot x/k.\]
As a result, we showed that the centrality of every sink $v_i$ in $(G^*,\bs^*)$ is the same as in $(G'',\bs'')$, $(G',\bs')$ and eventually in $(G,\bs)$, so we have $F_{v_i}(G,\bs) = a_F \cdot c_F \cdot x/k$.

Now, equality $F_{u_1}(G,\bs) = c_F \cdot x$ comes directly from Source Node (Lemma~\ref{lemma:source-node}).
Furthermore, from PageRank recursive equation~\eqref{eq:pr:main} we have that $PR_{u_1}(G,\bs) = x$ ($u_1$ is a node with no incoming edges and weight $x$) and $PR_{v_i}(G,\bs) = a_F \cdot PR_{u_1}(G,\bs)/k = a_F \cdot x/k$ for every $i \in \{1,\dots,k\}$ ($v_i$ has a zero weight and one incoming edge from node $u$ which has $\deg^+_{u_1}(G) = k$).
This shows that $F(G,\bs) = PR(G,\bs)$ and concludes the proof.
\end{proof}


Now, let us turn our attention to more complex graphs.
In the following lemma we consider an arbitrary graph with no cycles and prove that the centralities are equal to PageRank with decay factor $a_F$ multiplied by $c_F$ (recall that these constants where defined in Lemmas~\ref{lemma:source-node} and \ref{lemma:1-arrow-af}). 

\begin{lemma}\label{lemma:no-cycles}
If a centrality measure $F$ satisfies Node Deletion, Edge Deletion, Edge Multiplication, Edge Swap, and Node Redirect, then for every graph with no cycles $(G,\bs)$ it holds that $F(G,\bs) = c_F \cdot PR^{a_F}(G,\bs)$.
\end{lemma}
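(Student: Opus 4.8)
The plan is to prove the statement by induction on the number of nodes of $(G,\bs)$, peeling off one source at a time. In the base case $G$ has a single node; an acyclic graph on one node has no edges, so the node is isolated, and Source Node (Lemma~\ref{lemma:source-node}) gives $F_v(G,\bs)=c_F\cdot\bs(v)=c_F\cdot PR^{a_F}_v(G,\bs)$, since a source satisfies $PR^{a_F}_v(G,\bs)=\bs(v)$ by~\eqref{eq:pr:main}. For the inductive step, since $G$ is acyclic and nonempty it has a source $s$. If $s$ is isolated I simply remove it by Node Deletion and invoke the inductive hypothesis together with Source Node for $s$ itself. Otherwise $d:=\deg^+_s(G)\ge 1$; write $x:=\bs(s)$ and let $(s,w_1),\dots,(s,w_d)$ be its outgoing edges with multiplicity. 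The whole step rests on one reduction, the \emph{source-push}: deleting $s$ and its edges and increasing the weight of every node $w$ by $a_F\cdot x\cdot\#_{(s,w)}(G)/d$ must leave the $F$-value of every remaining node unchanged. Granting this, the reduced graph $(G^-,\bs^-)$ is a strictly smaller acyclic graph, so the inductive hypothesis gives $F(G^-,\bs^-)=c_F\cdot PR^{a_F}(G^-,\bs^-)$; and because $G$ is acyclic one checks directly from~\eqref{eq:pr:main}, evaluating nodes in topological order, that $PR^{a_F}_v(G^-,\bs^-)=PR^{a_F}_v(G,\bs)$ for every $v\neq s$ --- the contribution $a_F\cdot x\cdot\#_{(s,w)}(G)/d$ that $s$ made to $PR^{a_F}_w$ has merely been folded into the baseline weight of $w$, while nodes that are not successors of $s$ are untouched. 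Combined with $F_s(G,\bs)=c_F\cdot x=c_F\cdot PR^{a_F}_s(G,\bs)$ from Source Node, this yields the thesis for all nodes.

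It remains to justify the source-push from the invariance axioms, and this is where essentially all the work lies. I would realise it in two stages. First, \emph{split the source}: replace $s$ by $d$ single-edge sources $s_1,\dots,s_d$, where $s_i$ has weight $x/d$ and the single edge $(s_i,w_i)$. This mirrors Lemma~\ref{lemma:k-arrow} almost verbatim: introduce $d-1$ fresh out-twins of $s$ of weight $x/d$ and collapse them back onto $s$ by Node Redirect to see the two configurations agree; then, since all these sources have out-degree $d$ and, by Source Node (Lemma~\ref{lemma:source-node}), equal centrality $c_F\cdot x/d$, rearrange their edges into a matching by Edge Swap; finally, use Edge Multiplication to reduce each resulting bundle of parallel edges to a single edge. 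Crucially, none of these three axioms requires the endpoints $w_i$ to be sinks, so the argument survives even though the $w_i$ are arbitrary nodes of $G$ (and coincidences or multiplicities among them only distribute the deposited weight correctly).

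The second stage, \emph{absorbing a single-edge source}, is the main obstacle. I must show that a source $p$ of weight $y$ with a single edge $(p,w)$ can be deleted while adding $a_F\cdot y$ to the baseline weight of $w$, without disturbing any other centrality; the trouble is that $w$ itself has outgoing edges, so $p$ cannot simply be merged into $w$ and no node is an out-twin of $w$. My plan is to first \emph{isolate} the difficulty by a reverse application of Node Redirect: split $w$ into $w$ and a fresh out-twin $w'$ that inherits exactly the incoming edge $(p,w')$ and zero weight, which is legitimate because reattaching $w'$ to $w$ reproduces the original graph. One is then left with the clean configuration ``source $p$ of weight $y$ feeding a zero-weight node $w'$ whose only incoming edge comes from $p$''. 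This configuration is PageRank-equivalent to ``$w'$ is a source of weight $a_F\cdot y$'' --- both give $PR^{a_F}_{w'}=a_F\cdot y$ and identical downstream values --- after which a forward Node Redirect merging $w'$ into $w$ deposits precisely $a_F\cdot y$ onto $\bs(w)$ and removes $p$. The genuinely delicate point, and the one I expect to consume the proof, is supplying the decay factor $a_F$ axiomatically, i.e.\ establishing that the clean configuration equals the weight-$a_F y$ source: this is where the $1$-arrow relation of Lemma~\ref{lemma:1-arrow-af} must enter, presumably after temporarily detaching $w'$'s outgoing edges so that it behaves as a sink of a $1$-arrow graph (using Edge Swap against its twin $w$, or an auxiliary $1$-arrow graph furnished by Locality), performing the conversion, and reattaching.

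In summary, the proof is a single induction whose inductive step is the source-push; the source-push decomposes into the Lemma~\ref{lemma:k-arrow}-style splitting of a source into single-edge sources (routine) and the absorption of each single-edge source into its target's weight (delicate). I expect the absorption to demand by far the most care, since it is the only place that must simultaneously coordinate Node Redirect (to create and later consume an out-twin), the $1$-arrow relation (to produce the factor $a_F$), Edge Swap and Locality, and verifying that these combine to alter exactly one baseline weight by exactly $a_F\cdot y$ --- and nothing else --- is the crux of the whole lemma.
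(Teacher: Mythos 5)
Your proposal is correct in spirit but takes a genuinely different route from the paper's proof. The paper does not peel off sources: it inducts on the number of predecessors of a single node $v$ and verifies the PageRank recursion~\eqref{eq:pr:main} for $F$ at that node only. Because only $F_v$ and its predecessors' values enter that equation, the paper may assume without loss of generality (by Edge Deletion, since $v \notin S_v(G)$) that $v$ is a sink; it then splits $v$ by a reverse Node Redirect into per-edge zero-weight sinks, and computes each such sink's value by adding a disjoint $k$-arrow graph whose source has the same centrality and out-degree as the relevant predecessor and performing one Edge Swap, landing the sink inside a $k$-arrow component (Lemmas~\ref{lemma:locality}, \ref{lemma:source-node}, \ref{lemma:k-arrow}). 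That per-node strategy never has to preserve downstream values, which is exactly the burden your global ``source-push'' takes on: you must show the rewrite fixes the centrality of \emph{every} remaining node, a strictly stronger invariance. Your route buys a single clean induction on the number of nodes and a reusable rewrite; the paper's buys that the hardest step of your plan simply never arises, because it can discard the analyzed node's out-edges outright.

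Concerning your crux, of the mechanisms you float, two would fail and one works. ``Edge Swap against its twin $w$'' is unavailable: Edge Swap requires equal centralities of the two edge-starts, and in general $F_w \neq F_{w'}$ (that $w'$'s in-flow comes only from $p$ is the whole point of the split). ``Temporarily detaching $w'$'s outgoing edges and reattaching'' is not licensed either: Edge Deletion says nothing about successors of $w'$, whose values genuinely change upon deletion, so no axiom restores them on reattachment. The auxiliary-component option does close the gap, as follows. First, $F_{w'} = a_F \cdot c_F \cdot y$ already holds in the split graph: by Edge Deletion, $w'$'s own out-edges do not affect $F_{w'}$, after which $\{p,w'\}$ is a $1$-arrow component and Locality with Lemma~\ref{lemma:1-arrow} applies. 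Now add a disjoint $1$-arrow graph with source $q$ of weight $a_F \cdot y$ and sink $z$, and use Edge Multiplication to give $q$ exactly $r = \deg^+_{w'}$ parallel edges to $z$; then $F_q = c_F \cdot a_F y = F_{w'}$ and the out-degrees match, so $r$ successive Edge Swaps hand all of $w'$'s outgoing edges to $q$ while sending $w'$'s edges to $z$. The chain $p \to w' \to z$ is now an isolated component, removable by Edge Deletion and Node Deletion, while $q$ has become an out-twin of $w$ that is a source of weight $a_F y$; the forward Node Redirect $R_{q \rightarrow w}$ then deposits exactly $a_F y$ onto $\bs(w)$ and yields precisely the source-pushed graph, with no fresh nodes surviving (so no appeal to isomorphism-invariance, which is not among the axioms, is needed). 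The bookkeeping also balances: the reverse redirect decreased $F_w$ by $a_F c_F y$ and the forward one increases it by $c_F \cdot a_F y$, and choosing the auxiliary weight $a_F y$ (rather than dividing a centrality by $c_F$) means the degenerate case $c_F = 0$ requires no special treatment. With this completion, your induction goes through.
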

\begin{proof}
We will use the induction on the number of predecessors of a node in a graph.
If $P_v(G) = \emptyset$, then node $v$ is a source in $G$, hence from Lemma~\ref{lemma:source-node} (Source Node) we have $F_v(G,\bs) = c_F \cdot \bs(v) = c_F \cdot PR^{a_F}_v(G,\bs)$.

Take a graph, $(G,\bs)$, and a node, $v$, with non-empty set of predecessors: $P_v(G) \neq \emptyset$.
Let $u$ be an arbitrary predecessor of $v$.
Clearly, $P_u(G) \subseteq P_v(G)$, but since graph has no cycles also $u \not \in P_u(G)$.
This implies that $u$ has less predecessor than $v$, so from the inductive assumption, we get that $F_u(G,\bs) = c_F \cdot PR_u^{a_F}(G,\bs)$ for every $u \in P_v(G)$.
Thus, to show that $F_v(G,\bs) = c_F \cdot PR_v^{a_F}(G,\bs)$, based on PageRank recursive equation \eqref{eq:pr:main} it is enough to prove that:
\begin{equation}\label{eq:no-cycles:1}
F_v(G,\bs) = a_F \cdot \left( \sum_{(u,v) \in \Gamma_v^-(G)} \frac{F_u(G,\bs)}{\deg^+_u(G)}\right) + c_F \cdot \bs(v).
\end{equation}
Note that from Edge Deletion and the fact that $v$ is not a predecessor of itself nor its direct predecessors, we know that outgoing edges of node $v$ does not affect $F_v(G)$, $\Gamma_v^-(G)$, nor $F_u(G)$ and $\deg_u^+(G)$ for every $(u,v) \in \Gamma_v^-(G)$; hence, they do not affect equation~\eqref{eq:no-cycles:1}.
Consequently, in what follows, we will assume that $v$ has no outgoing edges, i.e., is a sink.

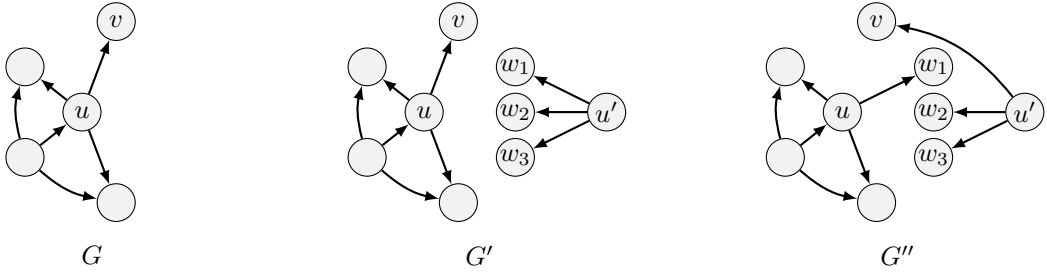
\begin{figure}[t]
\centering
\begin{tikzpicture}
  \def\x{0.6cm} 
  \def\y{0cm} 
  \def\arrdist{0.3cm}

  \tikzset{
    node_blank/.style={circle,draw,minimum size=0.5cm,inner sep=0, color=white}, 
    node/.style={circle,draw,minimum size=0.5cm,inner sep=0, fill = black!05}, 
    edge/.style={sloped,-latex,above,font=\footnotesize},
    el/.style={below,font=\footnotesize},
    operation/.style={sloped,>=stealth,above,font=\footnotesize},
    arrow/.style={draw, single arrow, minimum width = 0.9cm, minimum height=\y-6*\x+\s, fill=black!10},
    blank/.style={}
  } 
  
  \node[node] (A_1) at (\y+2*\x, 4*\x) {$v$}; 
  \node[node] (A_4) at (\y+2*\x, 0*\x) {};
  \node[node] (A_5) at (\y+0*\x, 1*\x) {}; 
  \node[node] (A_6) at (\y+0*\x, 3*\x) {}; 
  \node[node] (A_7) at (\y+1.25*\x, 2*\x) {$u$};
  \node[blank] (A_) at (\y+1.5*\x-0.02cm, -0.7cm) {$G$};
  \node[node_blank] (A_start) at (\y - \arrdist, 2*\x) {};
  \node[node_blank] (A_end) at (\y+4*\x + \arrdist, 2*\x) {};

  \path[->,draw,thick]
  (A_5) edge[edge, bend right=15]  (A_4)
  (A_5) edge[edge, bend left=15]  (A_6)
  (A_5) edge[edge]  (A_7)
  (A_7) edge[edge]  (A_1)
  (A_7) edge[edge]  (A_6)
  (A_7) edge[edge]  (A_4)
  ;
  
  \def\y{4.5cm} 
  
  \node[node] (A_1) at (\y+2*\x, 4*\x) {$v$}; 
  \node[node] (A_4) at (\y+2*\x, 0*\x) {};
  \node[node] (A_5) at (\y+0*\x, 1*\x) {}; 
  \node[node] (A_6) at (\y+0*\x, 3*\x) {}; 
  \node[node] (A_7) at (\y+1.25*\x, 2*\x) {$u$};
  \node[blank] (A_) at (\y+2.5*\x-0.02cm, -0.7cm) {$G'$};
  \node[node_blank] (A_start) at (\y - \arrdist, 2*\x) {};
  \node[node_blank] (A_end) at (\y+4*\x + \arrdist, 2*\x) {}; 
  \node[node] (A_u') at (\y+5.25*\x, 2*\x) {$u'$};
  \node[node] (A_w1) at (\y+3.25*\x, 3*\x) {$w_1$};
  \node[node] (A_w2) at (\y+3.25*\x, 2*\x) {$w_2$};
  \node[node] (A_w3) at (\y+3.25*\x, 1*\x) {$w_3$};

  \path[->,draw,thick]
  (A_5) edge[edge, bend right=15]  (A_4)
  (A_5) edge[edge, bend left=15]  (A_6)
  (A_5) edge[edge]  (A_7)
  (A_7) edge[edge]  (A_1)
  (A_7) edge[edge]  (A_6)
  (A_7) edge[edge]  (A_4)
  (A_u') edge[edge]  (A_w1)
  (A_u') edge[edge]  (A_w2)
  (A_u') edge[edge]  (A_w3)
  ;
  
  \def\y{10cm} 
  
  \node[node] (A_1) at (\y+2*\x, 4*\x) {$v$}; 
  \node[node] (A_4) at (\y+2*\x, 0*\x) {};
  \node[node] (A_5) at (\y+0*\x, 1*\x) {}; 
  \node[node] (A_6) at (\y+0*\x, 3*\x) {}; 
  \node[node] (A_7) at (\y+1.25*\x, 2*\x) {$u$};
  \node[blank] (A_) at (\y+2.5*\x-0.02cm, -0.7cm) {$G''$};
  \node[node_blank] (A_start) at (\y - \arrdist, 2*\x) {};
  \node[node_blank] (A_end) at (\y+4*\x + \arrdist, 2*\x) {}; 
  \node[node] (A_u') at (\y+5.25*\x, 2*\x) {$u'$};
  \node[node] (A_w1) at (\y+3.25*\x, 3*\x) {$w_1$};
  \node[node] (A_w2) at (\y+3.25*\x, 2*\x) {$w_2$};
  \node[node] (A_w3) at (\y+3.25*\x, 1*\x) {$w_3$};

  \path[->,draw,thick]
  (A_5) edge[edge, bend right=15]  (A_4)
  (A_5) edge[edge, bend left=15]  (A_6)
  (A_5) edge[edge]  (A_7)
  (A_u') edge[edge, bend right=20]  (A_1)
  (A_7) edge[edge]  (A_6)
  (A_7) edge[edge]  (A_4)
  (A_7) edge[edge]  (A_w1)
  (A_u') edge[edge]  (A_w2)
  (A_u') edge[edge]  (A_w3)
  ;
  
\end{tikzpicture}
\caption{Example graphs illustrating the first part of the proof of Lemma~\ref{lemma:no-cycles} for $k=3$.}
\label{figure:lemma:no-cycles:1}
\end{figure}

First, let us assume that $v$ has a zero weight $\bs(v) = 0$ and only one incoming edge: $(u,v)$ (see Fig.~\ref{figure:lemma:no-cycles:1} for illustration).
Let us denote the number of outgoing edges of $u$ by $k$ and its PageRank by $x$, i.e., $k = \deg_u^+(G)$ and $x = PR_{u}^{a_F}(G,\bs)$.
From the inductive assumption we know that $F_u(G,\bs) = c_F \cdot x$.
Hence, to prove that equation~\eqref{eq:no-cycles:1} holds, we need to show that $F_v(G,\bs) = a_F \cdot c_F \cdot x/k$.
This, combined with Lemma~\ref{lemma:k-arrow} is equivalent to proving that the centrality of $v$ is equal to the centrality of a sink in a $k$-out star in which the source has weight $x$.
To prove this, consider adding such a graph to $(G,\bs)$:
\[ (G',\bs') = ((V',E'), \bs') = (G,\bs) + ((\{u',w_1,\dots,w_k\},\lBrace(u',w_1),\dots,(u',w_k)\rBrace), [x,0,\dots,0])). \]
From Locality (Lemma~\ref{lemma:locality}), we know that the centrality of $v$ did not change: $F_v(G,\bs) = F_v(G',\bs')$.
Let us turn our attention to node $u$ and the source of a $k$-out star: $u'$.
For $u$, from Locality (Lemma~\ref{lemma:locality}) we know that $F_u(G',\bs') = F_u(G,\bs) = c_F \cdot x$.
For $u'$, from Source Node (Lemma~\ref{lemma:source-node}) we have that $F_{u'}(G',\bs') = c_F \cdot x$.
Hence, $u$ and $u'$ have equal centralities and equal numbers of outgoing edges.
As a result, from Edge Swap we know that we can replace edges $(u,v)$ and $(u',w_1)$ with edges $(u,w_1)$ and $(u',v)$ and centralities in the graph will not change.
Such a swap results in a graph $(G'',\bs'') = ((V', E' - \lBrace (u,v), (u',w_1) \rBrace \sqcup \lBrace (u,w_1), (u',v) \rBrace), \bs')$ in which $v$ is a sink in a component of the graph which is a $k$-out star (see Fig.~\ref{figure:lemma:no-cycles:1}).
In this $k$-out star the source has weight $x$, hence node $v$ has the centrality $a_F \cdot c_F \cdot x/k$.
Formally, we proved:
\begin{equation}\label{eq:no-cycles:2}
F_v(G,\bs) = F_v(G',\bs') = F_v(G'',\bs'') = a_F \cdot c_F \cdot x/k = a_F \cdot \frac{F_u(G,\bs)}{\deg^+_u(G)},
\end{equation}
where the consecutive equalities comes from Locality (Lemma~\ref{lemma:locality}), Edge Swap and Lemma~\ref{lemma:k-arrow} combined with Locality and the definition of constants $x$ and $k$.

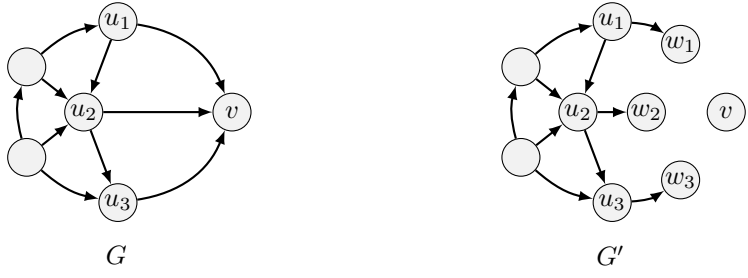
\begin{figure}[t]
\centering
\begin{tikzpicture}
  \def\x{0.6cm} 
  \def\y{0cm} 
  \def\arrdist{0.3cm}

  \tikzset{
    node_blank/.style={circle,draw,minimum size=0.5cm,inner sep=0, color=white}, 
    node/.style={circle,draw,minimum size=0.5cm,inner sep=0, fill = black!05}, 
    edge/.style={sloped,-latex,above,font=\footnotesize},
    el/.style={below,font=\footnotesize},
    operation/.style={sloped,>=stealth,above,font=\footnotesize},
    arrow/.style={draw, single arrow, minimum width = 0.9cm, minimum height=\y-6*\x+\s, fill=black!10},
    blank/.style={}
  } 
  
  \node[node] (u0) at (\y+0*\x, 1*\x) {};
  \node[node] (u_) at (\y+0*\x, 3*\x) {};
  \node[node] (u1) at (\y+2*\x, 4*\x) {$u_1$};
  \node[node] (u2) at (\y+1.25*\x, 2*\x) {$u_2$}; 
  \node[node] (u3) at (\y+2*\x, 0*\x) {$u_3$};
  \node[node] (v1) at (\y+4.5*\x, 2*\x) {$v$}; 
  \node[blank] (A_) at (\y+2*\x-0.02cm, -0.7cm) {$G$};

  \path[->,draw,thick]
  (u_) edge[edge, bend left=15]  (u1)
  (u_) edge[edge] (u2)
  (u0) edge[edge, bend left=15]  (u_)
  (u0) edge[edge]  (u2)
  (u0) edge[edge, bend right=15]  (u3)
  (u1) edge[edge]  (u2)
  (u2) edge[edge]  (u3)
  (u1) edge[edge, bend left=30]  (v1)
  (u2) edge[edge]  (v1)
  (u3) edge[edge, bend right=30]  (v1)
  ;
  
  \def\y{6.5cm} 
  
  \node[node] (u0) at (\y+0*\x, 1*\x) {};
  \node[node] (u_) at (\y+0*\x, 3*\x) {};
  \node[node] (u1) at (\y+2*\x, 4*\x) {$u_1$};
  \node[node] (u2) at (\y+1.25*\x, 2*\x) {$u_2$}; 
  \node[node] (u3) at (\y+2*\x, 0*\x) {$u_3$};
  \node[node] (v1) at (\y+4.5*\x, 2*\x) {$v$}; 
  \node[blank] (A_) at (\y+2*\x-0.02cm, -0.7cm) {$G'$};
  \node[node] (w1) at (\y+3.5*\x, 3.5*\x) {$w_1$};
  \node[node] (w2) at (\y+2.75*\x, 2*\x) {$w_2$}; 
  \node[node] (w3) at (\y+3.5*\x, 0.5*\x) {$w_3$};

  \path[->,draw,thick]
  (u_) edge[edge, bend left=15]  (u1)
  (u_) edge[edge] (u2)
  (u0) edge[edge, bend left=15]  (u_)
  (u0) edge[edge]  (u2)
  (u0) edge[edge, bend right=15]  (u3)
  (u1) edge[edge]  (u2)
  (u2) edge[edge]  (u3)
  (u1) edge[edge, bend left=15]  (w1)
  (u2) edge[edge]  (w2)
  (u3) edge[edge, bend right=15]  (w3)
  ;
  
\end{tikzpicture}
\caption{Example graphs illustrating the second part of the proof of Lemma~\ref{lemma:no-cycles} for $m=3$.}
\label{figure:lemma:no-cycles:2}
\end{figure}

Now, assume that $v$ has $m$ ($m \ge 1$) incoming edges and possibly non-zero weight.
In such a case let us split node $v$ into $m+1$ separate nodes, one with the original weight of $v$ and no incoming edges and $m$ nodes, each with zero weight and one incoming edge (see Fig.~\ref{figure:lemma:no-cycles:2} for illustration).
Formally, assume $\Gamma_v^-(G) = \lBrace (u_1,v), \dots, (u_m, v)\}$ (note that $u_i$ may not be pairwise different) and consider adding nodes $w_1,\dots,w_m$ to the graph and replacing edges $\Gamma_v^-(G)$ with $(u_1,w_1), \dots, (u_m,w_m)$. 
Let $(G',\bs')$ be the resulting graph:
\[ (G',\bs') = ((V \cup \{w_1,\dots,w_m\}, E - \Gamma_v^-(G) \sqcup \lBrace (u_1,w_1), \dots, (u_m, w_m) \rBrace), \bs'), \]
where $\bs'(u)=\bs(u)$ for every $u \in V$ and $\bs'(u) = 0$, otherwise.
Then, we clearly have that $(G,\bs) = R_{w_m \rightarrow v}\left(\dots\left(R_{w_1 \rightarrow v}(G',\bs')\right)\right)$. 
Moreover, node $v$ and nodes $w_1,\dots,w_m$ are out-twins in $(G',\bs')$ (they are all sinks).
Hence, from Node Redirect we get that the centrality of $v$ is the sum of centralities of $v$ and $w_1,\dots,w_m$ in $(G',\bs')$:
\begin{equation}\label{eq:no-cycles:3}
F_v(G,\bs) = F_v(G',\bs') + F_{w_1}(G',\bs') + \dots + F_{w_m}(G',\bs').
\end{equation}
Furthermore, from Node Redirect we know that centralities of nodes other than $v$ did not change: $F_u(G,\bs) = F_u(G',\bs')$.
Since the out-degrees of these nodes did not change either, from our analysis of nodes with a single edge and equation~\eqref{eq:no-cycles:2}, in particular, we get that:
\begin{equation}\label{eq:no-cycles:4}
\quad F_{w_i}(G',\bs') = a_F \cdot \frac{F_{u_i}(G',\bs')}{\deg_{u_i}^+(G')} = a_F \cdot \frac{F_{u_i}(G,\bs)}{\deg_{u_i}^+(G)}.
\end{equation}
Finally, Source Node (Lemma~\ref{lemma:source-node}) implies that $F_v(G',\bs') = c_F \cdot \bs(v)$ (recall that $v$ has no incoming edges in $(G',\bs')$).
This combined with Equation~\eqref{eq:no-cycles:3} and \eqref{eq:no-cycles:4} proves Equation~\eqref{eq:no-cycles:1}.
\end{proof}


We are now ready to prove that in every graph centralities are equal to PageRank with the decay factor $a_F$ multiplied by $c_F$.

\begin{lemma}\label{lemma:cycles}
If a centrality measure $F$ satisfies Node Deletion, Edge Deletion, Edge Multiplication, Edge Swap, and Node Redirect, then for every graph $(G,\bs)$ it holds that $F(G,\bs) = c_F \cdot PR^{a_F}(G,\bs)$.
\end{lemma}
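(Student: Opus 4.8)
The plan is to show that $F$ itself obeys the PageRank recursive equation~\eqref{eq:pr:main} with decay factor $a_F$, up to the factor $c_F$; that is, for every graph $(G,\bs)$ and every node $v$,
\begin{equation}\label{eq:plan:rec}
F_v(G,\bs) = c_F\,\bs(v) + a_F \sum_{(u,v)\in\Gamma^-_v(G)} \frac{F_u(G,\bs)}{\deg^+_u(G)}.
\end{equation}
Once~\eqref{eq:plan:rec} holds for all nodes, $F/c_F$ (when $c_F>0$) is a solution of the very equation that \emph{defines} $PR^{a_F}$; since $a_F\in[0,1)$ that solution is unique, so $F = c_F\cdot PR^{a_F}$. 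The degenerate case $c_F=0$ forces $a_F=0$ by Lemma~\ref{lemma:1-arrow-af}, and is dispatched by a short separate argument giving $F\equiv 0 = c_F\cdot PR^{a_F}$.

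The difficulty is that the inductive scheme of Lemma~\ref{lemma:no-cycles} collapses once cycles are present: a node may be its own predecessor, so there is no base case, and one can no longer reduce to a sink, since deleting the outgoing edges of a node lying on a cycle is not licensed by Edge Deletion (such a node is its own successor). The key idea I would use is to remove the target node $v$ from its own successor set by an operation that preserves $F_v$, namely by rerouting every incoming edge of $v$ so that it originates from a freshly added source whose centrality matches that of the original tail. Concretely, for each incoming edge $(u_i,v)\in\Gamma^-_v(G)$ I add, disjointly, a $\deg^+_{u_i}(G)$-arrow graph with a fresh source $u_i'$ of weight $F_{u_i}(G,\bs)/c_F$; by Source Node (Lemma~\ref{lemma:source-node}) this source then has centrality exactly $F_{u_i}(G,\bs)$ and out-degree $\deg^+_{u_i}(G)$, and by Locality (Lemma~\ref{lemma:locality}) the centralities inside $G$ are untouched. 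Since $u_i$ and $u_i'$ now share centrality and out-degree, Edge Swap lets me exchange the ends of $(u_i,v)$ and $(u_i',w^i_1)$, replacing them by $(u_i,w^i_1)$ and $(u_i',v)$ without changing any centrality.

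After performing this swap for every incoming edge, all in-neighbours of $v$ are fresh sources, which have no incoming edges; hence no cycle can pass through $v$, i.e.\ $v\notin S_v$ in the transformed graph. Edge Deletion then strips all outgoing edges of $v$ while preserving $F_v$, turning $v$ into a sink. At this point the sources $\{u_i'\}$, the node $v$, and the arrow-sinks form a \emph{disjoint, acyclic} component (every source points only to sinks, and $v$ is fed only by these sources), so Locality together with Lemma~\ref{lemma:no-cycles} evaluates $F_v$ inside that component. Reading off the acyclic value and substituting $F_{u_i'} = F_{u_i}(G,\bs)$ and $\deg^+_{u_i'}=\deg^+_{u_i}(G)$ gives precisely~\eqref{eq:plan:rec}.

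I expect the rerouting device to be the crux, and a couple of points deserve care. A self-loop $(v,v)$ is handled uniformly: its fresh source carries centrality $F_v(G,\bs)$ and contributes the self-referential term $a_F F_v(G,\bs)/\deg^+_v(G)$, which is consistent precisely because~\eqref{eq:plan:rec} is solved \emph{simultaneously} for all nodes rather than by recursion on predecessors. One must also check that deleting the outgoing edges of $v$ leaves the fresh sources' centralities and out-degrees intact, which holds because those sources are unreachable from $v$ and hence lie outside $S_v$. The main obstacle, in short, is manufacturing a centrality-preserving transformation that severs all cycles through $v$; the source-rerouting-via-Edge-Swap construction accomplishes exactly this, after which the already-established acyclic case and the uniqueness of the PageRank fixed point complete the argument.
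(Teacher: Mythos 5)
Your main construction is correct, and it takes a genuinely different route from the paper. The paper proves Lemma~\ref{lemma:cycles} by induction on the number of cycles: it picks a node $w$ on a cycle, splits \emph{all outgoing} edges of $w$ through two fresh nodes $s,t$, where $s$ is given weight $PR^{a_F}_w(G,\bs)$ --- a quantity well defined a priori --- so the surgered graph has strictly fewer cycles; the inductive hypothesis then certifies $F_s=F_w$ there, Edge Swap undoes the surgery, and Locality finishes. You instead avoid any induction on cycles: you reroute all \emph{incoming} edges of the target node $v$ from fresh arrow-graph sources whose weights you set to $F_{u_i}(G,\bs)/c_F$, so that Source Node (Lemma~\ref{lemma:source-node}) makes their centralities equal to those of the original tails; Edge Swap transfers the edges (each swap preserves all centralities and out-degrees, so the sequence of swaps stays licensed), Edge Deletion then legitimately sinks $v$ (your check that $v\notin S_v$ afterwards is the right one), and Locality plus the acyclic case (Lemma~\ref{lemma:no-cycles}) evaluates $F_v$, yielding the fixed-point identity for $F$ at every node simultaneously; uniqueness of the solution of system~\eqref{eq:pr:main} then gives $F=c_F\cdot PR^{a_F}$. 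It is worth noting that your rerouting is essentially the paper's own device from the proof of Lemma~\ref{lemma:no-cycles} (there the auxiliary source has weight $x=PR^{a_F}_u(G,\bs)$ and one swaps $(u,v)$ with $(u',w_1)$), but by taking the weights from $F$ itself rather than from $PR$ you extend it to cyclic graphs and eliminate the induction; what the paper's choice buys in exchange is that it never divides by $c_F$. One cosmetic slip: after the swaps the first sink $w^i_1$ of each arrow graph is attached to $u_i$ and stays in the main component, so $v$'s disjoint acyclic component contains only $v$, the sources $u_i'$ and the \emph{remaining} sinks; this changes nothing in the evaluation since each $u_i'$ keeps out-degree $\deg^+_{u_i}(G)$.

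The one genuine loose end is the degenerate case $c_F=0$, which you assert is ``dispatched by a short separate argument'' without giving it. Your device cannot supply it: the weights $F_{u_i}(G,\bs)/c_F$ are undefined, and more fundamentally, when $c_F=0$ every source has centrality $0$ (Lemma~\ref{lemma:source-node}), so Edge Swap can never match a tail of a priori possibly positive centrality with a fresh source --- the very thing you would need is the conclusion $F\equiv 0$. The case is provable, but the natural argument is precisely a degenerate form of the paper's cycle induction: acyclic graphs are covered by Lemma~\ref{lemma:no-cycles}; for a cyclic graph pick $w$ on a cycle, split its outgoing edges through fresh $s$ (any weight, say $0$) and $t$, observe the new graph has fewer cycles so all its centralities vanish by the inductive hypothesis, in particular $F_s=0=F_w$, then Edge Swap back and Locality give $F\equiv 0$ on the original graph (with $a_F=0$ by Lemma~\ref{lemma:1-arrow-af}, so $c_F\cdot PR^{a_F}=0$ as required). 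So the gap is fixable in a few lines, but it requires exactly the induction idea your main argument was designed to avoid, and as written the claim that it is short and separate is unsupported; you should spell it out.
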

\begin{proof}
Take an arbitrary graph $G = (V,E)$ and weights $\bs$.
We will prove the thesis by induction on the number of cycles in $G$.
If there are no cycles in graph $G$, then the thesis follows from Lemma~\ref{lemma:no-cycles}.

Assume otherwise.
Fix node $w$ that belongs to at least one cycle and let $x_w$ be its PageRank in $(G,\bs)$: $x_w = PR^{a_F}_{w}(G, \bs)$.
Consider graph $(G',\bs')$ obtained from $(G,\bs)$ by adding two nodes, $s$ with weight $x_w$ and $t$ with weight $0$, and splitting each outgoing edge of $w$, $(w,w')$, into two edges: $(w,t)$ and $(s,w')$ (see Fig.~\ref{figure:lemma:cycles} for illustration). 
Formally:
\[ (G',\bs') = ((V \cup \{s,t\}, E - \Gamma^+_{w}(G) \sqcup \lBrace (w,t), (s,w') : (w,w') \in \Gamma^+_{w}(G) \rBrace), \bs'), \]
with $\bs'(v) = \bs(v)$ for every $v \in V$, $\bs'(s) = x_w$ and $\bs'(t) = 0$.
Observe that graph $(G',\bs')$ has less cycles than graph $(G,\bs)$: every cycle in the former graph is also a cycle in the later one, but the former graph does not contain cycles with $w$.
Hence, from the inductive assumption we know that
\begin{equation}\label{eq:cycles:1}
F_v(G',\bs') = c_F \cdot PR^{a_F}_v(G',\bs') \quad \mbox{ for every } v \in V.
\end{equation}
Thus, to show that $F(G,\bs) = c_F \cdot PR^{a_F}(G,\bs)$ we will prove that for every node $v \in V$ we have: (1) $PR^{a_F}_v(G,\bs) = PR^{a_F}_v(G',\bs')$ and (2) $F_v(G,\bs) = F_v(G',\bs')$.
We will prove the two equalities separately.

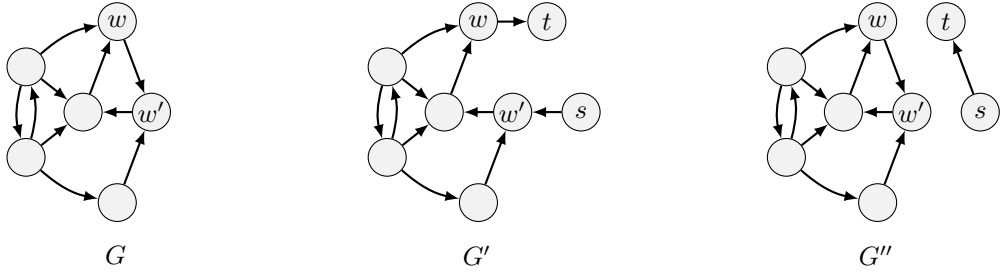
\begin{figure}[t]
\centering
\begin{tikzpicture}
  \def\x{0.6cm} 
  \def\y{0cm} 
  \def\arrdist{0.3cm}

  \tikzset{
    node_blank/.style={circle,draw,minimum size=0.5cm,inner sep=0, color=white}, 
    node/.style={circle,draw,minimum size=0.5cm,inner sep=0, fill = black!05}, 
    edge/.style={sloped,-latex,above,font=\footnotesize}, 
    el/.style={below,font=\footnotesize}, 
    operation/.style={sloped,>=stealth,above,font=\footnotesize},
    arrow/.style={draw, single arrow, minimum width = 0.9cm, minimum height=\y-6*\x+\s, fill=black!10},
    blank/.style={}
  } 
  \node[node] (A_1) at (\y+2*\x, 4*\x) {$w$}; 
  \node[node] (A_4) at (\y+2*\x, 0*\x) {};
  \node[node] (A_5) at (\y+0*\x, 1*\x) {}; 
  \node[node] (A_6) at (\y+0*\x, 3*\x) {}; 
  \node[node] (A_7) at (\y+1.25*\x, 2*\x) {}; 
  \node[node] (A_8) at (\y+2.75*\x, 2*\x) {$w'$};
  \node[blank] (A_) at (\y+2*\x-0.02cm, -0.7cm) {$G$};
  \node[node_blank] (A_start) at (\y - \arrdist, 2*\x) {};
  \node[node_blank] (A_end) at (\y+4*\x + \arrdist, 2*\x) {};

  \path[->,draw,thick]
  (A_1) edge[edge]  (A_8)
  (A_4) edge[edge]  (A_8)
  (A_5) edge[edge, bend right=15]  (A_4)
  (A_5) edge[edge, bend right=15]  (A_6)
  (A_5) edge[edge]  (A_7)
  (A_6) edge[edge, bend right=15]  (A_5)
  (A_6) edge[edge, bend left=15]  (A_1)
  (A_6) edge[edge]  (A_7)
  (A_7) edge[edge]  (A_1)
  (A_8) edge[edge]  (A_7)
  ;
  
  \def\y{4.75cm} 
  
  \node[node] (A_1) at (\y+2*\x, 4*\x) {$w$}; 
  \node[node] (A_4) at (\y+2*\x, 0*\x) {};
  \node[node] (A_5) at (\y+0*\x, 1*\x) {}; 
  \node[node] (A_6) at (\y+0*\x, 3*\x) {}; 
  \node[node] (A_7) at (\y+1.25*\x, 2*\x) {}; 
  \node[node] (A_8) at (\y+2.75*\x, 2*\x) {$w'$};
  \node[blank] (A_) at (\y+2*\x-0.02cm, -0.7cm) {$G'$};
  \node[node_blank] (B_start) at (\y - \arrdist, 2*\x) {};
  \node[node_blank] (B_end) at (\y+4*\x + \arrdist, 2*\x) {};
  \node[node] (A_t) at (\y+3.5*\x, 4*\x) {$t$}; 
  \node[node] (A_s) at (\y+4.25*\x, 2*\x) {$s$};

  \path[->,draw,thick]
  (A_1) edge[edge]  (A_t)
  (A_s) edge[edge]  (A_8)
  (A_4) edge[edge]  (A_8)
  (A_5) edge[edge, bend right=15]  (A_4)
  (A_5) edge[edge, bend right=15]  (A_6)
  (A_5) edge[edge]  (A_7)
  (A_6) edge[edge, bend right=15]  (A_5)
  (A_6) edge[edge, bend left=15]  (A_1)
  (A_6) edge[edge]  (A_7)
  (A_7) edge[edge]  (A_1)
  (A_8) edge[edge]  (A_7)
  ;
  
  \def\y{10cm} 
  
  \node[node] (A_1) at (\y+2*\x, 4*\x) {$w$}; 
  \node[node] (A_4) at (\y+2*\x, 0*\x) {};
  \node[node] (A_5) at (\y+0*\x, 1*\x) {}; 
  \node[node] (A_6) at (\y+0*\x, 3*\x) {}; 
  \node[node] (A_7) at (\y+1.25*\x, 2*\x) {}; 
  \node[node] (A_8) at (\y+2.75*\x, 2*\x) {$w'$};
  \node[blank] (A_) at (\y+2*\x-0.02cm, -0.7cm) {$G''$};
  \node[node_blank] (B_start) at (\y - \arrdist, 2*\x) {};
  \node[node_blank] (B_end) at (\y+4*\x + \arrdist, 2*\x) {};
  \node[node] (A_t) at (\y+3.5*\x, 4*\x) {$t$}; 
  \node[node] (A_s) at (\y+4.25*\x, 2*\x) {$s$};

  \path[->,draw,thick]
  (A_1) edge[edge]  (A_8)
  (A_4) edge[edge]  (A_8)
  (A_5) edge[edge, bend right=15]  (A_4)
  (A_5) edge[edge, bend right=15]  (A_6)
  (A_5) edge[edge]  (A_7)
  (A_6) edge[edge, bend right=15]  (A_5)
  (A_6) edge[edge, bend left=15]  (A_1)
  (A_6) edge[edge]  (A_7)
  (A_7) edge[edge]  (A_1)
  (A_8) edge[edge]  (A_7)
  (A_s) edge[edge]  (A_t)
  ;

\end{tikzpicture}
\caption{Example graphs illustrating the proof of Lemma~\ref{lemma:cycles}.}
\label{figure:lemma:cycles}
\end{figure}

(1): 
We will show that $PR^{a_F}_v(G,\bs) = PR^{a_F}_v(G',\bs')$ holds for every $v \in V$.
To this end, let us define $x_v = PR^{a_F}_v(G,\bs)$ for every $v \in V$ (note that this is consistent with $x_w = PR^{a_F}_w(G,\bs)$), $x_s = x_w$ and $x_t = a_F \cdot x_w$. 
We will prove that $(x_v)_{v \in V \cup \{s,t\}}$ satisfies the system of PageRank recursive equations~\eqref{eq:pr:main} for graph $(G',\bs')$, i.e., for every $v \in V \cup \{s,t\}$ we have:
\begin{equation}\label{eq:cycles:2}
x_v = a_F \cdot \left( \sum_{(u,v) \in \Gamma^-_v(G')} \frac{x_u}{\deg^+_u(G')} \right) + \bs'(v).
\end{equation}
Since this system of equations has exactly one solution which is $PR^{a_F}(G',\bs')$, in this way we will show that $PR^{a_F}_v(G',\bs') = x_v = PR^{a_F}_v(G,\bs)$ for every $v \in V$.

First, consider $v \in \{s,t\}$.
Since node $s$ has no incoming edges, equation~\eqref{eq:cycles:2} for $v=s$ simplifies to $x_s = \bs'(s) = x_w$ which is equivalent to the definition of $x_s$.
Node $t$ has a zero weight and incoming edges only from $w$; on the other hand, all outgoing edges of $w$ goes to $t$ which implies $|\Gamma_t^-(G')| = \deg_w^+(G')$. Hence, equation~\eqref{eq:cycles:2} for $v=t$ simplifies to $x_t = a_F \cdot x_w$ which is equivalent to   the definition of $x_t$.

Consider $v \in V$.
From PageRank recursive equation for graph $(G,\bs)$ we have:
\begin{equation}\label{eq:cycles:3}
x_v = a_F \cdot \left( \sum_{(u,v) \in \Gamma^-_v(G)} \frac{x_u}{\deg^+_u(G)} \right) + \bs(v).
\end{equation}
By the construction of $(G',\bs')$ we know that out-degree and weight of every node $u \in V$ is the same in $(G,\bs)$ as in $(G',\bs')$.
Moreover, if $v \in V$ is not a direct successor of $w$ in $(G,\bs)$, then it has the same set of incoming edges in both graphs.
Hence, replacing in equation~\eqref{eq:cycles:3} $\deg^+_u(G), \bs(v), \Gamma^-_v(G)$ with $\deg^+_u(G'), \bs'(v), \Gamma^-_v(G')$ proves equation~\eqref{eq:cycles:2}. 

Now, assume $v \in S_w^1(G)$.
From PageRank recursive equation~\eqref{eq:pr:main} for $(G,\bs)$ by separating edges from $w$ we get:
\begin{equation}\label{eq:cycles:4} 
x_v = a_F \cdot \left( \sum_{(u,v) \in \Gamma^-_v(G), u \neq w} \frac{x_u}{\deg^+_u(G)} + \sum_{(w,v) \in \Gamma^-_v(G)} \frac{x_w}{\deg^+_w(G)}\right) + \bs(v).
\end{equation}
Note that sets of incoming edges of node $v$ in graphs $G$ and $G'$ are equal except for edges from $w$ and $s$: all edges from $w$ in $G$ are replaced by the edges from $s$ in $G'$.
However, we know that $w$ in $G$ has the same out-degree as $s$ in $G'$: $\deg_w^+(G) = \deg_s^+(G')$.
Also, by the definition of $x_s$, we have $x_s = x_w$. 
Hence, by transforming equation~\eqref{eq:cycles:4} we get:
\[
x_v = a_F \cdot \left( \sum_{(u,v) \in \Gamma^-_v(G'), u \neq s} \frac{x_u}{\deg^+_u(G')} + \sum_{(s,v) \in \Gamma^-_v(G')} \frac{x_s} {\deg^+_s(G')}\right) + \bs'(v),
\]
which proves equation~\eqref{eq:cycles:2}. 
This concludes the proof that $PR^{a_F}_v(G,\bs) = PR^{a_F}_v(G',\bs')$ for every $v \in V$.
Note that we obtained that nodes $s$ and $w$ have equal PageRank in $(G',\bs')$:
\begin{equation}\label{eq:cycles:5}
PR^{a_F}_s(G',\bs') = x_s = x_w = PR^{a_F}_w(G,\bs) = PR^{a_F}_w(G',\bs').
\end{equation}
We will use this in the second part of the proof.

(2): 
It remains to prove that $F_v(G,\bs) = F_v(G',\bs')$ for every $v \in V$.
To this end, observe that nodes $s$ and $w$ have equal centralities in $(G',\bs')$: from equation~\eqref{eq:cycles:5} and the inductive assumption we have $F_s(G',\bs') = F_w(G',\bs')$.
Moreover, both nodes have equal out-degrees. 
Hence, from Edge Swap we know that swapping ends of outgoing edges of $w$ and $s$ does not affect any centralities in the graph. 
By sequentially swapping the ends of all outgoing edges of $w$ with the ends of the outgoing edges of $s$ we get the following graph (see Fig.~\ref{figure:lemma:cycles}):
\[ (G'',\bs'') = (G,\bs) + ((\{s,t\}, \deg^+_w(G) \cdot \lBrace (s,t) \rBrace ), [x_w, 0]), \]
and from Edge Swap we have $F_v(G'',\bs'') = F_v(G',\bs')$ for every $v \in V$.
On the other hand, from Locality we have $F_v(G'',\bs'') = F_v(G,\bs)$.
Therefore, $F_v(G,\bs) = F_v(G',\bs')$ for every $v \in V$ which concludes the proof.
\end{proof}


So far, we have proved that if $F$ satisfies Node Deletion, Edge Deletion, Edge Multiplication, Edge Swap and Node Redirect, then $F(G,\bs) = c_F \cdot PR^{a_F}(G,\bs)$ for every graph $(G,\bs)$.
In the last lemma of Part 1 of the proof, we show that if $F$ also satisfies Baseline, then $c_F = 1$; hence, $F(G,\bs) = PR^{a_F}(G,\bs)$.

\begin{lemma}\label{lemma:baseline}
If centrality measure $F$ satisfies Node Deletion, Edge Deletion, Edge Multiplication, Edge Swap, Node Redirect and Baseline, then for every graph $(G, \bs)$ it holds that $F(G, \bs) = PR^{a_F}(G, \bs)$.
\end{lemma}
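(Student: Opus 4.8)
The plan is to combine the already-established Lemma~\ref{lemma:cycles} with the single remaining axiom, Baseline, together with the explicit formula for $c_F$ recorded in Lemma~\ref{lemma:source-node}. By Lemma~\ref{lemma:cycles} I already know that any centrality measure satisfying the five invariance axioms equals $c_F \cdot PR^{a_F}$ on every graph. Since $F$ in particular satisfies those five axioms, the only thing left to verify is that adding Baseline pins the multiplicative constant $c_F$ down to exactly $1$.

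First I would recall that Lemma~\ref{lemma:source-node} gives the concrete description $c_F = F_w((\{w\},\emptyset),[1])$ for an arbitrary node $w$; that is, $c_F$ is simply the centrality assigned by $F$ to a single isolated node of weight $1$. The key observation is that this one-node graph is precisely the kind of configuration that Baseline constrains. I would then apply Baseline to the node $w$ in the graph $((\{w\},\emptyset),[1])$: since $w$ has no incident edges it is isolated, so Baseline yields $F_w((\{w\},\emptyset),[1]) = \bs(w) = 1$. Comparing this with the formula from Lemma~\ref{lemma:source-node} gives $c_F = 1$ immediately.

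Finally, substituting $c_F = 1$ into the conclusion of Lemma~\ref{lemma:cycles} gives $F(G,\bs) = PR^{a_F}(G,\bs)$ for every graph $(G,\bs)$, which is the thesis. There is essentially no obstacle here: the whole weight of the characterization is already carried by Lemmas~\ref{lemma:source-node}--\ref{lemma:cycles}, and Baseline does nothing more than fix the overall scale. The only point requiring (minimal) care is to confirm that the one-node, weight-$1$ graph is a legitimate instance both for the defining formula of $c_F$ and for the hypothesis of Baseline, which it trivially is since its single node is isolated.
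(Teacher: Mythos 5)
Your proof is correct and follows essentially the same route as the paper: invoke Lemma~\ref{lemma:cycles} for $F = c_F \cdot PR^{a_F}$, note from Lemma~\ref{lemma:source-node} that $c_F = F_w((\{w\},\emptyset),[1])$, and apply Baseline to the isolated weight-$1$ node to force $c_F = 1$. No gaps; this matches the paper's argument.
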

\begin{proof}
If centrality measure $F$ satisfies Node Deletion, Edge Deletion, Edge Multiplication, Edge Swap, and Node Redirect, then from Lemma~\ref{lemma:cycles} we know that $F_v(G,\bs) = c_F \cdot PR^{a_F}_v(G,\bs)$ for $c_F$ and $a_F$ defined as follows: $c_F = F_w((\{w\}, \emptyset), [1])$ and $a_F = F_{w}((\{w',w\}, \lBrace (w',w) \rBrace), [1,0])/c_F$ if $c_F > 0$ and $a_F = 0$, otherwise, for arbitrary nodes $w,w'$.
Now, from Baseline we have that $F_w((\{w\},\emptyset),[1]) = 1$, which implies $c_F = 1$ and concludes the proof.
\end{proof}


\subsection{Independence of axioms}\label{section:proof-3}

In this section, we show that all six axioms used in our characterization of PageRank are necessary, i.e., if we remove one of them, then the remaining axioms will be satisfied also by some centrality other than PageRank.
In other words, we will show that no axiom is implied by the others.
To this end, in the following theorem for each axiom we show that there exists a centrality, other than PageRank, that satisfies all other axioms.

\begin{theorem}\label{theorem:independence} (Independence of Axioms)
From six axioms: Node Deletion, Edge Deletion, Edge Multiplication, Edge Swap, Node Redirect, and Baseline, none is implied by a combination of five others.
Specifically:
\begin{itemize}
\item A centrality measure $F$ defined for every graph $(G,\bs)$ and node $v$ as follows: 
\[
F_v(G,\bs) = PR^{a(G,\bs)}_v(G,\bs), \quad \quad \mbox{ where } a(G,\bs) = 1/(2 + \bs(G))
\]
satisfies Edge Deletion, Edge Multiplication, Edge Swap, Node Redirect and Baseline, but does not satisfy Node~Deletion.
\item A centrality measure $F$ defined for every graph $(G,\bs)$ and node $v$ and an arbitrary $a \in (0,1)$ as follows: 
\[
F^a_v(G,\bs) = 
\begin{cases}
    2 \cdot PR^a_v(G,\bs) - \bs(v) & \mbox{if } v \mbox{ is a sink,}\\
    PR^a_v(G,\bs) & \mbox{otherwise}
\end{cases}
\]
satisfies Node Deletion, Edge Multiplication, Edge Swap, Node Redirect and Baseline, but does not satisfy Edge~Deletion.
\item A centrality measure $F$ defined for every graph $(G,\bs)$ and node $v$ and an arbitrary $a \in (0,1)$ as follows: 
\[
F^a_v(G,\bs) = a \cdot \left( \sum_{(u,v) \in \Gamma^-_v(G)} \frac{F^a_u(G,\bs)}{\deg_u^+(G) + 1} \right) + \bs(v)
\]
satisfies Node Deletion, Edge Deletion, Edge Swap, Node Redirect and Baseline, but does not satisfy Edge~Multiplication.
\item A centrality measure $F$ defined for every graph $(G,\bs)$ and node $v$ as follows: 
\[
F_v(G,\bs) = \sum_{(u,v) \in \Gamma^-_v(G)} \frac{\bs(u)}{\deg_u^+(G)} + \bs(v)
\]
satisfies Node Deletion, Edge Deletion, Edge Multiplication, Node Redirect and Baseline, but does not satisfy Edge~Swap.
\item A centrality measure $F$ defined for every graph $(G,\bs)$ and node $v$ as follows: 
\[
F_v(G,\bs) = \sum_{(u,v) \in \Gamma^-_v(G)} \frac{1}{\deg_u^+(G)} + \bs(v)
\]
satisfies Node Deletion, Edge Deletion, Edge Multiplication, Edge Swap and Baseline, but does not satisfy Node~Redirect.
\item A centrality measure $F$ defined for every graph $(G,\bs)$ and node $v$ and an arbitrary $a \in (0,1)$ as follows: 
\[
F^a_v(G,\bs) = 2 \cdot PR^a_v(G,\bs)
\]
satisfies Node Deletion, Edge Deletion, Edge Multiplication, Edge Swap and Node Redirect, but does not satisfy Baseline.
\end{itemize}
\end{theorem}

The proof of Theorem~\ref{theorem:independence} can be found in
\ref{section:appendix:independence}.


\section{Comparison with other centrality measures}\label{section:comparison}
In this section, using the axioms from Section~\ref{section:axioms}, we discuss how PageRank differs from other standard centralities.
We begin by discussing \emph{feedback centralities}: a class of centrality measures which---like PageRank---assess the importance of a node based on the number and importance of its direct predecessors.
Then we focus on other classic centralities from the literature.

Most centrality measures do not take into account node weights. However, for consistency with the domain of our paper, we will define them as functions that take a graph with node weights as an input.

For directed graphs, most centralities can be defined in two ways: by focusing on predecessors or, more generally, paths ending in a given node or by focusing on successors and paths starting in a given node.
For consistency with our interpretation of the direction of edges and the definition of PageRank, we will assume the former version.

Several centrality measures that we discuss are defined only on a subset of all graphs, e.g., only on strongly connected graphs.
In effect, we cannot apply our axioms directly, as they concern graphs on which centralities are not defined.
To cope with this problem, we consider a weaker version of axioms restricted to a specific class of graphs. 
To give an example, Edge Swap restricted to strongly connected graphs states that if a graph is strongly connected and the graph resulting from an edge swap is also strongly connected, then centralities in both graphs are equal (the formal definition can be found in~\ref{section:appendix:other_centralities}).
Such a restriction is always possible, although it sometimes results in a tautology satisfied by every centrality measure.
This is the case, for example, with Node Deletion restricted to strongly connected graphs: since there is no graph with an isolated node in a class of strongly connected graphs, the axiom is trivially satisfied by every possible centrality measure.

\subsection{Feedback centralities}
Feedback centralities~\cite{Brandes:Erlebach:2005} assess importance of a node based on the number and importance of its direct predecessors. 
It constitutes a large and widely used class of centrality measures. 
In this section, we introduce classic centralities that belong to this class.
We will split feedback centralities into two groups: \emph{parallel} and \emph{distributed} (similar division can be found, e.g., in \cite{Newman:2010}).

\subsubsection{Parallel feedback centralities}
The simplest, yet one of the most popular centrality measures, is \emph{degree centrality}~\cite{Freeman:1977}.
It assesses a node simply by looking at the number of its incoming edges.
Using our notation, the degree centrality is defined as follows:
\begin{equation*}
D_v(G,\bs) = |\Gamma^-_v(G)|.
\end{equation*}

A natural extension of degree centrality is \emph{eigenvector centrality}~\cite{Bonacich:1972}.
In degree centrality, an edge from each predecessor equally contributes to the importance of a node.
In contrary, in eigenvector centrality, an edge from a more important node contributes more. 
Specifically, an edge from a node is as important as the node itself.
Consequently, the importance of a node is proportional not to the number of direct predecessors, but to their total importance.
Formally, eigenvector centrality is defined for strongly connected graphs as a solution of the following recursive equation:
\begin{equation}\label{eq:c:eigenvector}
EV_v(G,\bs)= \frac{1}{\lambda(G)} \sum_{(u,v) \in \Gamma^-_v(G)} EV_u(G,\bs).
\end{equation}
Here, $\lambda(G)$ is the largest eigenvalue of the adjacency matrix of a graph, i.e., for the adjacency matrix $A = (a_{u,v})_{u,v \in V}$ of graph $G$ defined as $a_{u,v} = \#_{(v,u)}(G)$, $\lambda(G)$ is the largest real value $c \in \mathbb{R}$ such that $A \cdot x = c \cdot x$ for some $x \in \mathbb{R}^{V}$.
Such vector $x$, which is the corresponding eigenvector, is the unique solution of this system of recursive equations up to a scalar multiplication.
Usually an additional normalization condition is assumed to make a solution unique.
In this section, as often in the literature, we will assume that the sum of centralities of all nodes equals $1$.

Another centrality based on a similar principle is \emph{Katz centrality} \cite{Katz:1953}.
Here also the importance of a node is mostly determined by the total importance of its direct predecessors. 
However, an additional small basic importance is added to every node.
In our paper, the weight of a node represents such a basic importance, but if weights are not provided, a fixed positive constant is used.
Formally, Katz centrality with the decay factor $a \in (0,1)$ is defined for graphs with $\lambda(G) < 1/a$ as a unique function that satisfies the following recursive equation:\footnote{Using matrix form, equation~\eqref{eq:c:katz} can be written as follows: $K = a \cdot A^T \cdot K + b$, where $A$ is the adjacency matrix and $b = (b_v)_{v \in V}$ is the vector of node weights. Let $\mathbb{I}$ be the identity matrix. Then, we get: $(\mathbb{I} - a \cdot A^T) K = b$. Now, if $\lambda(G) < 1/a$, then matrix $(\mathbb{I} - a \cdot A^T)$ is invertible which gives: $K = (\mathbb{I} - a \cdot A^T)^{-1} b$. See \ref{section:appendix:proof-2} for an analogous derivation of PageRank.}
\begin{equation}\label{eq:c:katz}
K^a_v(G,\bs) = a \cdot \left( \sum_{(u,v) \in \Gamma^-_v(G)} K^a_u(G,\bs) \right) + \bs(v).
\end{equation}
Adding a basic importance shifts the emphasis from the total importance of the direct predecessors back to their number.
This is because each edge $(u,v)$ contributes to the centrality of $v$ an additional value $a \cdot \bs(u)$ which is independent of the position or the importance of a predecessor.
That is why Katz centrality is sometimes seen as a middle-ground between degree and eigenvector centralities.

A small modification of Katz centrality is \emph{Bonacich centrality} \cite{Bonacich:1987} (sometimes called \emph{Bonacich-Katz centrality} \cite{Bloch:etal:2019}).
Here, however, a small basic importance is added not to every node, but to every edge.
Formally, Bonacich centrality for the decay factor $a \in (0,1)$ is defined for graphs with $\lambda(G) < 1/a$ as a unique function that satisfies the following recursive equation:
\begin{equation}\label{eq:c:bonacich}
BK^a_v(G,\bs) = \sum_{(u,v) \in \Gamma^-_v(G)} \left(a \cdot BK^a_u(G,\bs) + \bs(u) \right).
\end{equation}
It is easy to verify, that $BK_v^a(G,\bs) = (K_v^a(G,\bs)-\bs(v))/a$.


The common property of degree, eigenvector, Katz and Bonacich centralities is that the benefit for $v$ from an edge $(u,v)$ depends solely on the importance and the weight of node $u$.
In particular, it does not depend on the number of outgoing edges of node $u$.
As a result, an important node with many outgoing edges has a significantly higher impact on centralities in the graph than a node with few edges.
Such an approach makes sense if a centrality measure is used to capture the process that spread along all possible connections at once, e.g., through parallel duplication like in the case of information spread among social media users or virus infection in the population.
Based on the work of~\citet{Borgatti:2005} we call such centralities \emph{parallel feedback centralities}.

All listed parallel feedback centralities satisfy Node Deletion and Edge Deletion as they assess the importance of a node based only on its predecessors.
Also, as clear from the above discussion, all of them violates Edge Multiplication---if we double the outgoing edges of a node its impact on the direct successors will be doubled.
Edge Swap is satisfied by degree, eigenvector and Katz centralities.
However, it is violated by Bonacich centrality as the importance of a node according to this measure depends not only on the importance of the predecessors, but also on their weights.
Node Redirect is violated only by degree centrality, since merging two out-twins decreases the (in-)degree of their direct successors.
Finally, degree and Bonacich centrality violates Baseline.
In turn, Baseline is satisfied by Katz centrality and trivially satisfied by eigenvector which is defined only for strongly connected graphs.

\subsubsection{Distributed feedback centralities}

In \emph{distributed feedback centralities}, the importance of a node is equally split between its direct successors. 
As a result, the impact of a node on its direct successors is bounded by its own importance.
Such an approach is more reasonable if a centrality measure is used to describe a process that follows only one connection at once, e.g., like a user on the World Wide Web, the money in the financial network or a reader following references in a scientific article.
As we will show, parallel feedback centralities correspond to distributed feedback centralities.

\emph{Beta measure}~\cite{Brink:Gilles:2000} was proposed as a measure of dominance in directed networks.
In accordance with our interpretation of the direction of edges, edge $(u,v)$ means that $v$ dominates $u$.
Beta measure looks at the incoming edges to a node, similarly to degree centrality.
However, an edge from a node with many other edges is considered less important: a dominance over a node dominated by many other nodes is less significant.
Formally, beta measure is defined as follows:
\begin{equation*}
\beta_v(G,\bs)= \sum_{(u,v) \in \Gamma^-_v(G)} \frac{1}{\deg^+_u(G)}.
\end{equation*}

\emph{Seeley index}~\cite{Seeley:1949} (also known as the \emph{simplified PageRank centrality}~\cite{Page:etal:1999},
\emph{Katz prestige}~\cite{Jackson:2005}, or \emph{invariant scoring function}~\cite{Slutzki:Volij:2006}) is based on a similar principle.
Beta measure can be interpreted in the following way: each node has one unit of importance which he splits equally among its direct successors.
Now, in Seeley index, each node splits his whole importance equally among its direct successors.
In this way, we obtain a formula similar to eigenvector centrality.
Formally, Seeley index is defined for strongly connected graphs as a solution of the following recursive equation:
\begin{equation}\label{eq:c:katz_prestige}
SI_v(G,\bs)= \sum_{(u,v) \in \Gamma^-_v(G)} \frac{SI_u(G,\bs)}{\deg^+_u(G)}.
\end{equation}
This system of recursive equations has a unique solution up to a scalar multiplication.
To make a solution unique, we will assume that the sum of centralities of all nodes equals $1$, as we did in the case of eigenvector centrality.

Finally, a distributed feedback centrality measure that is an equivalent to Katz and Bonacich centralities is PageRank.

Consider distributed feedback centralities from the perspective of our axioms.
All three centralities satisfy Node Deletion, Edge Deletion, Edge Multiplication and Edge Swap. 
The satisfiability of the first two axioms is straightforward.
For Edge Multiplication, it follows from the definition of distributed feedback centralities---what matters is the proportion of edges from $u$ to $v$ and not the absolute value.
Edge Swap is satisfied since all centralities assess the importance of a node based on the importance and out-degree of their predecessors.
Now, Node Redirect is violated only by beta measure, since merging two out-twins decreases the beta measure of their direct successors.
Finally, Baseline is violated by beta measure and trivially satisfied by Seeley index which is defined only for strongly connected graphs.

\subsection{Other classic centralities}
In this section, we introduce two classic centralities: \emph{closeness} and \emph{betweenness centralities} along with one popular alternative: \emph{decay centrality}.

\emph{Closeness centrality}~\cite{Bavelas:1950} aims to find nodes which are at the center of the graph.
To this end, for each node it sums the distances from all other nodes in the graph.
Now, nodes with a small total sum, i.e., nodes which are close to all other nodes, are considered most central.
Formally, closeness centrality is defined for strongly connected graphs as follows:
\begin{equation*}
C_v(G, \bs) = \frac{1}{\sum_{u \in V \setminus \{v\}} dist_{u,v}(G)}.
\end{equation*}


\emph{Decay centrality} \cite{Jackson:2005} is a modification of closeness centrality that works for arbitrary graphs, not necessarily strongly connected.
Here, instead of looking at the sum of distances, each node at distance $k$ contributes $a^k$ for some $a \in (0,1)$.
Formally, for a decay factor $a \in (0,1)$, decay centrality is defined as follows:
\begin{equation}\label{eq:c:decay}
Y_v(G, \bs) = \sum_{u \in V \setminus \{v\}} a^{dist_{u,v}(G)}.
\end{equation}
Here, we assume that if there is no path from $u$ to $v$ in $G$, then $dist_{u,v}(G) = \infty$ which implies $a^{dist_{u,v}(G)} = 0$.
Decay centrality can also be considered an extension of degree centrality that counts not only direct predecessors, but also further predecessors with decreasing weights.

\emph{Betweenness centrality}~\cite{Freeman:1977} is also based on the notion of shortest paths. 
However, its goal is to measure how often a specific node is an intermediary between other nodes.
If we assume that the information travels through the shortest paths, then betweenness centrality assesses that by computing for every pair of nodes, $s,t$, a fraction of shortest paths between them that goes through a node in question.
Formally, betweenness centrality is defined as follows:
\begin{equation}\label{eq:c:betweenness}
B_v(G, \bs) = \sum_{s,t \in V \setminus \{v\} : \sigma_{st} \neq 0} \frac{\sigma_{st}(v)}{\sigma_{st}},
\end{equation}
where $\sigma_{st}$ is the number of shortest paths from $s$ to $t$ and $\sigma_{st}(v)$ is the number of such shortest paths that goes through $v$.
The condition that $\sigma_{st} \neq 0$ is necessary in order to make the definition correct for graphs which are not strongly connected.

All three centralities assess a node based only on a connected component it is in, so they satisfy Node Deletion.
Edge Deletion is violated only by betweenness centrality, as edges between successors affect the shortest paths a node belongs to.
Edge multiplication does not affect distances in a graph, so Edge Multiplication is satisfied by closeness and decay centralities.
It is, however, violated by betweenness centrality, as edge multiplication changes the proportion of shortest paths that go through a specific node.
Edge Swap and Node Redirect are not satisfied by any of the three listed centralities. 
Both axioms are specific for feedback centralities, as the corresponding operations strongly affect the structure of the shortest paths in a graph on which all three centralities depend on.
Finally, Baseline is violated by decay and betweenness centralities and trivially satisfied by closeness centrality which is defined only for strongly connected graphs.

\subsection{Summary}

\begin{table}[t!]
\small
\setlength{\tabcolsep}{5pt}
\centering
\begin{tabular}{l|lcccccc}
\multicolumn{2}{}{} & \tworows{Node}{Deletion} & \tworows{Edge}{Deletion} & \tworows{Edge}{Multiplication} & \tworows{Edge}{Swap} & \tworows{Node}{Redirect} & Baseline\\
\hline
\emph{Parallel} 		& Degree cent.	    & $+\ \ $       & $+\ \ $       & $-\ \ $ & $+\ \ $ & $-\ \ $ & $-\ \ $ \\
\emph{feedback} 		& Eigenvector cent.     & $+^{*\dagger}$ & $+^{*\dagger}$ & $-\ \ $ & $+^*$ & $+^*$ & $+^{*\dagger}$ \\
\emph{centralities} 	& Katz cent.		        & $+^*$       & $+^*$       & $-\ \ $ & $+^*$ & $+^*$ & $+^*$ \\
 	& Bonacich cent.		  				      & $+^*$       & $+^*$       & $-\ \ $ & $-\ \ $ & $+^*$ & $-\ \ $ \\
\hline
\emph{Distributed} 	& Beta measure               & $+\ \ $       & $+\ \ $       & $+\ \ $ & $+\ \ $ & $-\ \ $ & $-\ \ $ \\
\emph{feedback} 		& Seeley index              & $+^{*\dagger}$ & $+^{*\dagger}$ & $+^*$ & $+^*$ & $+^*$ & $+^{*\dagger}$ \\
\emph{centralities} 	& PageRank & $+\ \ $       & $+\ \ $       & $+\ \ $ & $+\ \ $ & $+\ \ $ & $+\ \ $ \\
\hline
\emph{Other}			& Closeness cent. & $+^{*\dagger}$ & $+^{*\dagger}$ & $+^*$ & $-\ \ $ & $-\ \ $ & $+^{*\dagger}$ \\
\emph{centralities}	& Decay cent. & $+\ \ $ & $+\ \ $ & $+\ \ $ & $-\ \ $ & $-\ \ $ & $-\ \ $ \\
					& Betweenness cent. & $+\ \ $ & $-\ \ $ & $-\ \ $ & $-\ \ $ & $-\ \ $ & $-\ \ $
\end{tabular}

\caption{The axiomatic comparison of centrality measures. For each centrality measure and each axiom 
a plus ($+$) means that the axiom is satisfied by the centrality measure and a minus ($-$) that it is not satisfied.
A star ($^*$) highlights the fact that a centrality measure is defined only for a (not complete) class of graphs and a weaker version of the axiom is considered.
In such a case, a dagger ($\dagger$) denotes the fact that for this class of graphs every possible centrality measure satisfies the given axiom (e.g., Node Deletion is trivially satisfied by all centrality measures defined on strongly connected graphs because there are no isolated nodes in such graphs).
}
\label{table:axioms}
\end{table}

Table~\ref{table:axioms} summarizes the discussion.
Details and proofs can be found in~\ref{section:appendix:other_centralities}.

As we can see, Node Deletion and Edge Deletion are simple axioms satisfied by most centralities, at least in their restricted version.
Edge Multiplication distinguishes parallel and distributed feedback centralities. 
For other centralities, satisfiability depends on the way multiple edges between two nodes are interpreted. 
Edge Swap is characteristic for feedback centralities---it is satisfied by almost all feedback centralities and violated by all other centralities. 
Similarly, Node Redirect is satisfied only by feedback centralities in which the importance of a node depends on the importance of predecessors.
Finally, as most centralities ignore node weights, Baseline is hardly satisfied.

As we proved in Theorem~\ref{theorem:main}, PageRank is the only centrality measure that satisfy all of the axioms in their unrestricted form.
Seeley index also satisfies all of the axioms, but in the restricted versions.
Eigenvector centrality and Katz centrality, the parallel versions of the former two centralities, violate only Edge Multiplication.
From centralities defined on all graphs, beta measure is the closest to PageRank, as it satisfies four out of six axioms. 
In particular, a centrality measure defined as a sum of beta measure and the weight of a node is an example of a measure that satisfies all axioms except for Node Redirect.
We used this centrality in the proof of the independence of axioms (Theorem~\ref{theorem:independence}).

\section{Discussion}\label{section:discussion}
In this section we focus on two connections of our work to the existing literature.
First, we discuss different variants of the definition of PageRank that were considered in previous works.
Next, we relate our axiomatization of PageRank to the axiomatic characterizations of Seeley index by~\citet{Altman:Tennenholtz:2005} and~\citet{Palacios-Huerta:Volij:2004}.

\subsection{PageRank definitions}\label{section:definitions}
Since PageRank was originally proposed by~\citet{Page:etal:1999}, many slightly different variants of its definition appeared in the literature. 
The main aspect in which these variants differ is the question whether the measure should be normalized, i.e., should the values of PageRank for all nodes sum up to 1.
Both normalized \cite{Kamvar:etal:2003:extrapolation,DelCorso:etal:2005,Berkhin:2005,Langville:Meyer:2004,Boldi:etal:2009} 
and unnormalized \cite{Brandes:Erlebach:2005,Bianchini:etal:2005,Fogaras:etal:2005,Gleich:2015,Lofgren:etal:2016} 
versions of PageRank appear in the literature. 
In this paper we do not assume normalization.
This approach is especially popular in the analysis of large networks as it prevents the values from being extremely small.

The idea of normalization originates from the fact that PageRank was initially interpreted as a stationary probability distribution of a random process on a network~\cite{Page:etal:1999}.
To obtain normalization, it is usually required that $\bs(G) = 1-a$, but if it is not the case, weights can be proportionally scaled.
Also, sinks are usually reconnected to other nodes (which we discuss at the end of this section).
For each graph $G=(V,E)$, node weights $b$, and node $v \in V$, the value of normalized PageRank is equal to the value of PageRank as it is defined in this work divided by the sum of centralities of all nodes, i.e, $PR_v(G,\bs)/\sum_{u \in V} PR_u(G,\bs)$~\cite{Boldi:etal:2006}.
Normalized PageRank satisfies three of our axioms: Edge Multiplication, Edge Swap, and Node Redirect.
The other three axioms, Node Deletion, Edge Deletion and Baseline, are not satisfied in the current form, but they can be adapted to be satisfied by normalized PageRank.

The normalization, or the lack of it, affects the properties of PageRank in the number of ways.
First, as indicated by Lemma~\ref{lemma:locality}, unnormalized PageRank satisfies Locality, i.e., the centrality of a node depends only on the connected component to which it belongs.
Normalized PageRank clearly does not satisfy such property.
Moreover, unnormalized PageRank is linear as a function of node weights.
Specifically, for every graph, $G=(V,E)$, two node weights, $\bs$ and $\bs'$, and constant $t \in [0,1]$, it holds that 
\[
    PR_v(G, t \cdot \bs + (1-t) \cdot \bs') =  t \cdot PR_v(G,\bs) + (1-t) \cdot PR_v(G,\bs'),
\]
for every $v \in V$.
However, the same property does not hold for normalized PageRank~\cite{Boldi:etal:2006}.
In turn, normalized PageRank may be invariant to graph changes that affect unnormalized PageRank. 
See Fig.~\ref{figure:pagerank:definitions} for an illustration.

\begin{figure}[t]
\centering
\begin{tikzpicture}
  \def\x{1cm} 
  \def\y{0cm} 
  \def\arrdist{0.3cm}

  \tikzset{
    node_blank/.style={circle,draw,minimum size=0.5cm,inner sep=0, color=white}, 
    node/.style={circle,draw,minimum size=0.5cm,inner sep=0, fill = black!05}, 
    edge/.style={sloped,-latex,above,font=\footnotesize}, 
    el/.style={below,font=\footnotesize}, 
    operation/.style={sloped,>=stealth,above,font=\footnotesize},
    arrow/.style={draw, single arrow, minimum width = 0.9cm, minimum height=\y-6*\x+\s, fill=black!10},
    blank/.style={}
  } 
  \node[node] (A) at (\y+0*\x, 0*\x) {$u$};
  \node[node, fill = white] (B) at (\y+2*\x, 0*\x) {$v$};
  \node[blank] (A_) at (\y+0.5*\x-0.02cm, -0.7cm) {$(G,\bs)$};

  \path[->,draw,thick]
  (A) edge[edge, bend right=-20]  (B)
  ;
  
  \def\y{5cm} 
  
  \node[node] (A) at (\y+0*\x, 0*\x) {$u$};
  \node[node, fill = white] (B) at (\y+2*\x, 0*\x) {$v$};
  \node[blank] (A_) at (\y+0.5*\x-0.02cm, -0.7cm) {$(G',\bs)$};

  \path[->,draw,thick]
  (A) edge[edge, bend right=-20]  (B)
  (B) edge[edge, bend right=-20]  (A)
  ;
  
\end{tikzpicture}
\caption{Sample graphs $(G,\bs) = ((\{u,v\},\lBrace (u,v) \rBrace), [1,0])$ and $(G',\bs) = ((\{u,v\},\lBrace (u,v),(v,u) \rBrace), [1,0])$ illustrating the difference between normalized and unnormalized PageRank. Assume $a = 0.9$. For unnormalized PageRank in $(G,\bs)$ we have $PR^{0.9}_u(G,\bs) = 1$ and $PR^{0.9}_v(G,\bs) = 0.9$, while in graph $(G',\bs)$ the values are approximately 5.26 times higher, i.e., $PR^{0.9}_u(G',\bs) \simeq 5.26$ and $PR^{0.9}_v(G',\bs) \simeq 4.74$. However, normalized PageRank gives the same values in both graphs: in both graphs normalized PageRank of node $u$ is approximately equal $0.53$ and normalized PageRank of $v$ to $0.47$.}
\label{figure:pagerank:definitions}
\end{figure}

Another aspect where definitions of PageRank differ is the treatment of sinks.
Sinks (or \emph{dangling links} as they are sometimes called in the WWW network setting) are problematic if we interpret PageRank as a stationary distribution of a random walk as the walk terminates when it arrives at a sink.
Hence, especially for normalized PageRank, various methods of dealing with sinks were considered.

In their original paper, \citet{Page:etal:1999} removed sinks altogether from a network.
Then, after the computation of values of PageRank for all nodes in the remaining graph, they assigned each sink the centrality as given by the recursive equation.
However, such operation was treated mainly as an ad-hoc fix that works ''without affecting things significantly''~\cite{Page:etal:1999}.

Another approach popular in the literature is to add outgoing edges from each sink to all other nodes in such a way that the probability of moving from the sink to a node is proportional to the weight of this node~\cite{DelCorso:etal:2005}.
For example, if node weights are uniform, we can add a single outgoing edge from each sink to every node in the network including this sink.
Such operation does not affect the centrality of any node in the normalized version of PageRank.
However, it affect the values of unnormalized PageRank.

Moreover, it is also possible to reconnect sinks to all other nodes using different distribution, e.g., we can add a single outgoing edge to each node in the network even when node weights are not uniform.
In this way, we obtain so called~\emph{weakly preferential} PageRank~\cite{Boldi:etal:2006}.

Finally, for unnormalized PageRank, usually sinks are left as they are.
This is also the approach we use in our work.
For the sake of random walk interpretations in which it is required that no sinks are present in the network, we can add a virtual node with a self-loop to a graph and a single edge from each sink to this new node.
As showed by \citet{Bianchini:etal:2005} such operation does not affect unnormalized PageRank of each of the original nodes.

\subsection{Axiomatic characterizations of the simplified PageRank}\label{section:simplified_pr_axioms}
In this section, we discuss the existing axiomatizations of Seeley index, a simplified version of PageRank, and its slight modification for the journal citation network, called \emph{the invariant method}, and relate them to our axioms from Section~\ref{section:axioms}.

\subsubsection{Axiomatizing the invariant method}
\citet{Palacios-Huerta:Volij:2004} considered the problem of measuring the importance of scientific journals based on the journal citation network.
In a citation network, nodes represent journals and an edge $(A,B)$ represents a reference of journal $A$ to journal $B$.
The authors presented an axiomatization of the \emph{invariant method} \cite{Pinski:Narin:1976} which is equal to Seeley index of a journal in a citation network divided by the number of articles it has published.
To this end, they proposed the following four axioms:
\begin{description}
\item[Invariance with Respect to Reference Intensity:]
\emph{Multiplying the references in every journal by arbitrary constants, specific for each journal, does not affect the importance of any journal.}

This axiom, satisfied by both the invariant method and Seeley index, is also satisfied by PageRank.
It is equivalent to Edge Multiplication with the only difference that it is formulated as an operation on all nodes at once. 

\item[Weak Homogeneity:]
\emph{Imagine that there are only two journals, $A$ and $B$, and both have the same number of articles and references (some to themselves, some to the other journal).
If journal $A$ has $x$ times more references from $B$ than $B$ from $A$, then $A$ is $x$ times more important.}

This axiom is satisfied by both the invariant method and Seeley index.
However, it is not satisfied by PageRank: PageRank is not proportional to the impact of predecessors as it takes into consideration also the weight of a node.

\item[Weak Consistency:]
\emph{Assume that every journal has the same number of articles and references.
Consider deleting one of the journals, $A$, and for every journal $B$ citing $A$ redirecting all references of $B$ to $A$ to journals that were originally cited by $A$, preserving the proportions in the numbers of citations (e.g., if $B$ had six references to $A$ and $A$ had two references to $C$ and one to $D$, then four references are added from $B$ to $C$ and two from $B$ to $D$).
Now, the importance of any journal in the resulting network is the same as in the original network.}

This axiom, satisfied by both the invariant method and Seeley index, is based on the fact that in both of these measures intermediaries transfer further the whole importance they got from their predecessors.
In particular, if a node is added in the middle of an edge, then the importance of all nodes remain the same.
PageRank does not satisfy this axiom, since the decay factor decreases the importance transferred by the intermediaries.

\item[Invariance to Splitting of Journals:]
\emph{Consider an operation of splitting a journal into $k$ identical copies in a way that each copy has exactly $1/k$ of the original edges to each cited journal.
Now, if every journal is split into an arbitrary number of copies, then every copy will have the same importance as the original journal in the original network.}

This axiom is satisfied by the invariant method, but it is not satisfied by Seeley index: according to Seeley index, the total importance of all copies equals the original importance of a journal.
The same is true also for PageRank.
Node Redirect is based on the same idea, but instead of splitting all nodes into several copies, it considers merging two copies with possible different incoming edges into one node.
\end{description}

\subsubsection{Axiomatizing the ranking}

\citet{Altman:Tennenholtz:2005} proposed an axiomatization of the ranking of nodes that result from Seeley index.
As a result, axioms are of the different nature, as they concern the relation between centralities of different nodes, but not the specific values.
The authors proposed five such axioms:

\begin{description}
\item[Isomorphism:]
\emph{In two isomorphic graphs the ranking is the same with respect to the isomorphism.}

This axiom, proposed in the seminal work by \citet{Sabidussi:1966}, is satisfied by all reasonable centrality measures, including all measures introduced in this paper.

\item[Self Edge:]
\emph{Adding a self-loop to a node can only increase its position in the ranking and does not affect the ranking of other nodes.}

This axiom is not satisfied by PageRank, because adding a self-loop to a node can significantly decrease the centrality of its direct successors and change their ranking with respect to other nodes.

\item[Vote by Committee:]
\emph{Splitting a node into $k+1$ parts in a way that one node has the original incoming edges and $k$ outgoing edges to other parts and $k$ nodes have one incoming edge each and the original outgoing edges of the node does not affect the ranking of nodes in the graph.}

This axiom is based on a similar principle as Weak Consistency from the axiomatization of the invariant method.
Similarly, it is not satisfied by PageRank since the decay factor decreases the importance transferred by the intermediaries.
Hence, the direct predecessors of the split node may end up with a lower ranking.

\item[Collapsing:]
\emph{Redirecting a node into its out-twin does not affect the ranking of other nodes.}

This axiom, very similar to Node Redirect, is satisfied also by PageRank.

\item[Proxy:]
\emph{Assume that there is a node $v$ with $k$ incoming and $k$ outgoing edges such that all direct predecessors are different and have equal centralities.
Removing node $v$ and adding $k$ edges from direct predecessors to direct successors, one edge to each node, does not affect the ranking of any node in a graph.}

This axiom, similar to Vote by Committee, is not satisfied by PageRank because of the same reason: the decay factor decreases the importance transferred by the intermediaries.
\end{description}

\subsubsection{Axiomatizing the values}

\citet{Slutzki:Volij:2006} proposed an axiomatization of Seeley index values, not the corresponding ranking. The authors used three axioms: Invariance to Reference Intensity, which is equivalent to the axiom under almost the same name from the axiomatization of the invariant method, and two additional axioms:

\begin{description}
\item[Uniformity:]
\emph{If all nodes have the same number of incoming and outgoing edges, then all nodes have equal importance as well.}

This simple borderline axiom is satisfied by PageRank (assuming node weights of all nodes are equal). 

\item[Weak Additivity:]
\emph{If all nodes have the same number of incoming and outgoing edges, then if we add arbitrary number of pairs of symmetric edges (pairs $(u,v)$ and $(v,u)$ for arbitrary nodes $u,v$) the importance will remain proportional to in-degrees of nodes.}

This axiom is not satisfied by PageRank, as because of the decay factor and added node weight, PageRank is not proportional to the importance obtained from its direct predecessors. This axiom is very strong, as it directly specifies the importance of nodes in a general scenario.

\end{description}

As we can see, in the first two axiomatizations there is an axiom that concerns splitting or merging out-twin nodes, as in Node Redirect. 
Also, in the first and last axiomatizations, there is an axiom equivalent to Edge Multiplication.
However, most axioms are not satisfied by PageRank.
That is why the axiomatic characterizations described in this section cannot be easily extended for the characterization of PageRank.

\section{Conclusions}\label{section:conclusions}

In this paper, we proposed the first axiomatic characterization of PageRank. 
Specifically, we proposed six simple axioms, namely Node Deletion, Edge Deletion, Edge Multiplication, Edge Swap, Node Redirect and Baseline, and proved that PageRank is the only possible centrality measure that satisfy all of them.
Our characterization is the first axiomatization of PageRank in the literature. 

There are various interesting directions in which our work can be extended.
A common axiomatization of the most popular feedback centralities based on the properties proposed in this paper is a desirable goal.
Also, from the application perspective, it would be interesting to study specific types of networks in more detail and analyze whether PageRank is a good fit.
Finally, we plan to study also axiomatic properties of centralities which, as PageRank, are based on the notion of random walk, such as the \emph{random walk closeness centrality}~\cite{White:Smyth:2003}.

\section{Acknowledgment}
This work is partially based on \citet{Was:Skibski:2018:pagerank} presented at the 27th International Joint Conference on Artificial Intelligence (IJCAI-18). Compared to the conference publication, the set of axioms has changed which led to a new proof of uniqueness. All other parts of the paper are also new, including the proof of independence, the comparison with other centrality measures, the extended related work, and the discussion.

We would like to thank the anonymous referees for their helpful comments, which, among others, inspired the discussion in Section~\ref{section:definitions}.

This work was supported by the National Science Centre under Grant No. 2018/31/B/ST6/03201 and the Foundation for Polish Science under Grant Homing/2016-1/7.

\appendix



\section[PageRank satisfies axioms (Theorem 1)]{PageRank satisfies axioms (Theorem~\ref{theorem:main})}\label{section:appendix:proof-2}
In this appendix, we show that PageRank for every decay factor $a \in [0,1)$ satisfies Node Deletion, Edge Deletion, Edge Multiplication, Edge Swap, Node Redirect and Baseline. This is a part of the proof of Theorem~\ref{theorem:main}.

First, let us argue that there is a unique centrality measure that satisfies PageRank recursive equation~\eqref{eq:pr:main}.
Using matrix form, our formula can be written as follows: 
\[ PR = a \cdot  \hat{A}^T \cdot PR + b, \] 
where $PR = (PR^a_v(G,b))_{v \in V}$ is the vector of PageRank values, $\hat{A}$ is the row-normalized adjacency matrix ($\hat{A}[u,v] = \#_{(u,v)}(G)/\deg^+_u(G)$), and $b = (\bs_v)_{v \in V}$ is the vector of node weights. By $\mathbb{I}$ let us denote the identity matrix. Then, we get: $(\mathbb{I} - a \cdot \hat{A}^T) PR = b$.
Matrix $(\mathbb{I} - a \cdot \hat{A}^T)$ is invertible, as its transpose is strictly diagonally dominant which gives: 
\[ PR = (\mathbb{I} - a \cdot \hat{A}^T)^{-1} b. \]

Let us now formally introduce a \emph{busy random surfer model} that provides a random walk interpretation of PageRank.
Our model, partially inspired by the work of~\citet{Bianchini:etal:2005}, is a modification of the original random surfer model proposed by~\citet{Page:etal:1999}.

\subsubsection*{Busy Random Surfer Model}
Imagine a surfer that is traversing the World Wide Web in a random manner.
She starts from a random page (with distribution specified by the weights) and then, in each step of her walk:
\begin{itemize}
\item with probability $a$ the surfer chooses one of the links on a page and clicks it, which moves her to the next page, or
\item with probability $1 - a$ (or if there is no link on the page) the surfer gets bored and stops surfing.
\end{itemize}
Now, PageRank is equal to the expected number of times the surfer visits a page (up to a scalar multiplication).

More generally, we define a \emph{random walk with decay} on graph $(G,\bs)$ as a random finite sequence of nodes $w^a_{G,\bs} = \big(w^a_{G,\bs}(0),w^a_{G,\bs}(1),\ldots,w^a_{G,\bs}(T)\big)$ selected in the following way:
\begin{itemize}
\item the first node is drawn from the initial distribution determined by the weights of nodes: $\mathbb{P}\big( w^a_{G,\bs}(0) = v \big) = \bs(v)/\bs(G)$;
\item the next node is determined by a transition probability: if after $t$ steps the walk is at a node $v_t$, then it either moves to a neighbor $v_{t+1}$ with probability $a \cdot \#_{(v_t,v_{t+1})}(G)/\deg^+_{v_t}(G)$ or ends with probability $1-a$ (or with probability $1$ if $v_t$ is a sink).
\end{itemize}
As a result, the probability of a specific sequence is given by:
\[
	\mathbb{P}\big( w^a_{G,\bs} = (v_0,v_1,\dots,v_T) \big) =
	\begin{cases}
		\bs(v_0)/\bs(G) \cdot a^T \cdot 
			\left( \prod_{t =0}^{T-1} \#_{(v_t,v_{t+1})}(G)/\deg^+_{v_t}(G)  \right)
			& \mbox{if } v_T \mbox{ is a sink,}\\
		\bs(v_0)/\bs(G) \cdot a^T \cdot 
			\left( \prod_{t =0}^{T-1} \#_{(v_t,v_{t+1})}(G)/\deg^+_{v_t}(G)  \right)
			\cdot (1-a)
			& \mbox{otherwise.}
	\end{cases}
\]
Note that only sequences which are paths have non-zero probability.

\begin{example}\label{example:rwwd}
Consider the random walk with decay on the graph from Fig.~\ref{figure:main} assuming each node has weight one.
Let us calculate the probability of path $(v_5, v_7, v_1, v_8)$.
The random walk starts from node $v_5$ with probability $1/8$ (there are 8 nodes, all with equal weights).
Now, from $v_5$ it either chooses one of the outgoing edges with probability $a$ or ends with probability $1-a$. 
If the random walk with decay chooses an edge, it chooses an edge to $v_7$ with probability $1/3$, because $v_5$ has three outgoing edges.
Hence, the random walk moves from $v_5$ to $v_4$ with the probability $a/3$.
Now, node $v_7$ has one outgoing edge to $v_1$ and $v_1$ has one outgoing edge to $v_8$; hence, if only random walk with decay does not end, it has to move from $v_7$ to $v_1$ and then to $v_8$.
Hence, the probability of such two moves is $a^2$.
Finally, node $v_8$ is not a sink, hence the random walk with decay ends here with probability $(1-a)$.
Summing up, the probability that the random walk with decay is a sequence $(v_5, v_7, v_1, v_8)$ equals $1/8 \cdot a/3 \cdot a \cdot a \cdot (1-a) = a^3(1-a)/24$.
This concludes our example.
\end{example}

In the following theorem, we prove that PageRank is indeed equal to the expected number of visits of a random walk with decay.
In particular, if the weights of all nodes in a graph sum up to $1$, then it is equal to the expected number of visits of a random walk with decay.

\begin{theorem}
\label{theorem:rpm}
For every graph $G=(V,E)$, weights $\bs$ and node $v \in V$ it holds that
\begin{equation}
\label{eq:rpm}
PR^a_v(G,\bs) = \sum_{t=0}^\infty \mathbb{P} \big( w^a_{G,\bs}(t)=v \big) \cdot \bs(G).
\end{equation}
\end{theorem}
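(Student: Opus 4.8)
The plan is to leverage the fact that PageRank is, by definition, the \emph{unique} function satisfying the recursive equation~\eqref{eq:pr:main}. Consequently, it suffices to show that the right-hand side of~\eqref{eq:rpm}, regarded as a function of $v$, solves that very equation; uniqueness then forces the two to coincide. First I would set up the one-step dynamics of the walk. Writing $p_t(v) := \mathbb{P}\big(w^a_{G,\bs}(t) = v\big)$ for the probability that the walk is defined up to step $t$ and sits at $v$ there, the description of the random walk with decay yields the initial condition $p_0(v) = \bs(v)/\bs(G)$ together with the forward recursion
$$p_{t+1}(v) = a \sum_{(u,v)\in\Gamma^-_v(G)} \frac{p_t(u)}{\deg^+_u(G)},$$
where the multiplicity of parallel edges is absorbed into the sum over the multiset $\Gamma^-_v(G)$ and sinks contribute nothing, having no outgoing edges.

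Next I would establish convergence, which is what makes the infinite sum in~\eqref{eq:rpm} meaningful. Summing transition probabilities over all target nodes shows that from any node the walk continues with probability at most $a$, so $\sum_{v\in V} p_{t+1}(v) \le a \sum_{v\in V} p_t(v)$ and inductively $\sum_{v\in V} p_t(v)\le a^t$; in particular $p_t(v)\le a^t$. Since $a\in[0,1)$, the series $\sum_{t=0}^\infty p_t(v)$ is dominated by the geometric series $\sum_{t} a^t = 1/(1-a)$ and converges absolutely.

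I would then set $q_v := \bs(G)\sum_{t=0}^\infty p_t(v)$ as the candidate value and verify equation~\eqref{eq:pr:main} for it. Splitting off the $t=0$ term gives the baseline contribution $\bs(G)\,p_0(v) = \bs(v)$, while substituting the forward recursion into the remaining tail and interchanging the (absolutely convergent) infinite sum with the finite sum over $\Gamma^-_v(G)$ yields
$$\bs(G)\sum_{t=1}^\infty p_t(v) = a\sum_{(u,v)\in\Gamma^-_v(G)} \frac{q_u}{\deg^+_u(G)}.$$
Adding the two contributions shows $q_v = a\sum_{(u,v)\in\Gamma^-_v(G)} q_u/\deg^+_u(G) + \bs(v)$, i.e.\ $(q_v)_{v\in V}$ solves~\eqref{eq:pr:main}, so $q_v = PR^a_v(G,\bs)$ by uniqueness of the solution.

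The genuinely delicate points, which I expect to be the main obstacles, are the justification of the summation interchange and the degenerate case $\bs(G)=0$. The former is handled by absolute convergence (equivalently, by Tonelli, since every term is non-negative). For the latter, where the initial distribution is undefined, I would observe that each summand $\bs(G)\,p_t(v)$ equals $\sum \bs(v_0)\,a^t\prod_{s=0}^{t-1}\#_{(v_s,v_{s+1})}(G)/\deg^+_{v_s}(G)$ summed over length-$t$ paths ending at $v$, a quantity well-defined regardless of $\bs(G)$; hence both sides of~\eqref{eq:rpm} vanish and the identity still holds. Alternatively one simply restricts to $\bs(G)>0$ without loss of generality. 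Once the problem is reduced to checking the fixed-point equation, everything else is routine bookkeeping.
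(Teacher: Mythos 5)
Your proposal is correct and follows essentially the same route as the paper: define the right-hand side as a candidate centrality, derive the one-step recursion $p_{t+1}(v) = a \sum_{(u,v)\in\Gamma^-_v(G)} p_t(u)/\deg^+_u(G)$ from the walk's dynamics, split off the $t=0$ term, sum the recursion over $t$, and conclude by uniqueness of the solution of equation~\eqref{eq:pr:main}. The two points you add beyond the paper's proof --- the geometric bound $p_t(v)\le a^t$ justifying convergence and the summation interchange, and the degenerate case $\bs(G)=0$ where the initial distribution is undefined --- are passed over silently in the paper, so they only strengthen the same argument.
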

\begin{proof}
Consider a centrality measure $F^a$ defined for every graph $G = (V,E)$, weights $\bs$ and node $v \in V$ as the right-hand side of equation~\eqref{eq:rpm}:
\begin{equation}\label{eq:rpm_x}
F^a_v(G,\bs)=\sum_{t=0}^\infty \mathbb{P}\big( w^a_{G,\bs}(t)=v \big) \cdot \bs(G).
\end{equation}
We will prove that $F^a$ satisfies PageRank recursive equation~\eqref{eq:pr:main} which implies $F^a = PR^a$.

Consider the probability that a random walk with decay is in node $v$ at time $t$: $\mathbb{P}\big( w^a_{G,\bs}(t)=v \big)$. 
If $t=0$, then from the initial distribution we know that $\mathbb{P}\big( w^a_{G,\bs}(0)=v \big) = \bs(v)/\bs(G)$.
If $t \ge 1$, then we know that the walk used one of the incoming edges of node $v$ to reach it; hence:
\[
\mathbb{P}\big( w^a_{G,\bs}(t)=v \big) = \sum_{(u,v)\in \Gamma^-_v(G)} \left(\mathbb{P}\big( w^a_{G,\bs}(t-1)=u\big) \cdot \frac{a}{\deg^+_u(G)}\right).
\]
Here, $a/\deg^+_u(G)$ is the probability that the walk while being in node $u$ will not terminate and choose one particular outgoing edge.
Summing over all $t \ge 1$ and reformulating the right-hand side we get:
\begin{equation}
\label{eq:rpm_x2}
\sum_{t = 1}^{\infty} \mathbb{P}\big( w^a_{G,\bs}(t)=v \big) = a \cdot \sum_{(u,v)\in \Gamma^-_v(G)} \left(\frac{\sum_{t = 0}^{\infty} \mathbb{P}\big( w^a_{G,\bs}(t)=u \big)}{\deg^+_u(G)} \right).
\end{equation}
From equation~\eqref{eq:rpm_x} we have that $\sum_{t = 0}^{\infty} \mathbb{P}\big( w^a_{G,\bs}(t)=u \big) = F^a_u(G,\bs)/\bs(G)$. 
Hence, by multiplying both sides of Equation~\eqref{eq:rpm_x2} by $\bs(G)$ and adding $\bs(v)$ we get that $F^a$ satisfies PageRank recursive equation~\eqref{eq:pr:main}:
\[
F^a_v(G,\bs) = \bs(v) + \sum_{t = 1}^{\infty} \mathbb{P}\big( w^a_{G,\bs}(t)=v \big) \cdot \bs(G) = \bs(v) + a \cdot \sum_{(u,v)\in \Gamma^-_v(G)} \frac{F^a_u(G, \bs)}{\deg^+_u(G)}.
\]
This concludes the proof.
\end{proof}

Let us move to the main proof of this section, i.e., that PageRank satisfies all of our axioms.

All axioms except for Baseline are invariance axioms, so we need to show that a specific graph operation does not affect PageRank of a node in question.
To this end, we will either look at the random walk interpretation of PageRank or PageRank recursive equation~\eqref{eq:pr:main}.
In the former case, based on Theorem~\ref{theorem:rpm} it is enough to show that a specific operation does not affect the expected number of visits by a random walk with decay (or if the sum of node weights decreases, then the expected number of visits increases accordingly); we will do this for four axioms: Node Deletion, Edge Deletion, Edge Multiplication and Node Redirect.
In the later case, we will show that the system of recursive equations after the graph modification has almost the same solution; we will do this for Edge Swap.
Finally, Baseline easily follows from PageRank recursive equation.

Fix an arbitrary graph $G = (V,E)$, node weights $\bs$.
We will consider each axiom separately.

\subsubsection*{Node Deletion}
Let $u$ be an isolated node and $v$ be an arbitrary node other than $u$.
Consider a graph obtained from $(G,\bs)$ by removing node $u$: $(G',\bs') = ((V \setminus \{u\}, E), \bs')$ with $\bs'(w) = \bs(w)$ for $w \in V \setminus \{u\}$. 
We have to prove that $F_v(G,\bs) = F_v(G',\bs')$.
To this end, based on Theorem~\ref{theorem:rpm}, it is enough to prove that the probability that a random walk with decay visits node $v$ at time $t \ge 0$ multiplied by the sum of node weights is the same for both graphs:
\begin{equation}\label{eq:axioms:nd1}
\mathbb{P} \big( w^a_{G,\bs}(t)=v \big) \cdot \bs(G) = \mathbb{P} \big( w^a_{G',\bs'}(t)=v \big) \cdot \bs'(G').
\end{equation}
Note that this means that the probability $\mathbb{P} \big( w^a_{G,\bs}(t)=v \big)$ increases $\bs(G)/\bs'(G')$ times if node $u$ is removed.

Consider a random walk with decay on graph with node weights $(G,\bs)$.
Since node $u$ is isolated, we know that it is not possible for the walk to start in $u$ and visit $v$ later on. 
Hence, we get:
\begin{equation}\label{eq:axioms:nd2}
\mathbb{P}\big( w^a_{G,\bs}(t)=v \big) = \sum_{s \in V \setminus \{u\}} \mathbb{P}\big( w^a_{G,\bs}(t)=v \land w^a_{G,\bs}(0) = s \big).
\end{equation}
Take an arbitrary $s \in V \setminus \{u\}$.
The walk starts in $s$ with the probability $\bs(s)/\bs(G)$.
Hence, if we denote the conditional probability that the walk visits node $v$ at time $t$ \emph{after} it starts in $s$ by $\mathbb{P}\big( w^a_{G,\bs}(t)=v \mid w^a_{G,\bs}(0)=s \big)$ we will get:
\begin{equation}\label{eq:axioms:nd3}
\sum_{s \in V \setminus \{u\}} \mathbb{P}\big( w^a_{G,\bs}(t)=v \land w^a_{G,\bs}(0) = s \big) = \sum_{s \in V \setminus \{u\}} \frac{\bs(s)}{\bs(G)} \cdot \mathbb{P}\big( w^a_{G,\bs}(t)=v \mid w^a_{G,\bs}(0)=s \big).
\end{equation}
Analogously, for graph $(G',\bs')$ we have:
\begin{align}
\notag
\mathbb{P}\big( w^a_{G',\bs'}(t)=v \big) &=
\sum_{s \in V \setminus \{u\}} \mathbb{P}\big( w^a_{G',\bs'}(t)=v \land w^a_{G',\bs'}(0) = s \big)\\
\label{eq:axioms:nd4}
&=\sum_{s \in V \setminus \{u\}} \frac{\bs'(s)}{\bs'(G')} \cdot \mathbb{P}\big( w^a_{G',\bs'}(t)=v \mid w^a_{G',\bs'}(0)=s \big).
\end{align}
Note that $V \setminus \{u\}$ is the set of all nodes in $(G',\bs')$ and $\bs'(s) = \bs(s)$ for every $s \in V \setminus \{u\}$.
Now, consider an arbitrary walk on graph $(G,\bs)$ that starts in node $s$.
Since $u$ is isolated, we know that the walk will not visit $u$.
Hence, in a graph with node $u$ removed, i.e., $(G',\bs')$, the probability that such a walk visits node $v$ at time $t$ is the same as in $(G,\bs)$:
\[ \mathbb{P}\big( w^a_{G,\bs}(t)=v \mid w^a_{G,\bs}(0)=s \big) = \mathbb{P}\big( w^a_{G',\bs'}(t)=v \mid w^a_{G',\bs'}(0)=s \big) \]
This combined with equations~\eqref{eq:axioms:nd2}--\eqref{eq:axioms:nd4} proves equation~\eqref{eq:axioms:nd1}.

\subsubsection*{Edge Deletion}
Fix edge $(u,u') \in E$ and let $v$ be an arbitrary node which is not a successor of $u$, i.e., $v \not \in S_u(G)$.
Consider a graph obtained from $(G,\bs)$ by removing edge $(u,u')$: $(G',\bs') = ((V, E - \lBrace (u,u') \rBrace), \bs)$.
We have to prove that $F_v(G,\bs) = F_v(G',\bs')$.
Note that the sum of weights is the same in both graphs.
Hence, based on Theorem~\ref{theorem:rpm}, to prove that $F_v(G,\bs) = F_v(G',\bs')$ it is enough to show that the probability that a random walk with decay visits node $v$ at time $t \ge 0$ is also the same in both graphs:
\begin{equation}\label{eq:axioms:ed1}
\mathbb{P}\big( w^a_{G,\bs}(t)=v \big) = \mathbb{P}\big( w^a_{G',\bs'}(t)=v \big).
\end{equation}

Consider a random walk with decay on graph $(G,\bs)$.
Note that since $v \not \in S_u(G)$, then there is no path from $u$ to $v$ and we know that if the walk visits $v$ at time $t$, then it could not have visit $u$ before. 
Formally, $w^a_{G,\bs}(t) = v$ implies $w^a_{G,\bs}(t') \neq u$ for $t' \le t$.
Hence:
\begin{equation}\label{eq:axioms:ed2}
\mathbb{P}\big( w^a_{G,\bs}(t)=v \big) = \mathbb{P}\big( w^a_{G,\bs}(t)=v \land (\forall_{t' \le t} w^a_{G,\bs}(t') \neq u) \big).
\end{equation}
Clearly, $v$ will not become a successor of $u$ if we remove an edge from the graph, so for $(G',\bs')$ we also get:
\begin{equation}\label{eq:axioms:ed3}
\mathbb{P}\big( w^a_{G',\bs'}(t)=v \big) = \mathbb{P}\big( w^a_{G',\bs'}(t)=v \land (\forall_{t' \le t} w^a_{G',\bs'}(t') \neq u) \big).
\end{equation}
Now, observe that outgoing edges of node $u$ does not affect the initial distribution of a random walk with decay, nor the transition probabilities for other nodes.
In particular, they do not affect the probability that the walk that starts in a node other than $u$ will reach node $v$ by going through nodes from $V \setminus \{u\}$.
Hence, we get that:
\[ \mathbb{P}\big( w^a_{G,\bs}(t)=v \land (\forall_{t' \le t} w^a_{G,\bs}(t') \neq u) \big) = \mathbb{P}\big( w^a_{G',\bs'}(t)=v \land (\forall_{t' \le t} w^a_{G',\bs'}(t') \neq u) \big),\]
which combined with equations~\eqref{eq:axioms:ed2} and \eqref{eq:axioms:ed3} implies equation~\eqref{eq:axioms:ed1}.

\subsubsection*{Edge Multiplication} 
Let $u,v$ be two arbitrary nodes and $k \in \mathbb{N}$ a natural number.
Consider a graph obtained from $(G,\bs)$ by adding $k$ copies of outgoing edges of $u$: $(G',\bs') = ((V, E \sqcup k \cdot \Gamma^+_u(G)), \bs)$.
Observe that in both graphs the initial distribution and the transition matrix of a random walk with decay are the same.
Thus, the probability that at time $t$ the walk visits node $v$ for both of these graphs is also the same:
$$\mathbb{P}\big( w^a_{G,\bs}(t)=v \big) = \mathbb{P}\big( w^a_{G',\bs'}(t)=v \big).$$
Hence, from Theorem~\ref{theorem:rpm} we get that $F_v(G,\bs) = F_v(G',\bs')$.

\subsubsection*{Edge Swap}
Fix two edges $(u,u'), (w,w') \in E$ such that $PR^a_u(G,\bs) = PR^a_w(G,\bs)$ and $\deg^+_u(G) = \deg^+_w(G)$.
Consider a graph obtained from $(G,\bs)$ by swapping the ends of both these edges, i.e., $(G',\bs') = ((V,E - \lBrace (u,u'), (w,w') \rBrace \sqcup \lBrace (u,w'), (w,u') \rBrace), \bs)$.
We have to prove that $PR^a(G,\bs) = PR^a(G',\bs')$.
To this end, let us define $x_v = PR^a_v(G,\bs)$ for every $v \in V$ and prove that $(x_v)_{v \in V}$ satisfies the system of PageRank recursive equations~\eqref{eq:pr:main} for graph $(G',\bs')$, i.e., that:
\begin{equation}\label{eq:axioms:es1}
x_v = a \cdot \left( \sum_{(s,v) \in \Gamma_v^-(G')} \frac{x_s}{\deg_s^+(G')}\right) + \bs'(v),
\end{equation}
holds for every $v \in V$.
Since this system of equations has exactly one solution which is $PR^a(G',\bs')$, this will prove that $PR^a_v(G',\bs') = x_v = PR^a_v(G,\bs)$ for every $v \in V$.

Consider an arbitrary node $v \in V$. 
From PageRank recursive equation for graph $(G,\bs)$ we have:
\begin{equation}\label{eq:axioms:es2}
x_v = a \cdot \left( \sum_{(s,v) \in \Gamma_v^-(G)} \frac{x_s}{\deg_s^+(G)}\right) + \bs(v).
\end{equation}
By the definition of $(G',\bs')$ we know that out-degree and weight of every node is the same in $(G,\bs)$ as in $(G',\bs')$.
Moreover, if $v \not \in \{u',w'\}$, then it has the same set of incoming edges in both graphs.
Hence, replacing in equation~\eqref{eq:axioms:es2} $\deg^+_s(G), \bs(v), \Gamma^-_v(G)$ with $\deg^+_s(G'), \bs'(v), \Gamma^-_v(G')$ respectively proves equation~\eqref{eq:axioms:es1}.

Assume $v=u'$ (for $v=w'$ the proof is analogous).
Consider the sum on the right-hand side of the equation~\eqref{eq:axioms:es2} for node $v=u'$.
In graph $(G',\bs')$ node $u'$ has an edge $(u,u')$ replaced with $(w,u')$, i.e., we have: $\Gamma_{u'}^-(G) - \lBrace (u,u') \rBrace = \Gamma_{u'}^-(G') - \lBrace (w,u') \rBrace$.
Note, however, that from our initial assumption we know that $x_u=x_w$ and $\deg^+_u(G) = \deg^+_w(G)$. 
Hence, we get that:
\begin{equation}\label{eq:axioms:es3}
\sum_{(s,u') \in \Gamma_{u'}^-(G) - \lBrace (u,u') \rBrace} \frac{x_s}{\deg_s^+(G)} + \frac{x_u}{\deg_u^+(G)}  = 
\sum_{(s,u') \in \Gamma_{u'}^-(G') - \lBrace (w,u') \rBrace} \frac{x_s}{\deg_s^+(G')} + \frac{x_w}{\deg_w^+(G')},
\end{equation}
where we also used the fact that the out-degree of any node is the same in $(G,\bs)$ as in $(G',\bs')$.
This combined with equation~\eqref{eq:axioms:es2} proves equation~\eqref{eq:axioms:es1}.

\subsubsection*{Node Redirect}
Let $u,w$ be out-twins.
Consider a random walk $w^{a*}_{G,\bs}$ that is a modification of a standard random walk with decay $w^a_{G,\bs}$ that does not distinguish between nodes $u$ and $w$, i.e., whenever $w^a_{G,\bs}$ is equal to either $u$ or $w$, sequence $w^{a*}_{G,\bs}$ holds a value $u \lor w$.
Formally, $w^{a*}_{G,\bs}$ is a random sequence of nodes such that for every $t \ge 0$
\begin{equation}
\label{eq:axioms:nr1}
	w^{a*}_{G,\bs}(t) =
	\begin{cases}
		w^a_{G,\bs}(t) & \mbox{if } w^a_{G,\bs}(t) \in V \setminus \{u,w\},\\
		u \lor w & \mbox{otherwise.}
	\end{cases}
\end{equation}
Observe that the initial distribution of $w^{a*}_{G,\bs}$ is the same as in the random walk with decay, except that $\mathbb{P}\big( w^{a*}_{G,\bs}(0)= u \lor w \big) = (\bs(u) + \bs(w))/\bs(G)$.
Transition probabilities of $w^{a*}_{G,\bs}$, i.e., $p^{*}_{s,t}$, for every $s, t \in V \setminus \{u,w\}$ are also the same as in $w^a_{G,\bs}$ and $p^{*}_{s,u \lor w} = p_{s,u} + p_{s,w}$.
Since $u$ and $w$ are out-twins, the probabilities of transition from them are the same, hence also $p^{*}_{u \lor w, t} = p_{u,t} = p_{w,t}$.
Finally, $p^{*}_{u \lor w, u \lor w} = p_{u,u} + p_{w,u} = p_{w,u} + p_{w,w}$.

Now, let us focus on a graph obtained from $(G,\bs)$ by redirecting node $u$ into $w$, i.e., $(G',\bs') = R_{u \rightarrow w}(G,\bs)$.
Consider a random walk with decay on this graph, $w^a_{G',\bs'}$.
Observe that if we identify state $w$ in $w^{a*}_{G,\bs}$ with $u \lor w$ in $w^{a*}_{G,\bs}$, both walks have the same initial distributions and the same transition probabilities.
Hence, each sequence is identically probable in both walks.
Therefore,
\[
	\mathbb{P}\big( w^a_{G',\bs'}(t)= v \big) =
	\mathbb{P}\big( w^{a*}_{G,\bs}(t)= v \big) \quad
	\mbox{for every } t \ge 0 \mbox{ and } v \in V \setminus \{u\}.
\]
Thus, from equation~\eqref{eq:axioms:nr1} and Theorem~\eqref{theorem:rpm} we get that
\[
	PR^a_v(G',\bs') =
	\begin{cases}
		PR^a_v(G,\bs) & \mbox{if } v \in V \setminus \{u,w\} \\
		PR^a_u(G,\bs) + PR_w(G,\bs) & \mbox{otherwise.}
	\end{cases}
\]

\subsubsection*{Baseline} 
Assume $v$ is an isolated node in $(G,\bs)$. 
Since node $v$ has no incoming edges, from PageRank recursive equation~\eqref{eq:pr:main} we get that $PR_v^a(G,\bs) = \bs(v)$.
This proves that PageRank satisfies Baseline.


\section[Independence of Axioms (Theorem 11)]{Independence of Axioms (Theorem~\ref{theorem:independence})}\label{section:appendix:independence}
In this appendix, we present the full proof of Theorem~\ref{theorem:independence}.
We divide it into six lemmas, Lemmas~\ref{lemma:independence:nd}--\ref{lemma:independence:dn}, each characterizing a centrality different than PageRank that satisfies five out of six our axioms.


\begin{lemma}\label{lemma:independence:nd}
A centrality measure $F$ defined for every graph $(G,\bs)$ and node $v$ as follows: 
\[ F_v(G,\bs) = PR^{a(G,\bs)}_v(G,\bs), \quad \quad \mbox{ where } a(G,\bs) = 1/(2 + \bs(G)) \]
satisfies Edge Deletion, Edge Multiplication, Edge Swap, Node Redirect and Baseline, but does not satisfy Node~Deletion.
\end{lemma}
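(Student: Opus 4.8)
The plan is to exploit the fact that the decay factor $a(G,\bs) = 1/(2+\bs(G))$ depends on the weighted graph only through the total weight $\bs(G)$; since $\bs(G) \ge 0$, we have $a(G,\bs) \in (0,1/2] \subseteq [0,1)$, so $F$ is a well-defined centrality measure. The key observation is that each of the four invariance axioms to be verified---Edge Deletion, Edge Multiplication, Edge Swap and Node Redirect---is named after an operation that preserves the total node weight. Indeed, none of these operations changes any node's weight, with the sole exception of Node Redirect, which replaces $\bs(u)$ and $\bs(w)$ by $\bs(u)+\bs(w)$ on the surviving node and hence still leaves $\sum_v \bs(v)$ intact. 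Consequently, for each such operation the decay factor before and after coincides, say equal to some fixed $a$. Then $F$ agrees with ordinary PageRank $PR^a$ for this single fixed decay factor on \emph{both} graphs, and since $PR^a$ satisfies all five invariance axioms (Section~\ref{section:proof-2}), the required equality of centralities---including the additive identity $F_u(G,\bs)+F_w(G,\bs) = F_w(R_{u\to w}(G,\bs))$ demanded by Node Redirect---transfers verbatim to $F$.

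For Baseline, I would note that an isolated node $v$ has no incoming edges, so by the recursive equation~\eqref{eq:pr:main} we get $PR_v^a(G,\bs)=\bs(v)$ for every $a\in[0,1)$, in particular for $a=a(G,\bs)$; hence $F_v(G,\bs)=\bs(v)$.

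The deliberately engineered failure is Node Deletion, where the plan is to produce a counterexample in which deleting an isolated node of positive weight strictly changes $\bs(G)$ and therefore the decay factor. Concretely, take $(G,\bs) = ((\{u,s,t\}, \lBrace(s,t)\rBrace), [1,1,0])$ with $u$ isolated; then $\bs(G)=2$ gives $a(G,\bs)=1/4$, and since $s$ is a source of weight $1$ and out-degree $1$ we obtain $F_t(G,\bs)=PR_t^{1/4}(G,\bs)=\tfrac14\cdot\tfrac11 = 1/4$. Deleting $u$ yields $(G',\bs')=((\{s,t\},\lBrace(s,t)\rBrace),[1,0])$ with $\bs'(G')=1$, so $a(G',\bs')=1/3$ and $F_t(G',\bs')=1/3 \neq 1/4$, violating Node Deletion.

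The main obstacle is bookkeeping rather than depth: one must check carefully that every one of the four invariance operations leaves $\bs(G)$ unchanged, paying particular attention to Node Redirect, whose weight function is redefined on the merged node. Once this weight-invariance is confirmed, the reduction to the fixed-$a$ results of Section~\ref{section:proof-2} is immediate, so the only genuine content beyond routine verification is the explicit counterexample witnessing the failure of Node Deletion.
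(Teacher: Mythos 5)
Your proposal is correct and follows essentially the same route as the paper's proof: reduce each of the weight-preserving invariance operations to the corresponding axiom for $PR^a$ at the single fixed decay factor $a = 1/(2+\bs(G))$ (noting, as you do implicitly, that the Edge Swap hypothesis $F_u = F_w$ translates into the needed PageRank equality at that $a$), and then break Node Deletion by removing an isolated node of positive weight. Your counterexample is the paper's own up to relabeling, so there is nothing further to fix.
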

\begin{proof}
Observe that for any two graphs $(G,\bs)$, $(G',\bs')$ and a node $v$, if we know that $\bs(G)=\bs'(G')$ and $PR_v^a(G,\bs) = PR_v^a(G',\bs')$ for every $a \in [0,1)$, then we have $a(G,\bs) = a(G',\bs')$ and
\begin{equation}\label{eq:independence:nd:2}
F_v(G,\bs) = PR^{a(G,\bs)}(G,\bs) = PR^{a(G',\bs')}_v(G',\bs') = F_v(G',\bs').
\end{equation}
We will consider each axiom separately. 
Fix an arbitrary graph $G=(V,E)$ and node weights $\bs$.

\begin{itemize}
\item For Edge Deletion, consider an arbitrary edge $(u,w) \in E$.
Let $(G',\bs') = ((V,E - \lBrace (u,w) \rBrace), \bs)$.
Clearly, $\bs(G)=\bs'(G')$.
Since PageRank satisfies Edge Deletion, for every $v \in V \setminus S_u(G)$ we have $PR_v^a(G,\bs) = PR_v^a(G',\bs')$ for $a \in [0,1)$ and equation~\eqref{eq:independence:nd:2} implies $F_v(G,\bs) = F_v(G',\bs')$.

\item For Edge Multiplication, consider node $u \in V$ and $k \in \mathbb{N}$.
Let $(G',\bs') = ((V, E \sqcup k \cdot \Gamma^+_u(G)), \bs)$.
Clearly, $\bs(G)=\bs'(G')$.
Since PageRank satisfies Edge Multiplication, equation~\eqref{eq:independence:nd:2} implies $F_v(G,\bs) = F_v(G',\bs')$ for every $v \in V$.

\item For Edge Swap, assume $(u,u'), (w,w') \in E$ are two edges such as $F_u(G,\bs)=F_w(G,\bs)$ and $\deg^+_u(G) = \deg^+_w(G)$.
From the definition of $F$ we get that also $PR^{a(G,\bs)}_u(G,\bs)=PR^{a(G,\bs)}_w(G,\bs)$.
Let $(G',\bs') = ((V, E - \lBrace (u,u'), (w,w') \rBrace \sqcup \lBrace (u,w'), (w,u') \rBrace), \bs)$.
Clearly, $\bs(G)=\bs'(G')$.
Since PageRank satisfies Edge Swap, equation~\eqref{eq:independence:nd:2} implies $F_v(G,\bs) = F_v(G',\bs')$ for every $v \in V$.

\item For Node Redirect, assume $u,w \in V$ are out-twins.
Let $(G',\bs') = R_{u \rightarrow w}(G,\bs)$.
Clearly, $\bs(G)=\bs'(G')$.
Since PageRank satisfies Node Redirect, for every $v \in V \setminus \{u,w\}$ we have $PR_v^a(G,\bs) = PR_v^a(G',\bs')$ for every $a \in [0,1)$ and equation~\eqref{eq:independence:nd:2} implies $F_v(G,\bs) = F_v(G',\bs')$.
Analogously, we get that
\[ F_u(G,\bs) +  F_w(G,\bs) = PR^{a(G,\bs)}_u(G,\bs) + PR^{a(G,\bs)}_w(G,\bs) = PR^{a(G',\bs')}_w(G',\bs') = F_w(G',\bs'). \]

\item For Baseline, assume $v$ is isolated. 
Since PageRank satisfies Baseline we get that $F_v(G,\bs) = \bs(v)$.
\end{itemize}

Finally, consider Node Deletion.
Consider graph $(G,\bs) = ((\{u,v,w\},\lBrace (u,v) \rBrace), [1,0,1])$. We have $a(G,\bs) = 1/4$, so $F_v(G,\bs) = PR_v^{1/4}(G,\bs) = 1/4$. 
Note that $w$ is isolated in $(G,\bs)$.
Now, if we delete node $w$ we will obtain a graph $(G',\bs') = ((\{u,v\},\lBrace (u,v) \rBrace), [1,0])$. Here, we have $a(G',\bs') = 1/3$, so $F_v(G,\bs) = PR_v^{1/3}(G,\bs) = 1/3$. Thus, Node Deletion is not satisfied.
\end{proof}


\begin{lemma}
\label{lemma:independence:ed}
A centrality measure $F$ defined for every graph $(G,\bs)$ and node $v$ and an arbitrary $a \in (0,1)$ as follows: 
\[ F^a_v(G,\bs) = \begin{cases} 2 \cdot PR^a_v(G,\bs) - \bs(v) & \mbox{if } v \mbox{ is a sink,}\\ PR^a_v(G,\bs) & \mbox{otherwise,}\end{cases} \]
satisfies Node Deletion, Edge Multiplication, Edge Swap, Node Redirect and Baseline, but does not satisfy Edge~Deletion.
\end{lemma}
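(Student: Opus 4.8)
The plan is to exploit the fact that $F^a$ and $PR^a$ differ only at sinks: for a sink $v$ one has $F^a_v = 2PR^a_v - \bs(v) = PR^a_v + \big(PR^a_v - \bs(v)\big)$, where the correction $PR^a_v - \bs(v) = a\sum_{(u,v)\in\Gamma^-_v(G)} PR^a_u(G,\bs)/\deg^+_u(G)$ is exactly the incoming flow into $v$. Since PageRank already satisfies all five invariance axioms and Baseline (Section~\ref{section:proof-2}), for each positive claim it suffices to check two things: (i) the operation changes $PR^a$ only in the way the axiom permits, which we inherit from Section~\ref{section:proof-2}; and (ii) it preserves both the weight and the sink/non-sink status of every node whose value must be controlled, so that the two branches of the definition of $F^a$ never switch. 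Granting (i)--(ii), $F^a$ inherits every equality from $PR^a$.

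For Node Deletion, Edge Multiplication and Edge Swap, point (ii) is immediate once one observes that none of these operations turns a sink into a non-sink or vice versa: deleting an isolated node touches no outgoing edge; multiplying the outgoing edges of a node scales its out-degree by a positive factor, leaving it zero iff it was zero; and swapping the targets of two edges preserves every out-degree. For Edge Swap I would additionally note that the two swapped edges $(u,u'),(w,w')$ force $u$ and $w$ to be non-sinks, so the hypothesis $F^a_u=F^a_w$ collapses to $PR^a_u=PR^a_w$ and PageRank's Edge Swap applies verbatim. Baseline is a one-line check: an isolated node $v$ is a sink with $PR^a_v=\bs(v)$, so $F^a_v = 2\bs(v)-\bs(v)=\bs(v)$.

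Node Redirect is the one case needing genuine care, and I expect it to be the main obstacle. Out-twins $u,w$ share their outgoing edges, so they are simultaneously sinks or simultaneously non-sinks. The redirect $R_{u\to w}$ only relocates the targets of the edges pointing into $u$ (replacing each $(x,u)$ by $(x,w)$) and deletes $u$; thus every node $v\neq u,w$ keeps its out-degree and weight, hence its sink status, and PageRank's Node Redirect gives $F^a_v(G')=F^a_v(G)$. For the merged node I would split on the two cases. If $u,w$ are non-sinks, then $w$ stays a non-sink in $G'$ and $F^a_w(G')=PR^a_w(G')=PR^a_u(G,\bs)+PR^a_w(G,\bs)=F^a_u(G,\bs)+F^a_w(G,\bs)$. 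If $u,w$ are sinks, then $w$ is a sink in $G'$ with weight $\bs(u)+\bs(w)$, and the identity $2\big(PR^a_u+PR^a_w\big)-\big(\bs(u)+\bs(w)\big)=\big(2PR^a_u-\bs(u)\big)+\big(2PR^a_w-\bs(w)\big)$ shows $F^a_w(G')=F^a_u(G)+F^a_w(G)$; this additivity of the correction term is exactly what makes Node Redirect survive.

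Finally, to show Edge Deletion fails I would produce a node that turns into a sink upon deletion while its own PageRank and its non-successor status are untouched. Take $(G,\bs)=\big((\{s,u,w\},\lBrace(s,u),(u,w)\rBrace),[1,0,0]\big)$, so $PR^a_u(G,\bs)=a$ and $u$ is a non-sink, giving $F^a_u(G,\bs)=a$. Deleting $(u,w)$ leaves $PR^a_u=a$ unchanged but makes $u$ a sink, so $F^a_u=2a-0=2a$. Since $u\notin S_u(G)$ (there is no cycle through $u$), Edge Deletion demands $F^a_u$ be invariant, yet $a\neq 2a$ for $a\in(0,1)$. The only thing to watch here is that the node becoming a sink must carry nonzero incoming flow (here $\bs(u)=0<a=PR^a_u$), so that the correction $2PR^a_u-\bs(u)$ genuinely differs from $PR^a_u$; a mere source would yield a spurious non-counterexample.
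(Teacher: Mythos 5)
Your proposal is correct and follows essentially the same route as the paper's proof: both reduce each positive axiom to the corresponding property of PageRank by checking that the operation preserves node weights and the sink/non-sink status of every relevant node (with the same case split at Node Redirect and the same observation that the swapped edges force $u,w$ to be non-sinks), and your Edge Deletion counterexample is the paper's graph $\big((\{u,v,w\},\lBrace (u,v),(v,w)\rBrace),[1,0,0]\big)$ up to renaming of the nodes. Incidentally, your additivity identity for the sink case of Node Redirect, $2\big(PR^a_u+PR^a_w\big)-\big(\bs(u)+\bs(w)\big)=\big(2PR^a_u-\bs(u)\big)+\big(2PR^a_w-\bs(w)\big)$, is stated correctly, whereas the paper's displayed computation at that point contains a small typo.
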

\begin{proof}
Fix $a \in (0,1)$.
Observe that for any two graphs $(G,\bs), (G',\bs')$ and a node $v$, if we know that a node $v$ is a sink in $(G,\bs)$ if and only if $v$ is a sink in $(G',\bs')$ and $PR^a_v(G,\bs) = PR^a_v(G',\bs')$, then we have 
\begin{equation}\label{eq:independence:ed:2}
\begin{cases}
F^a_v(G,\bs) = 2 \cdot PR^a_v(G,\bs) - \bs(v) = 2 \cdot PR^a_v(G',\bs') - \bs'(v) = F^a_v(G',\bs') & \mbox{if } v \mbox{ is a sink in }(G,\bs),\\
F^a_v(G,\bs) = PR^a_v(G,\bs) = PR^a_v(G',\bs') = F^a_v(G',\bs') & \mbox{otherwise.}
\end{cases}
\end{equation}

We will consider each axiom separately. 
Fix an arbitrary graph $G=(V,E)$ and node weights $\bs$.

\begin{itemize}
\item For Node Deletion, assume $u$ is an isolated node. 
Let $(G',\bs') = ((V \setminus \{u\}, E), \bs)$. 
Fix $v \in V \setminus \{u\}$.
Note that $v$ is a sink in $(G,\bs)$ if and only if it is a sink in $(G',\bs')$.
Since PageRank satisfies Node Deletion, equation~\eqref{eq:independence:ed:2} implies $F^a_v(G,\bs) = F^a_v(G',\bs')$.

\item For Edge Multiplication, consider $u \in V$ and $k \in \mathbb{N}$.
Let $(G',\bs') = ((V, E \sqcup k \cdot \Gamma^+_u(G)), \bs)$.
Fix $v \in V$.
Note that $v$ is a sink in $(G,\bs)$ if and only if it is a sink in $(G',\bs')$.
Since PageRank satisfies Edge Multiplication, equation~\eqref{eq:independence:ed:2} implies $F^a_v(G,\bs) = F^a_v(G',\bs')$.

\item For Edge Swap, assume $(u,u'), (w,w') \in E$ are two edges such as $F_u(G,\bs)=F_w(G,\bs)$ and $\deg^+_u(G) = \deg^+_w(G)$.
Since $u$ and $w$ are not sinks, from the definition of $F$ this means that also $PR^a_u(G,\bs)=PR^a_w(G,\bs)$.
Let $(G',\bs') = ((V, E - \lBrace (u,u'), (w,w') \rBrace \sqcup \lBrace (u,w'), (w,u') \rBrace), \bs)$.
Fix $v \in V$.
Note that $v$ is a sink in $(G,\bs)$ if and only if it is a sink in $(G',\bs')$.
Since PageRank satisfies Edge Swap, equation~\eqref{eq:independence:ed:2} implies $F^a_v(G,\bs) = F^a_v(G',\bs')$.

\item For Node Redirect, assume $u,w \in V$ are out-twins.
Let $(G',\bs') = R_{u \rightarrow w}(G,\bs)$.
Fix $v \in V \setminus \{u,w\}$.
Note that $v$ is a sink in $(G,\bs)$ if and only if it is a sink in $(G',\bs')$.
Since PageRank satisfies Node Redirect, equation~\eqref{eq:independence:ed:2} implies $F_v(G,\bs) = F_v(G',\bs')$.
Analogously, we get that
\[ F^a_u(G,\bs) + F^a_w(G,\bs) = 2 \big( PR^a_u (G,\bs) - \bs(u) + PR^a_w (G,\bs) - \bs(w) \big) = 2 \cdot PR^a_w(G',\bs') - \bs(v) = F^a_w(G',\bs') \]
if $u$ and $w$ are sinks and if not, then:
\[ F^a_u(G,\bs) + F^a_w(G,\bs) = PR^a_u (G,\bs) + PR^a_w (G,\bs) = PR^a_w(G',\bs') = F^a_w(G',\bs'). \]

\item For Baseline, assume $v$ is isolated. 
Since $v$ is as sink and PageRank satisfies Baseline, we get that $F^a_v(G,\bs) = 2 \cdot PR^a_v(G,\bs) - \bs(v) = 2 \cdot \bs(v) - \bs(v) = \bs(v)$.
\end{itemize}

Finally, consider Edge Deletion.
Consider graph $(G,\bs) = ((\{u,v,w\},\lBrace (u,v), (v,w) \rBrace), [1,0,0])$. 
In $(G,\bs)$ node $v$ is not a sink, hence $F^a_v(G,\bs) = PR^a_v(G,\bs) = a$.
Note that $v$ is not its own successor in $G$.
Now, if we delete edge $(v,w)$ we will obtain a graph $(G',\bs') = ((\{u,v,w\}, \lBrace (u,v) \rBrace), [1,0,0])$.
Here, $v$ is a sink, hence $F_v(G',\bs') = 2 \cdot PR^a_v(G,\bs) = 2 \cdot a$.
Thus, Edge Deletion is not satisfied.
\end{proof}


\begin{lemma}\label{lemma:independence:oh}
A centrality measure $F$ defined for every graph $(G,\bs)$ and node $v$ and an arbitrary $a \in (0,1)$ as follows: 
\begin{equation}\label{eq:independence:oh}
F^a_v(G,\bs) = a \cdot \left( \sum_{(u,v) \in \Gamma^-_v(G)} \frac{F^a_u(G,\bs)}{\deg_u^+(G) + 1} \right) + \bs(v)
\end{equation}
satisfies Node Deletion, Edge Deletion, Edge Swap, Node Redirect and Baseline, but does not satisfy Edge~Multiplication.
\end{lemma}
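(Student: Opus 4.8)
The plan is to reduce every claim about $F^a$ to the corresponding property of ordinary PageRank, which is already known to satisfy all the axioms by Section~\ref{section:proof-2}. The key device is an \emph{augmented graph} $\hat G$: given $(G,\bs)$ with $G=(V,E)$, I add one new sink $\top$ of weight $0$ and, for every non-sink node $u\in V$, a single edge $(u,\top)$; call the result $(\hat G,\hat\bs)$. Every non-sink $u$ then has $\deg^+_u(\hat G)=\deg^+_u(G)+1$, while the incoming edges of any original node are untouched (all new edges end in $\top$, and $\top$ has no outgoing edges). Hence, for $v\in V$, the PageRank equation of $(\hat G,\hat\bs)$ reads $PR^a_v(\hat G,\hat\bs)=a\sum_{(u,v)\in\Gamma^-_v(G)}PR^a_u(\hat G,\hat\bs)/(\deg^+_u(G)+1)+\bs(v)$, which is exactly the defining system~\eqref{eq:independence:oh} of $F^a$. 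This $V$-subsystem is closed (every predecessor of an original node is original) and, being a strictly sub-stochastic system scaled by $a<1$, has a unique solution; since PageRank on $\hat G$ is well defined, I conclude that $F^a$ is well defined and that $F^a_v(G,\bs)=PR^a_v(\hat G,\hat\bs)$ for every $v\in V$.

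With this identity, each invariance axiom for $F^a$ follows from the same axiom for PageRank applied to $\hat G$, once I check that augmenting the transformed input agrees (up to a PageRank-preserving move) with transforming the augmented graph. For Node Deletion this is immediate: an isolated node of $G$ is isolated in $\hat G$, so $\widehat{G\setminus\{u\}}=\hat G\setminus\{u\}$ and PageRank's Node Deletion applies. For Edge Swap, the hypotheses $F^a_u(G)=F^a_w(G)$ and $\deg^+_u(G)=\deg^+_w(G)$ lift to $PR^a_u(\hat G)=PR^a_w(\hat G)$ and $\deg^+_u(\hat G)=\deg^+_w(\hat G)$; the swap neither touches the phantom edges nor alters any out-degree, so augmenting the swapped graph yields exactly the swap performed on $\hat G$, and PageRank's Edge Swap gives the claim. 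For Node Redirect, out-twins $u,w$ of $G$ (equal multisets of out-neighbors) remain out-twins in $\hat G$, since each merely acquires a single edge to the common node $\top$; redirecting $u$ into $w$ in $\hat G$ produces precisely $\widehat{R_{u\rightarrow w}(G)}$, with the correct merged weight $\bs(u)+\bs(w)$ on $w$ and weight $0$ on $\top$. PageRank's Node Redirect then delivers both the invariance for $v\in V\setminus\{u,w\}$ and the summation identity on $w$. Baseline is direct from~\eqref{eq:independence:oh}: an isolated $v$ has $\Gamma^-_v(G)=\emptyset$, whence $F^a_v(G,\bs)=\bs(v)$.

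The step requiring care—and the main obstacle—is Edge Deletion, because deleting the last outgoing edge of $u$ turns $u$ into a sink and thereby strips it of its phantom edge, so $\widehat{G-\lBrace(u,w)\rBrace}$ is not $\hat G-\lBrace(u,w)\rBrace$ but $\hat G-\lBrace(u,w)\rBrace-\lBrace(u,\top)\rBrace$. I plan to absorb this with a double application of PageRank's Edge Deletion on $\hat G$: first remove $(u,w)$, and then, only when $u$ has become a sink, remove $(u,\top)$ as well. Both removals are harmless for the intended nodes, because for $v\in V$ one has $v\notin S_u(\hat G)$ exactly when $v\notin S_u(G)$, and after the first deletion the only successor of $u$ is $\top$; thus every $v\in V\setminus S_u(G)$ retains its value and $F^a_v(G-\lBrace(u,w)\rBrace)=F^a_v(G)$.

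Finally, to show that Edge Multiplication \emph{fails}, I exhibit a one-edge witness. For $(G,\bs)=((\{u,v\},\lBrace(u,v)\rBrace),[1,0])$ we have $\deg^+_u(G)=1$, so $F^a_v(G,\bs)=a\cdot 1/(1+1)=a/2$. Adding one copy of the outgoing edges of $u$ (Edge Multiplication with $k=1$) gives two parallel edges $(u,v)$, hence $\deg^+_u=2$ and $F^a_v=a\cdot(1/3+1/3)=2a/3$. Since $a/2\neq 2a/3$ for every $a\in(0,1)$, the centrality of $v$ changes, so $F^a$ violates Edge Multiplication, which completes the proof.
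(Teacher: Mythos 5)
Your proposal is correct and follows essentially the same route as the paper: both realize $F^a$ as ordinary PageRank on an augmented graph with one extra zero-weight sink absorbing the ``$+1$'' in the denominator, then transfer each axiom by checking that augmentation commutes with the graph operation (and both use the identical two-node counterexample, $a/2$ versus $2a/3$, against Edge Multiplication). The only difference is cosmetic: the paper attaches the extra edge $(v,t)$ to \emph{every} node, which makes its delicate case Node Deletion (an isolated node is no longer isolated after augmentation, fixed by deleting $(u,t)$ first), whereas you attach it only to non-sinks, which relocates the same bookkeeping to Edge Deletion (deleting a node's last edge also strips its phantom edge), and you resolve it by the same device of a double application of PageRank's Edge Deletion.
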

\begin{proof}
Fix $a \in (0,1)$.
Let us start by showing that the centrality $F^a$ is well defined, i.e., that the recursive equation~\eqref{eq:independence:oh} has a unique solution.
To this end, consider a graph operation $f$ that for every graph $G=(V,E)$ with node weights $\bs$ adds a new node $t$ with a zero weight and adds one edge from every node in $V$ to $t$, i.e.:
$$f(G,\bs)=((V \cup \{t\}, E \sqcup \lBrace (v,t) : v \in V \rBrace), \bs_f)$$
where $\bs_f(t)=0$ and $\bs_f(v) = \bs(v)$ for every $v \in V$.
Observe that for every graph $(G, \bs)$ and node $v$ PageRank recursive equation~\eqref{eq:pr:main} for graph $f(G,\bs)$ and node $v$ is the same as equation~\eqref{eq:independence:oh}.
Hence, we get that centrality $F^a$ in graph $(G,\bs)$ is equal to PageRank in graph $f(G,\bs)$: $F^a(G,\bs) = PR^a(f(G,\bs))$.

Consider a second graph operation $h$.
Observe that if for some graph $(G,\bs)$ and node $v$ we have $f(h(G,\bs)) = h(f(G,\bs))$ and $PR^a_v(f(G,\bs)) = PR^a_v(h(f(G,\bs)))$, then 
\begin{equation}\label{eq:independence:oh:2}
F^a_v(G,\bs) = PR^a_v(f(G,\bs)) = PR^a_v(h(f(G,\bs))) = PR^a_v(f(h(G,\bs))) = F^a_v(h(G,\bs)).
\end{equation}

Now, we are ready for the axiomatic analysis of centrality $F^a$.
We will consider each axiom separately. 
Fix an arbitrary graph $G=(V,E)$ and node weights $\bs$.

\begin{itemize}
\item For Node Deletion, assume $u$ is an isolated node. 
Let us define a graph operation $h$ that for an arbitrary graph $(G',\bs') = ((V',E'),\bs')$ deletes node $u$ and all its edges: $h(G',\bs') = ((V' \setminus \{u\}, E' - \Gamma^+_u(G')-\Gamma^-_u(G')), \bs'')$ where $\bs''(v') = \bs'(v')$ for every $v' \in V'$. 
Fix $v \in V \setminus \{u\}$.
\begin{itemize}
\item[--] Clearly, operations $f$ and $h$ are commutative, so $f(h(G,\bs)) = h(f(G,\bs))$.
\item[--] Consider PageRank of $v$ in graph $f(G,\bs)$.
Note that in $f(G,\bs)$ node $u$ is no longer isolated, but has one outgoing edge to node $t$.
Nevertheless, since PageRank satisfies Edge Deletion, we know that removing this edge does not affect PageRank of $v$.
If edge $(u,t)$ is removed, then $u$ is isolated, so since PageRank satisfies Node Deletion, this node can also be removed without affecting PageRank of $v$.
Hence, we get that $PR^a_v(f(G,\bs)) = PR^a_v(h(f(G,\bs)))$.
\end{itemize}
As a result, equation~\eqref{eq:independence:oh:2} implies $F^a_v(G,\bs) = F^a_v(h(G,\bs))$.

\item For Edge Deletion, consider an arbitrary edge $(u,w) \in E$. 
Let us define a graph operation $h$ that for an arbitrary graph $(G',\bs') = ((V',E'), \bs')$ deletes edge $(u,w)$ from the graph: $h(G',\bs') = ((V', E' - \lBrace (u,w) \rBrace), \bs')$. 
Fix $v \not \in S_u(G)$.
\begin{itemize}
\item[--] Clearly, operations $f$ and $h$ are commutative, so $f(h(G,\bs)) = h(f(G,\bs))$.
\item[--] Note that $v \not \in S_u(G)$ implies $v \not \in S_u(f(G))$. 
Hence, since PageRank satisfies Edge Deletion, we get that $PR^a_v(f(G,\bs)) = PR^a_v(h(f(G,\bs)))$.
\end{itemize}
As a result, equation~\eqref{eq:independence:oh:2} implies $F^a_v(G,\bs) = F^a_v(h(G,\bs))$.

\item For Edge Swap, assume $(u,u'), (w,w') \in E$ are two edges such as $F^a_u(G,\bs)=F^a_w(G,\bs)$ and $\deg^+_u(G) = \deg^+_w(G)$.
Let us define a graph operation $h$ that for an arbitrary graph $(G',\bs') = ((V',E'), \bs')$ deletes edges $(u,u'), (w,w')$ and adds edges $(u,w'), (w,u')$: $h(G',\bs') = ((V', E' - \lBrace (u,u'), (w,w') \rBrace \sqcup \lBrace (u,w'), (w,u') \rBrace), \bs')$.
Fix $v \in V$.
\begin{itemize}
\item[--] Clearly, operations $f$ and $h$ are commutative, so $f(h(G,\bs)) = h(f(G,\bs))$.
\item[--] From the definition of $F^a$ we get that $F^a_u(G,\bs)=F^a_w(G,\bs)$ implies $PR^a_u(f(G,\bs)) = PR^a_w(f(G,\bs))$.
Also, $\deg^+_u(f(G)) = \deg^+_w(f(G)) = \deg^+_u(G) + 1$.
Hence, since PageRank satisfies Edge Swap, we have $PR^a_v(f(G,\bs)) = PR^a_v(h(f(G,\bs)))$.
\end{itemize}
As a result, equation~\eqref{eq:independence:oh:2} implies $F^a_v(G,\bs) = F^a_v(h(G,\bs))$.

\item For Node Redirect, assume $u,w \in V$ are out-twins. 
Let us define a graph operation $h$ as redirecting $u$ into $w$, i.e., $h = R_{u \rightarrow w}$.
Take an arbitrary $v \in V \setminus \{u,w\}$.
\begin{itemize}
\item[--] Clearly, operations $f$ and $h$ are commutative, so $f(h(G,\bs)) = h(f(G,\bs))$.
\item[--] Note that the fact that $u,w$ are out-twins in $(G,\bs)$ implies that they are also out-twins in $f(G,\bs)$.
Hence, since PageRank satisfies Node Redirect, we have $PR^a_v(f(G,\bs)) = PR^a_v(h(f(G,\bs)))$.
\end{itemize}
As a result, equation~\eqref{eq:independence:oh:2} implies $F^a_v(G,\bs) = F^a_v(h(G,\bs))$.
Analogously, we get that
\begin{multline*}
    F^a_u(G,\bs) + F^a_w(G,\bs) = PR^a_u(f(G,\bs)) + PR^a_w(f(G,\bs)) =  \\
    PR^a_w(h(f(G,\bs))) = PR^a_w(f(h(G,\bs))) = F^a_w(h(G,\bs)).
\end{multline*}

\item For Baseline, assume $v$ is isolated. From PageRank recursive equation~\eqref{eq:pr:main} we get that $F^a_v(G,\bs) = \bs(v)$.
\end{itemize}

Finally, consider Edge Multiplication. 
Consider graph $(G,\bs) = ((\{u,v\},\lBrace (u,v) \rBrace), [1,0])$.
We have that $F^a_u(G,\bs) = 1$ and $F^a_v(G,\bs) = 1/2 \cdot a$.
Now, if we add an additional copy of outgoing edges of node $u$ we will obtain a graph $(G',\bs') = ((\{u,v\},\lBrace (u,v), (u,v) \rBrace), [1,0])$.
Here, we have that $F^a_u(G,\bs) = 1$ and $F^a_v(G,\bs) = 2/3 \cdot a$.
Thus, Edge Multiplication is not satisfied.
\end{proof}


\begin{lemma}\label{lemma:independence:es}
A centrality measure $F$ defined for every graph $(G,\bs)$ and node $v$ as follows: 
\[
F_v(G,\bs) = \sum_{(u,v) \in \Gamma^-_v(G)} \frac{\bs(u)}{\deg_u^+(G)} + \bs(v)
\]
satisfies Node Deletion, Edge Deletion, Edge Multiplication, Node Redirect and Baseline, but does not satisfy Edge~Swap.
\end{lemma}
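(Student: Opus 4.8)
The plan is to exploit the fact that $F$ is a \emph{one-step} local measure: the value $F_v(G,\bs)$ depends only on the baseline weight $\bs(v)$ together with, for each incoming edge $(u,v)$, the pair $\big(\bs(u),\deg^+_u(G)\big)$. Every invariance axiom except Edge Swap then reduces to checking that the operation changes neither the incoming edges of the node in question, nor the weights and out-degrees of their sources. I would dispatch Node Deletion, Edge Deletion and Edge Multiplication first, as they are quick. Deleting an isolated $u$ removes no incoming edge of any $v\neq u$ and alters no source out-degree, so each summand of $F_v$ survives. Removing an edge $(u,w)$ only changes $\deg^+_u(G)$ and the incoming edges of successors of $u$; a node $v\notin S_u(G)$ neither has $u$ among its predecessors nor lies downstream of the change, so $F_v$ is untouched. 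For Edge Multiplication, adding $k$ copies of $\Gamma^+_u(G)$ turns $\deg^+_u$ into $(k+1)\deg^+_u$ and multiplies each $\#_{(u,z)}(G)$ by $k+1$, so the contribution $\#_{(u,z)}(G)\,\bs(u)/\deg^+_u(G)$ of $u$ to any direct successor $z$ is unchanged, and no other term moves.

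The main obstacle is Node Redirect, which I would settle by direct bookkeeping on multiplicities, writing $(G',\bs')=R_{u\to w}(G,\bs)$ and using the out-twin hypothesis $\#_{(u,z)}(G)=\#_{(w,z)}(G)$ for every $z$, so in particular $\deg^+_u(G)=\deg^+_w(G)=:d$. The first step is to note the operation preserves out-degrees of all surviving sources: a node $x\neq u$ loses its edges $(x,u)$ but gains exactly as many redirected edges $(x,w)$, hence $\deg^+_x(G')=\deg^+_x(G)$, and in particular $\deg^+_w(G')=d$. Then for $v\neq u,w$ the incoming edges in $G'$ are precisely the original ones with those from $u$ deleted while those from $w$ persist; using $\#_{(u,v)}(G)=\#_{(w,v)}(G)$ and $\bs'(w)=\bs(u)+\bs(w)$, the combined contribution $\#_{(u,v)}(G)\,\bs(u)/d+\#_{(w,v)}(G)\,\bs(w)/d$ in $G$ collapses to the single term $\#_{(w,v)}(G)\,(\bs(u)+\bs(w))/d$ in $G'$, giving $F_v(G')=F_v(G)$.

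The delicate part is the second claim $F_u(G,\bs)+F_w(G,\bs)=F_w(G',\bs')$, which I would prove by expanding both sides over sources $x$. The weight shift $\bs'(w)=\bs(u)+\bs(w)$ absorbs the two baseline terms, leaving me to match the predecessor sums. For $x\neq u,w$ the two agree at once, since $\Gamma^-_w(G')$ collects both the original $(x,w)$ edges and the redirected $(x,u)$ edges while out-degrees are preserved. The terms $x\in\{u,w\}$ are where the out-twin identities $\#_{(u,u)}(G)=\#_{(w,u)}(G)$ and $\#_{(u,w)}(G)=\#_{(w,w)}(G)$ become essential: they make the coefficient of $\bs(u)$ on the left, $\#_{(u,u)}(G)+\#_{(u,w)}(G)$, equal to the coefficient $\#_{(w,u)}(G)+\#_{(w,w)}(G)$ multiplying $\bs(u)$ inside $\bs'(w)$ on the right, whereas the coefficient of $\bs(w)$ matches with no hypothesis at all. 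Baseline is immediate, since an isolated node has an empty predecessor sum and hence $F_v=\bs(v)$.

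Finally, to show Edge Swap fails I would exhibit a small witness where two nodes share the same $F$-value and out-degree but different weights, so their transferred amounts $\bs(u)/\deg^+_u$ differ. Take nodes $p,u,w,u',w'$ with $\bs(p)=\bs(u)=2$ and all remaining weights $0$, and edge multiset $\lBrace (p,w),(u,u'),(w,w')\rBrace$. Then $F_u=\bs(u)=2$ and $F_w=\bs(w)+\bs(p)/\deg^+_p(G)=0+2/1=2$, while $\deg^+_u(G)=\deg^+_w(G)=1$, so the hypotheses of Edge Swap hold for $(u,u')$ and $(w,w')$. Swapping their ends to $(u,w'),(w,u')$ replaces the contribution $\bs(u)/1=2$ reaching $u'$ with $\bs(w)/1=0$, so $F_{u'}$ drops from $2$ to $0$; thus Edge Swap is violated, completing the argument.
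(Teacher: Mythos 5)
Your proposal is correct and follows essentially the same route as the paper's proof: the same observation that $F_v$ depends only on $\bs(v)$ and, for each direct predecessor $u$, on $\#_{(u,v)}(G)\,\bs(u)/\deg^+_u(G)$, the same quick dispatch of Node Deletion, Edge Deletion, Edge Multiplication and Baseline, and the same multiplicity bookkeeping for Node Redirect via the out-twin identities $\#_{(u,u)}(G)+\#_{(u,w)}(G)=\#_{(w,u)}(G)+\#_{(w,w)}(G)$. Your Edge Swap counterexample is also the same idea as the paper's---two nodes with equal $F$-value and out-degree but unequal weights, so the transferred amounts $\bs(u)/\deg^+_u$ differ---only realized on a five-node witness where the paper uses a three-node one, namely $\big((\{u,v,w\},\lBrace (u,v),(v,w)\rBrace),[1,0,0]\big)$ with the swap producing $(u,w),(v,v)$.
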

\begin{proof}
Note that the centrality $F$ of node $v$ depends solely on direct predecessors of $v$, proportion of their edges going to $v$, and weights of the predecessors and node $v$.
Specifically, we have that:
\begin{equation}\label{eq:independence:es:2}
F_v(G,\bs) = \sum_{u \in P^1_v(G)} \frac{\#_{(u,v)}(G)}{\deg_u^+(G)} \cdot \bs(u) + \bs(v)
\end{equation}

We will consider each axiom separately. 
Fix an arbitrary graph $G = (V,E)$ and node weights $\bs$.

\begin{itemize}
\item For Node Deletion, assume $u$ is an isolated node. 
Since $u$ is isolated, it is not a direct predecessor of any node and we get that $F_v(G,\bs) = F_v((V \setminus \{u\},E),\bs)$ for every $v \in V \setminus \{u\}$.

\item For Edge Deletion, consider arbitrary edge $(u,w) \in E$. 
For every node $v \not \in S_u(G)$ node $u$ is not a direct predecessor and we have $F_v(G,\bs) = F_v((V, E - \lBrace (u,w) \rBrace), \bs)$.

\item For Edge Multiplication, consider node $u \in V$ and $k \in \mathbb{N}$. 
Let $(G',\bs') = ((V,E \sqcup k \cdot \Gamma_u^+(G), \bs)$.
Observe that we have $\#_{(u,v)}(G)/\deg_u^+(G) = \#_{(u,v)}(G')/\deg_u^+(G')$ for every $u,v \in V$.
Also, for every $v \in V$, the weight and the set of direct predecessors in both graphs is the same: $\bs(v) = \bs'(v)$ and $P_v^1(G) = P_v^1(G')$.
Hence, from equation~\eqref{eq:independence:es:2} we get that $F_v(G,\bs) = F_v(G',\bs')$ for every $v \in V$.

\item For Node Redirect, assume $u,w \in V$ are out-twins. 
Let $(G',\bs') = R_{u \rightarrow w}(G,\bs)$.
Note that $\deg_u^+(G) = \deg_w^+(G)$ and $\#_{(u,v)}(G) = \#_{(w,v)}(G)$ for every $v \in V$.
Moreover, out-degrees of all nodes in $(G,\bs)$ are the same as in $(G',\bs')$.
Hence, for every $v \in V \setminus \{u,w\}$ we have:
\[
F_v(G,\bs) = \sum_{(s,v) \in \Gamma^-_u(G), s \not \in \{u,w\}} \frac{\bs(s)}{\deg_s^+(G)} + \frac{\#_{(w,v)}(G)}{\deg_w^+(G)} \cdot (\bs(u)+\bs(w)) + \bs(v) = F_v(G',\bs').
\]
For node $w$, we have that $\left(\#_{(u,u)}(G) + \#_{(u,w)}(G)\right) = \left(\#_{(w,u)}(G) + \#_{(w,w)}(G)\right) = \#_{(w,w)}(G')$. 
So we get that:
\begin{multline*}
    F_u(G,\bs) + F_w(G,\bs) = \\
    \sum_{(s,v) \in \Gamma^-_u(G) \sqcup \Gamma^-_w(G), s \not \in \{u,w\}} \frac{\bs(s)}{\deg_s^+(G)} + \frac{\#_{(u,u)}(G) + \#_{(u,w)}(G)}{\deg_w^+(G)} (\bs(u) + \bs(w)) = F_w(G',\bs').
\end{multline*}

\item For Baseline, assume $v$ is isolated. Since $v$ has no direct predecessors we have $F_v(G,\bs) = \bs(v)$.

\end{itemize}

Finally, consider Edge Swap.
Consider graph $(G,\bs) = ((\{u,v,w\},\lBrace (u,v),(v,w) \rBrace), [1,0,0])$.
We have that $F_u(G,\bs) = 1$, $F_v(G,\bs) = 1$ and $F_w(G,\bs) = 0$.
Observe that nodes $u$ and $v$ both have one outgoing edge and equal centralities.
Now, if we replace edges $(u,v), (v,w)$ with edges $(u,w), (v,v)$ we will obtain a graph $(G',\bs') = ((\{u,v,w\},\lBrace (u,w),(v,v) \rBrace), [1,0,0])$.
Here, we have that $F_w(G',\bs') = 1$.
Thus, Edge Swap is not satisfied.
\end{proof}


\begin{lemma}
\label{lemma:independence:ts}
A centrality measure $F$ defined for every graph $(G,\bs)$ and node $v$ as follows: 
\[
F_v(G,\bs) = \sum_{(u,v) \in \Gamma^-_v(G)} \frac{1}{\deg_u^+(G)} + \bs(v)
\]
satisfies Node Deletion, Edge Deletion, Edge Multiplication, Edge Swap and Baseline, but does not satisfy Node~Redirect.
\end{lemma}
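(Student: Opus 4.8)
The plan is to observe that this measure is the beta measure augmented by the node weight, and to reuse the structural idea of Lemma~\ref{lemma:independence:es}: the value at $v$ depends only on the direct predecessors of $v$, the proportion of their edges reaching $v$, and the weight of $v$ itself. I would therefore begin by rewriting
\[ F_v(G,\bs) = \sum_{u \in P^1_v(G)} \frac{\#_{(u,v)}(G)}{\deg^+_u(G)} + \bs(v), \]
which is the analogue of equation~\eqref{eq:independence:es:2} with each predecessor weight $\bs(u)$ replaced by the constant $1$. Because this is an explicit, non-recursive formula, no well-definedness argument is required (in contrast to Lemma~\ref{lemma:independence:oh}), and each positive axiom reduces to a short remark.

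For the five satisfied axioms I would argue as follows. \emph{Node Deletion}: an isolated node is nobody's direct predecessor, so its removal leaves every summand and every weight of the remaining nodes intact. \emph{Edge Deletion} of $(u,w)$: for $v \notin S_u(G)$ one has $u \notin P^1_v(G)$ and $w \neq v$, so the deletion changes only the out-degree $\deg^+_u(G)$, a quantity that does not occur in $F_v$. \emph{Edge Multiplication}: adjoining $k$ copies of $\Gamma^+_u(G)$ multiplies both $\#_{(u,v)}(G)$ and $\deg^+_u(G)$ by $k+1$, leaving every ratio unchanged. \emph{Baseline}: an isolated $v$ has empty $P^1_v(G)$, whence $F_v(G,\bs) = \bs(v)$.

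The case I would single out as the main point of interest is \emph{Edge Swap}, because here the measure satisfies the axiom for a reason different from the feedback mechanism that governs PageRank. Swapping the ends of $(u,u')$ and $(w,w')$ replaces, in $F_{u'}$, the summand $1/\deg^+_u(G)$ by the summand $1/\deg^+_w(G)$ coming from the new edge $(w,u')$, and symmetrically in $F_{w'}$; all other nodes are untouched. Since a swap leaves every out-degree unchanged and the hypothesis gives $\deg^+_u(G) = \deg^+_w(G)$, these replacements change nothing. I would note explicitly that the equal-centrality hypothesis $F_u(G,\bs) = F_w(G,\bs)$ is never used---only equal out-degrees are needed.

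Finally, for the negative part I would exhibit a minimal counterexample to \emph{Node Redirect} consisting of two source out-twins feeding a common sink. Taking $(G,\bs) = ((\{u,w,v\}, \lBrace (u,v),(w,v)\rBrace), [0,0,0])$, where $u$ and $w$ each have a single edge to $v$ and are hence out-twins, gives $F_v(G,\bs) = 1/\deg^+_u(G) + 1/\deg^+_w(G) = 2$. Redirecting $u$ into $w$ deletes $u$ together with edge $(u,v)$ and rewires no incoming edges of $u$ (it has none), producing $(\{w,v\},\lBrace (w,v)\rBrace)$ with $F_v = 1$. As $v \notin \{u,w\}$, Node Redirect would require these values to coincide, so the axiom fails. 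The only subtlety here, and the one place I would check carefully, is that redirecting a source contributes no rewired edges, so the two predecessors genuinely collapse into one and the drop from $2$ to $1$ is real.
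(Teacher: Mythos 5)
Your proposal is correct and matches the paper's proof essentially step for step: the same rewriting $F_v(G,\bs) = \sum_{u \in P^1_v(G)} \#_{(u,v)}(G)/\deg^+_u(G) + \bs(v)$, the same axiom-by-axiom observations (including that Edge Swap holds purely because out-degrees are preserved and equal, with the equal-centrality hypothesis unused, exactly as in the paper's computation for $F_{u'}$), and the identical counterexample $((\{u,v,w\},\lBrace (u,v),(w,v)\rBrace),[0,0,0])$ for Node Redirect. Nothing is missing; the argument is complete as written.
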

\begin{proof}
Note that the centrality $F$ of node $v$ depends solely on its weight, its direct predecessors and the proportion of their edges going to $v$.
Specifically, we have that:
\begin{equation}\label{eq:independence:ts:2}
F_v(G,\bs) = \sum_{u \in P^1_v(G)} \frac{\#_{(u,v)}(G)}{\deg_u^+(G)} + \bs(v)
\end{equation}

We will consider each axiom separately. 
Fix an arbitrary graph $G = (V,E)$ and node weights $\bs$.

\begin{itemize}
\item For Node Deletion, assume $u$ is an isolated node. 
Since $u$ is isolated, it is not a direct predecessor of any node and we get that $F_v(G,\bs) = F_v((V \setminus \{u\},E),\bs)$ for every $v \in V \setminus \{u\}$.

\item For Edge Deletion, consider arbitrary edge $(u,w) \in E$. 
For every $v \not \in S_u(G)$ node $u$ is not a direct predecessor and we have $F_v(G,\bs) = F_v((V, E - \lBrace (u,w) \rBrace), \bs)$.

\item For Edge Multiplication, consider node $u \in V$ and $k \in \mathbb{N}$. 
Let $(G',\bs') = ((V,E \sqcup k \cdot \Gamma_u^+(G), \bs)$.
Observe that we have $\#_{(u,v)}(G')/\deg_u^+(G') = \#_{(u,v)}(G)/\deg_u^+(G)$ for every $u,v \in V$.
Also, for every $v \in V$, the weight and the set of direct predecessors in both graphs is the same: $\bs(v) = \bs'(v)$ and $P_v^1(G) = P_v^1(G')$.
Hence, from equation~\eqref{eq:independence:ts:2} we get that $F_v(G,\bs) = F_v(G',\bs')$ for every $v \in V$.

\item For Edge Swap, assume $(u,u'), (w,w') \in E$ are two edges such as $F_u(G,\bs)=F_w(G,\bs)$ and $\deg^+_u(G) = \deg^+_w(G)$.
Let $(G',\bs') = ((V, E - \lBrace (u,u'), (w,w') \rBrace \sqcup \lBrace (u,w'), (w,u') \rBrace), \bs)$.
Note that the out-degree and the weight of every node in $(G,\bs)$ is the same as in $(G',\bs')$: $\deg_v^+(G)=\deg_v^+(G')$ and $\bs(v) = \bs'(v)$ for every $v \in V$.
Hence, for every node $v$ with the same set of incoming edges in both graphs, i.e., $v \in V \setminus \{u',w'\}$, we get that $F_v(G,\bs) = F_v(G',\bs')$.
Now, observe that $\deg^+_w(G) = \deg^+_u(G')$. Hence, $1/\deg^+_u(G) = 1/\deg^+_w(G')$ which implies that
\[
F_{u'}(G,\bs) = \sum_{(s,u') \in \Gamma_{u'}^-(G') - \lBrace (w,u') \rBrace} \frac{1}{\deg_s^+(G')} + \frac{1}{\deg_w^+(G')} + \bs'(v) = F_{u'}(G', \bs').
\]
For $w'$ we get analogically $F_{w'}(G,\bs) = F_{w'}(G',\bs')$.

\item For Baseline, assume $v$ is isolated. Since $v$ has no direct predecessors we have $F_v(G,\bs) = \bs(v)$.

\end{itemize}

Finally, consider Node Redirect.
Consider graph $(G,\bs) = ((\{u,v,w\}, \lBrace (u,v),(w,v) \rBrace), [0,0,0])$.
We have $F_v(G,\bs) = 2$.
Note that $u$ and $w$ are out-twins.
Now, if we redirect node $u$ into $w$ we will get graph $(G',\bs') = ((\{v,w\},\lBrace (w,v) \rBrace), [0,0])$.
Here, we have that $F_v(G',\bs') = 1$.
Thus, Node Redirect is not satisfied.
\end{proof}


\begin{lemma}\label{lemma:independence:dn}
A centrality measure $F$ defined for every graph $(G,\bs)$ and node $v$ and an arbitrary $a \in (0,1)$ as follows: 
\[ F^a_v(G,\bs) = 2 \cdot PR^a_v(G,\bs) \]
satisfies Node Deletion, Edge Deletion, Edge Multiplication, Edge Swap and Node Redirect, but does not satisfy Baseline.
\end{lemma}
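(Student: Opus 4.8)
The plan is to exploit the fact that $F^a = 2 \cdot PR^a$ differs from PageRank only by the positive global constant $2$, together with the observation that each of the five invariance axioms asserts only equalities between centrality values (or sums of them), Edge Swap being the single axiom whose hypothesis also involves an equality of two centrality values. Every such relation is preserved under multiplication by a positive constant, and $PR^a$ satisfies all five invariance axioms by the analysis in Section~\ref{section:proof-2}; so $F^a$ should inherit them directly. First I would record the elementary fact that, for any weighted graphs $(G,\bs)$, $(G',\bs')$ and node $v$, the equality $PR^a_v(G,\bs) = PR^a_v(G',\bs')$ implies $F^a_v(G,\bs) = F^a_v(G',\bs')$ by multiplying both sides by $2$, and likewise any relation of the form $PR^a_u + PR^a_w = PR^a_w(\cdots)$ scales to the corresponding relation for $F^a$.

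With that observation in hand I would dispatch Node Deletion, Edge Deletion and Edge Multiplication immediately: for each, PageRank satisfies the stated equality, so scaling by $2$ gives the corresponding equality for $F^a$. For Node Redirect I would treat the two clauses separately. The invariance clause $F^a_v(G,\bs) = F^a_v(R_{u \rightarrow w}(G,\bs))$ for $v \in V \setminus \{u,w\}$ follows by scaling, and the additive clause follows from $F^a_u(G,\bs) + F^a_w(G,\bs) = 2\big(PR^a_u(G,\bs) + PR^a_w(G,\bs)\big) = 2 \cdot PR^a_w(R_{u \rightarrow w}(G,\bs)) = F^a_w(R_{u \rightarrow w}(G,\bs))$, where the middle equality is PageRank's Node Redirect. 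Note that the hypothesis of Node Redirect (that $u,w$ are out-twins) is purely structural and so carries over unchanged.

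The one axiom whose hypothesis is \emph{not} purely structural is Edge Swap, so this is the step deserving the most care, although it too is routine. Its precondition is $F^a_u(G,\bs) = F^a_w(G,\bs)$ together with $\deg^+_u(G) = \deg^+_w(G)$. Because the scaling factor $2$ is strictly positive, $F^a_u = F^a_w$ is equivalent to $PR^a_u = PR^a_w$, so the hypothesis for $F^a$ is exactly the condition under which PageRank's Edge Swap applies; the conclusion then transfers by scaling. (This is precisely the point where positivity of the constant matters: a non-positive factor could break the equivalence of the two preconditions, but here it does not.)

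Finally, to show Baseline fails I would exhibit a single isolated node with positive weight. Taking $(G,\bs) = ((\{v\},\emptyset),[1])$, PageRank's recursive equation~\eqref{eq:pr:main} gives $PR^a_v(G,\bs) = \bs(v) = 1$, hence $F^a_v(G,\bs) = 2 \cdot PR^a_v(G,\bs) = 2 \neq 1 = \bs(v)$, contradicting Baseline. This completes the separation, establishing that $F^a$ satisfies Node Deletion, Edge Deletion, Edge Multiplication, Edge Swap and Node Redirect but violates Baseline.
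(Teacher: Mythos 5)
Your proposal is correct and follows essentially the same route as the paper's proof: both rest on the single observation that $PR^a_v(G,\bs)=PR^a_v(G',\bs')$ scales to $F^a_v(G,\bs)=F^a_v(G',\bs')$, both handle the Edge Swap precondition by noting that $F^a_u=F^a_w$ is equivalent to $PR^a_u=PR^a_w$ (and similarly transfer the additive clause of Node Redirect), and both refute Baseline with the isolated single-node graph $((\{v\},\emptyset),[1])$, where $F^a_v=2\neq 1=\bs(v)$. Your explicit remark that positivity of the constant is what makes the Edge Swap hypotheses equivalent is a nice touch but does not change the argument.
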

\begin{proof}
Fix $a \in (0,1)$.
Observe that for any two graphs $(G,\bs)$, $(G',\bs')$ and a node $v$, if we know that $PR_v^a(G,\bs) = PR_v^a(G',\bs')$, then we have 
\begin{equation}\label{eq:independence:dn:2}
F^a_v(G,\bs) = 2 \cdot PR^a(G,\bs) = 2 \cdot PR^a_v(G',\bs') = F^a_v(G',\bs').
\end{equation}

We will consider each axiom separately. 
Fix an arbitrary graph $G=(V,E)$ and node weights $\bs$.

\begin{itemize}
\item For Node Deletion, assume $u \in V$ is an isolated node. 
Let $(G',\bs') = ((V \setminus \{u\}, E), \bs)$.
Since PageRank satisfies Node Deletion, equation~\eqref{eq:independence:dn:2} implies $F^a_v(G,\bs) = F^a_v(G',\bs')$ for every $v \in V \setminus \{u\}$.

\item For Edge Deletion, consider arbitrary edge $(u,w) \in E$.
Let $(G',\bs') = ((V, E - \lBrace (u,w) \rBrace), \bs)$.
Since PageRank satisfies Edge Deletion, equation~\eqref{eq:independence:dn:2} implies $F^a_v(G,\bs) = F^a_v(G',\bs')$ for every $v \in V \setminus S_u(G)$.

\item For Edge Multiplication, consider $u \in V$ and $k \in \mathbb{N}$.
Let $(G',\bs') = ((V, E \sqcup k \cdot \Gamma_u^+(G)), \bs)$.
Since PageRank satisfies Edge Multiplication, equation~\eqref{eq:independence:dn:2} implies $F^a_v(G,\bs) = F^a_v(G',\bs')$ for every $v \in V$.

\item For Edge Swap, assume $(u,u'), (w,w') \in E$ are two edges such as $F^a_u(G,\bs)=F^a_w(G,\bs)$ and $\deg^+_u(G) = \deg^+_w(G)$.
From the definition of $F^a$ this means that also $PR^a_u(G,\bs)=PR^a_w(G,\bs)$.
Let $(G',\bs') = ((V, E - \lBrace (u,u'), (w,w') \rBrace \sqcup \lBrace (u,w'), (w,u') \rBrace), \bs)$.
Since PageRank satisfies Edge Swap, equation~\eqref{eq:independence:dn:2} implies $F^a_v(G,\bs) = F^a_v(G',\bs')$ for every $v \in V$.

\item For Node Redirect, assume $u,w \in V$ are out-twins.
Let $(G',\bs') = R_{u \rightarrow w}(G,\bs)$.
Since PageRank satisfies Node Redirect, equation~\eqref{eq:independence:dn:2} implies $F^a_v(G,\bs) = F^a_v(G',\bs')$ for every $v \in V \setminus \{u,w\}$.
Analogously, we get that
\[ F^a_u(G,\bs) +  F^a_w(G,\bs) = 2 \cdot (PR^a_u(G,\bs) + PR^a_w(G,\bs)) = 2 \cdot PR^a_w(R_{u \rightarrow w}(G,\bs)) = F^a_w(R_{u \rightarrow w}(G,\bs)). \]
\end{itemize}

Finally, consider Baseline. Observe that $F^a_v((\{v\},\emptyset), [1]) = 2$. Hence, the axiom is not satisfied.
\end{proof}


\section[Other centrality measures (Table 1)]{Other centrality measures (Table~\ref{table:axioms})}\label{section:appendix:other_centralities}
In this appendix, we present proofs for satisfaction of our axioms by several centralities from the literature.
The summary of these results is discussed in Section~\ref{section:comparison} and presented in Table~\ref{table:axioms}.

We begin by extending the definition of centrality measures to include measures defined only for a subset of all graphs.
For a class of graphs $\mathcal{G}$, a \emph{centrality measure $F$ defined on} $\mathcal{G}$ is a function that for every node $v$ in a graph with node weights $(G,\bs)$ with $G \in \mathcal{G}$ assigns a non-negative real value, denoted by $F_v(G,\bs)$.
Now, PageRank, degree centrality, beta measure, decay centrality and betweenness centrality are defined on all graphs.
Eigenvector centrality, Seeley index and closeness centrality are defined on strongly connected graphs; we will denote this class of graphs by $\mathcal{SCG}$.
Katz centrality with the decay factor $a \in (0,1)$ is defined on graphs with $\lambda(G) < 1/a$; we will denote this class of graphs by $\mathcal{KG}^a$.
Note that every acyclic graph $G$ belongs to $\mathcal{KG}^a$ for every $a \in (0,1)$ since $\lambda(G) = 0$.

To consider axioms from Section~\ref{section:axioms} for centrality measures defined on a (not complete) class of graphs we need to restrict definitions of axioms to this class.
This can be done by adding an additional condition that every graph considered in the axiom belongs to the considered class of graphs.
In this way, we obtain a weaker version of a general axiom.
We present the formal definitions in the subsequent sections.

\subsection{Node Deletion}

We begin by formally defining the axiom restricted to a class of graphs $\mathcal{G}$.

\begin{quote}\textit{\textbf{Node Deletion on $\mathcal{G}$:}
For every graph $G = (V,E) \in \mathcal{G}$, node weights $\bs$ and isolated node $u \in V$, if $(V \setminus \{u\}, E) \in \mathcal{G}$, then it holds that
$$ F_v(G,\bs) = F_v \big( \big( V \setminus \{u\}, E \big), \bs \big) \quad \mbox{for every } v \in V \setminus \{u\}.$$
}\end{quote}

\begin{proposition}
Node Deletion is satisfied by degree centrality, beta measure, decay centrality, betweenness centrality, Katz centrality (on $\mathcal{KG}^a$ for every $a \in (0,1)$), Bonacich centrality (on $\mathcal{KG}^a$ for every $a \in (0,1)$), eigenvector centrality (on $\mathcal{SCG}$), Seeley index (on $\mathcal{SCG}$), and closeness centrality (on $\mathcal{SCG}$).
\end{proposition}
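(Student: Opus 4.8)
The plan is to exploit a single structural fact and then dispatch the nine measures in three groups. An isolated node $u$ has no incident edges, so it is completely disconnected from the rest of the graph: $u$ is not a (direct or indirect) predecessor of any $v \neq u$, it lies on no path between two other nodes, and there is no path from $u$, so $dist_{u,v}(G) = \infty$ for every $v$. Consequently, deleting $u$ leaves untouched every predecessor set, out-degree, pairwise distance and shortest-path count among the remaining nodes, and I would organise the argument according to which of these quantities each measure reads off.

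First I would dispose of eigenvector centrality, Katz prestige and closeness centrality, which are defined only on $\mathcal{SCG}$. For these the restricted axiom is vacuous: a strongly connected graph with $|V| \ge 2$ cannot contain an isolated node, while if $|V| = 1$ and $u$ is that node then $V \setminus \{u\} = \emptyset \notin \mathcal{SCG}$, so the premise $(V\setminus\{u\},E) \in \mathcal{SCG}$ is never met. This is exactly the phenomenon already noted in Section~\ref{section:comparison}, so no instance needs to be checked.

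Next I would treat the feedback centralities defined on all graphs or on $\mathcal{KG}^a$: degree centrality, beta measure, Katz centrality and Bonacich centrality. The value at $v$ depends only on the incoming edges of $v$ and on the weights and out-degrees of its direct predecessors, and since $u$ is no node's predecessor and contributes no incoming edge, the closed forms for degree centrality and beta measure are literally unchanged. For the two recursively defined measures I would argue by uniqueness of the solution: the restriction to $V\setminus\{u\}$ of the solution on $G$ satisfies the defining recursion on $(V\setminus\{u\},E)$ verbatim---none of those equations mentions $u$, whose own equation decouples to $K^a_u = \bs(u)$---and since both graphs lie in $\mathcal{KG}^a$ by hypothesis the solution is unique, so the two values agree. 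For Bonacich centrality I could instead invoke the identity $BK^a_v = (K^a_v - \bs(v))/a$, which holds in both graphs, to inherit the property from Katz.

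Finally, for the distance-based measures on all graphs---decay and betweenness centrality---I would check that $u$ is invisible to the defining sums. In decay centrality the term for the isolated node is $a^{dist_{u,v}(G)} = 0$ because that distance is infinite, and all other distances are unaffected, so $Y_v$ is unchanged. In betweenness centrality $u$ lies on no shortest path, and any ordered pair with $u$ as an endpoint has $\sigma_{st} = 0$ and is thus excluded by the condition $\sigma_{st} \neq 0$; the surviving terms are indexed by pairs in $V\setminus\{u\}$ and take identical values in both graphs. I expect the only real care to be needed in this last bookkeeping for betweenness---confirming that the pairs touching $u$ drop out rather than altering the sum---and in the uniqueness step for Katz and Bonacich; the remaining cases are direct substitutions.
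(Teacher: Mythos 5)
Your proposal is correct and follows essentially the same route as the paper's proof: vacuousness of the restricted axiom on $\mathcal{SCG}$ for eigenvector centrality, Katz prestige and closeness, the uniqueness-of-solution argument for the Katz recursion, the identity $BK^a_v = (K^a_v - \bs(v))/a$ for Bonacich, and direct inspection of the defining formulas for the rest. The only differences are cosmetic---the paper dispatches degree, beta, decay and betweenness at once by noting each depends only on the connected component of $v$, where you verify the closed forms measure by measure, and you treat the $|V|=1$ edge case on $\mathcal{SCG}$ slightly more carefully than the paper does.
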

\begin{proof}
Consider an arbitrary graph $G=(V,E)$, node weights $\bs$, and an isolated node $u \in V$. Let $(G',\bs')=((V \setminus \{u\},E),\bs)$.

For degree centrality, beta measure, decay centrality and betweenness centrality, observe that the centrality of any node $v$ depends solely on a connected component it is in.
Since $u$ is isolated, it is not in the same component as any $v \in V \setminus \{u\}$.
Hence, we get $D_v(G,\bs) = D_v(G',\bs')$, $\beta_v(G,\bs) = \beta_v(G',\bs')$, $Y^a_v(G,\bs) = Y^a_v(G',\bs')$ as well as $B_v(G,\bs) = B_v(G',\bs')$ for every $v \in V \setminus \{u\}$.

For Katz centrality we have to prove that $K^a_v(G,\bs)=K^a_v(G',\bs')$ for every $v \in V \setminus \{u\}$.
Let us define $x_v = K^a_v(G,b)$ for every $v \in V \setminus \{u\}$.
From Katz recursive equation~\eqref{eq:c:katz} for graph $(G,\bs)$ we get that $x_v = a \cdot \sum_{(w,v) \in \Gamma^-_v(G)} x_w + \bs(v)$ for every $v \in V \setminus \{u\}$.
Now, observe that removing an isolated node $u$ from the graph, does not affect the set of incoming edges for any node.
Hence, $\Gamma_v^-(G') = \Gamma_v^-(G)$ which implies $x_v = a \cdot \sum_{(w,v) \in \Gamma^-_v(G')} x_w + \bs'(v)$ for every $v \in V \setminus \{u\}$. 
As a result, we get that values $(x_v)_{v \in V \setminus \{u\}}$ satisfy the system of Katz recursive equations~\eqref{eq:c:katz} for graph $(G',\bs')$.
Thus, $K^a_v(G,\bs) = K^a_v(G',\bs')$ for every $v \in V \setminus \{u\}$, which means that the axiom is satisfied.

For Bonacich centrality, from the fact that Katz centrality satisfies Node Deletion, for every node $v \in V \setminus \{u\}$ we get that:
$$BK^a_v(G,\bs) = \big( K^a_v(G,\bs) - \bs(v) \big) / a = \big( K^a_v(G',\bs') -\bs'(v) \big) / a = BK^a_v(G',\bs').$$

For eigenvector centrality, Seeley index and closeness centrality,
note that in the class of strongly connected graphs $\mathcal{SCG}$ there is no graph with an isolated node. 
Hence, Node Deletion on $\mathcal{SCG}$ is trivially satisfied by every centrality measure.
\end{proof}

\subsection{Edge Deletion}

Again, let us begin by formally defining the axiom restricted to a class of graphs $\mathcal{G}$.

\begin{quote}\textit{\textbf{Edge Deletion on $\mathcal{G}$:}
For every graph $G = (V,E) \in \mathcal{G}$, node weights $\bs$ and edge $(u,w) \in E$, if $(V, E - \lBrace (u,w) \rBrace) \in \mathcal{G}$, then it holds that
$$ F_v(G,\bs) = F_v \big( \big(V, E - \lBrace (u,w) \rBrace \big), \bs \big) \quad \mbox{for every } v \in V \setminus S_u(G).$$
}\end{quote}

\begin{proposition}
Edge Deletion is satisfied by degree centrality, beta measure, decay centrality, Katz centrality (on $\mathcal{KG}^a$ for every $a \in (0,1)$), eigenvector centrality (on $\mathcal{SCG}$), Seeley index (on $\mathcal{SCG}$), and closeness centrality (on $\mathcal{SCG}$).
Edge Deletion is not satisfies by betweenness centrality.
\end{proposition}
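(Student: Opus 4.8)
The plan is to split the seven centralities into four families according to how far the value of a node reaches into the graph, prove Edge Deletion for each family by a uniform argument, and then refute it for betweenness with one small counterexample. Throughout fix $G=(V,E)$, weights $\bs$, an edge $(u,w)\in E$, set $\bar S = V\setminus S_u(G)$, and let $(G',\bs')=((V,E-\lBrace (u,w)\rBrace),\bs)$. Two facts about $\bar S$ drive everything. First, since $(u,w)\in E$ the node $w$ is a direct successor of $u$, so $w\in S_u(G)$ and hence $v\neq w$ for every $v\in\bar S$. Second, and crucially, $\bar S$ is \emph{closed under direct predecessors}: if $v\in\bar S$ and $(p,v)\in E$, then $p\in\bar S$, because $p\in S_u(G)$ together with the edge $(p,v)$ would make $v$ a successor of $u$. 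Iterating, no directed path ending in a node of $\bar S$ ever visits $u$.

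For the two local parallel measures this is immediate. Degree centrality $D_v=|\Gamma^-_v(G)|$ changes only at $w$, and $w\notin\bar S$. For beta measure $\beta_v=\sum_{(p,v)\in\Gamma^-_v(G)}1/\deg^+_p(G)$, deleting $(u,w)$ alters only $\deg^+_u(G)$ and the incoming multiset of $w$; the former affects $\beta_v$ only when $u\in P^1_v(G)$, i.e. $v\in S^1_u(G)\subseteq S_u(G)$, and the latter only when $v=w$, both excluded for $v\in\bar S$. For decay centrality $Y_v=\sum_{s\neq v}a^{dist_{s,v}(G)}$ I would invoke the path statement above: any shortest path into $v$ that used $(u,w)$ would pass through $u$ and force $v\in S_u(G)$, so removing $(u,w)$ changes no distance $dist_{s,v}(G)$, hence $Y_v$ is unchanged.

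The substantive case is Katz centrality on $\mathcal{KG}^a$, and the hard part will be controlling its global recursion. I would write the defining system~\eqref{eq:c:katz} in matrix form $(I-aM)K=\bs$, where $M_{v,p}=\#_{(p,v)}(G)$, which has a unique solution precisely because $\lambda(G)<1/a$. Ordering the nodes with $\bar S$ first, predecessor-closedness of $\bar S$ means there is no edge from $S_u(G)$ into $\bar S$, so $M$ is block lower-triangular; therefore $I-aM$ is block lower-triangular and the $\bar S$-block $(I-aM_{\bar S\bar S})\,K_{\bar S}=\bs_{\bar S}$ is a self-contained subsystem whose coefficient block is invertible, being a diagonal block of an invertible block-triangular matrix. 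Since $w\in S_u(G)$, the deleted edge lies entirely outside the $\bar S\times\bar S$ block, so $M_{\bar S\bar S}$ and $\bs_{\bar S}$ coincide for $G$ and $G'$; by uniqueness of the subsystem's solution, $K^a_v(G,\bs)=K^a_v(G',\bs')$ for every $v\in\bar S$. (Bonacich centrality follows at once from the same subsystem, or directly from $BK^a_v=(K^a_v-\bs(v))/a$.) The three remaining measures are defined only on $\mathcal{SCG}$, where every node is reachable from $u$, so $\bar S=\emptyset$ and Edge Deletion on $\mathcal{SCG}$ holds \emph{vacuously} for eigenvector centrality, Katz prestige, and closeness centrality.

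Finally, to show betweenness centrality $B$ violates Edge Deletion I would exhibit the directed path $(G,\bs)=((\{a,v,u,w\},\lBrace (a,v),(v,u),(u,w)\rBrace),[0,0,0,0])$ and delete $(u,w)$. Here $v\notin S_u(G)$, since the only successor of $u$ is $w$. In $G$ the node $v$ lies on the unique shortest paths $a\to u$ and $a\to w$, giving $B_v(G,\bs)=2$; after deleting $(u,w)$ the only pair whose shortest path passes through $v$ is $(a,u)$, giving $B_v=1$. Thus the centrality of a node $v\notin S_u(G)$ changes, refuting the axiom.
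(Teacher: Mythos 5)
Your proof is correct, and on the one substantive case it takes a genuinely different route from the paper. For degree, beta and decay centralities your argument coincides with the paper's (centrality depends only on predecessors and their edges, and deleted-edge paths would force $v \in S_u(G)$); the vacuity of the axiom on $\mathcal{SCG}$ for eigenvector, Katz prestige and closeness is exactly the paper's argument; and your betweenness counterexample (a directed $4$-node path, deleting the last edge) is the same idea as the paper's $3$-node path $\big(\{u,v,w\}, \lbrace\mkern-4mu |(u,v),(v,w)|\mkern-4mu \rbrace\big)$ with $(v,w)$ deleted. The difference is Katz centrality: the paper introduces an auxiliary graph $G''$, the subgraph induced on $V \setminus S_u(G)$, and verifies that the Katz values of both $G$ and $G'$ restricted to this set satisfy the recursion on $G''$, concluding by uniqueness that both equal $K^a(G'',\bs'')$. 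You instead isolate predecessor-closedness of $\bar S = V \setminus S_u(G)$ as the driving fact, order nodes so that $I-aM$ is block lower-triangular, and compare the self-contained $\bar S$-subsystems of $G$ and $G'$ directly. These are two presentations of the same structural observation ($M_{\bar S\bar S}$ is precisely the adjacency matrix of the paper's $G''$), but your version buys something the paper leaves implicit: uniqueness of the restricted solution. The paper tacitly needs $\lambda(G'') < 1/a$ for Katz on $G''$ to be well-defined (true, since the spectral radius is monotone under passing to induced subgraphs, but unstated), whereas your determinant factorization of the block-triangular matrix makes invertibility of $I - aM_{\bar S\bar S}$ explicit. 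Two cosmetic points: your node label $a$ in the betweenness example collides with the decay factor $a$ used earlier, and the slogan that no path ending in $\bar S$ ``visits $u$'' should be read as visiting $u$ at a non-terminal position (or with the case $v=u$ handled as you do for decay centrality); neither affects soundness.
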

\begin{proof}
Consider an arbitrary graph $G=(V,E)$, node weights $\bs$, and an arbitrary edge $(u,w) \in E$.
Let $(G',\bs') = ((V, E - \lBrace (u,w) \rBrace), \bs)$.

For degree centrality, beta measure and decay centrality, observe that the centrality of any node $v$ depends solely on its predecessors and their edges.
Since $u$ is not a predecessor of $v$, we get that $D_v(G,\bs) = D_v(G',\bs')$, $\beta_v(G,\bs) = \beta_v(G',\bs')$ and $Y^a_v(G,\bs) = Y^a_v(G',\bs')$ for any $v \in V \setminus S_u(G)$.

For Katz centrality we have to prove that $K^a_v(G,\bs)=K^a_v(G',\bs')$ for every $v \in V \setminus S_u(G)$.
To this end, let us consider an auxilliary graph $(G'',\bs'')$ obtained from $(G,\bs)$ by removing all successors of $u$ (i.e., a subgraph of $G$ induced by the nodes in $V \setminus S_u(G)$):
$$G'' = \left( V \setminus S_u(G), E - \bigsqcup_{v \in S_u(G)} \Gamma^\pm_v(G) \right)$$
and $\bs''(v) = \bs(v)$ for every $v \in V \setminus S_u(G)$.
We will show that for every $v \in V \setminus S_u(G)$ we have $K^a_v(G,\bs)=K^a_v(G'',\bs'')$.
To this end, let us define $x_v = K^a_v(G,b)$ for every $v \in V \setminus S_u(G)$.
From Katz recursive equation~\eqref{eq:c:katz} for graph $(G,\bs)$ we get that $x_v = a \cdot \sum_{(w,v) \in \Gamma^-_v(G)} x_w + \bs(v)$ for every $v \in V \setminus S_u(G)$.
Now, observe that since $u$ is not a predecessor of any node in $V \setminus S_u(G)$, removing edge $(u,w)$ from the graph, does not affect the set of incoming edges of any node in this set.
Hence, $\Gamma_v^-(G'') = \Gamma_v^-(G)$ which implies $x_v = a \cdot \sum_{(w,v) \in \Gamma^-_v(G'')} x_w + \bs''(v)$ for every $v \in V \setminus S_u(G)$. 
As a result, we get that values $(x_v)_{v \in V \setminus \{u\}}$ satisfy the system of Katz recursive equations~\eqref{eq:c:katz} for graph $(G'',\bs'')$.
Thus, $K^a_v(G,\bs) = K^a_v(G'',\bs'')$ for every $v \in V \setminus S_u(G)$.
Analogously, we prove that $K^a_v(G',\bs') = K^a_v(G'',\bs'')$ for every $v \in V \setminus S_u(G)$.
Hence, we get that $K^a_v(G,\bs) = K^a_v(G',\bs')$ for every $v \in V \setminus S_u(G)$ which proves that Katz centrality satisfies Edge Deletion.

For Bonacich centrality, from the fact that Katz centrality satisfies Edge Deletion, for every node $v \in V$ we get that:
$$BK^a_v(G,\bs) = \big( K^a_v(G,\bs) - \bs(v) \big) / a = \big( K^a_v(G',\bs') -\bs'(v) \big) / a = BK^a_v(G',\bs').$$

For eigenvector centrality, Seeley index and closeness centrality, note that for every graph in the class of strongly connected graphs, i.e., $G =(V,E) \in \mathcal{SCG}$ and node $u \in V$, it holds that every node in a graph is a successor of $u$, i.e., $V \setminus S_u(G) = \emptyset$.
Hence, Edge Deletion on $\mathcal{SCG}$ is trivially satisfied by every centrality measure.

Finally, for betweenness centrality, consider the following graphs:
\begin{align*}
(G,\bs) &= \big( \big( \{u,v,w\},\lBrace (u,v),(v,w)\rBrace \big),[1,1,1] \big), \\
(G',\bs') &= \big( \big( \{u,v,w\},\lBrace (u,v) \rBrace \big),[1,1,1] \big).
\end{align*}
Graph $(G',\bs')$ is obtained from $(G,\bs)$ by deleting edge $(v,w)$.
Clearly, node $v$ is not its own successor in $G$: $v \not \in S_v(G)$.
However, $B_v(G,\bs) = 1$ and $B_v(G',\bs') = 0$. Hence, betweenness centrality does not satisfy Edge Deletion.
\end{proof}

\subsection{Edge Multiplication}

Let us formally define the axiom restricted to a class of graphs $\mathcal{G}$.

\begin{quote}\textit{\textbf{Edge Multiplication on $\mathcal{G}$:}
For every graph $G = (V,E) \in \mathcal{G}$, node weights $\bs$, node $u \in V$ and $k \in \mathbb{N}$, if $(V, E \sqcup k \cdot \Gamma^+_u(G)) \in \mathcal{G}$, then it holds that
$$F_v(G,\bs) = F_v \big( \big(V, E \sqcup k \cdot \Gamma^+_u(G) \big), \bs \big) \quad \mbox{for every } v \in V.$$
}\end{quote}

\begin{proposition}
Edge Multiplication is satisfied by beta measure, decay centrality, Seeley index (on $\mathcal{SCG}$) and closeness centrality (on $\mathcal{SCG}$).
Edge Multiplication is not satisfied by degree centrality, betweenness centrality, eigenvector centrality and Katz centrality.
\end{proposition}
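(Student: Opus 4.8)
The plan is to separate the four centralities that satisfy the axiom (beta measure, decay centrality, Katz prestige, closeness centrality) from the four that do not (degree, betweenness, eigenvector, Katz centrality), and to treat the two groups by entirely different arguments. For the positive cases the single observation driving everything is that multiplying the outgoing edges of $u$ by the factor $k+1$ sends $\deg^+_u(G) \mapsto (k+1)\deg^+_u(G)$ and $\#_{(u,w)}(G) \mapsto (k+1)\#_{(u,w)}(G)$ for every $w$, while leaving every other out-degree and every other edge multiplicity untouched; hence each ratio $\#_{(u,w)}(G)/\deg^+_u(G)$ is invariant, and so is every distance $dist_{u,v}(G)$, since adding copies of an already-present edge creates no new path and shortens none.

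For beta measure I would rewrite the definition in the multiplicity form $\beta_v(G,\bs) = \sum_{u' \in P^1_v(G)} \#_{(u',v)}(G)/\deg^+_{u'}(G)$ and note that the only affected terms are those with $u'=u$, which are invariant by the observation above; thus $\beta$ is unchanged at every node. For Katz prestige I would set $x_v = KP_v(G)$ and verify, again using ratio invariance, that $(x_v)$ satisfies the recursive equation~\eqref{eq:c:katz_prestige} on the multiplied graph; since the operation only adds copies of edges already present it preserves strong connectivity, so the modified graph lies in $\mathcal{SCG}$ and its solution is unique up to the scalar fixed by $\sum_v KP_v = 1$, giving equality. For decay and closeness centralities the argument is orthogonal: both are functions of the distances alone, which are preserved, and closeness again stays within $\mathcal{SCG}$, so both measures are unchanged.

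For the four negative cases I would exhibit explicit small counterexamples. For degree centrality, doubling the single edge in $((\{u,v\},\lBrace(u,v)\rBrace),[1,0])$ raises $|\Gamma^-_v(G)|$ from $1$ to $2$. For Katz centrality the same doubling gives $K^a_v=a$ before and $K^a_v=2a$ after (the graph is acyclic, hence in $\mathcal{KG}^a$), precisely because an edge from $u$ contributes the full value $K^a_u$ rather than a distributed share. For eigenvector centrality I would take the strongly connected $2$-cycle on $\{u,v\}$, where symmetry forces $EV_u=EV_v=1/2$; doubling the edge $(u,v)$ moves the dominant eigenvalue to $\sqrt{2}$ and yields $EV_u = 1/(1+\sqrt{2}) \neq 1/2$, with both graphs in $\mathcal{SCG}$. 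For betweenness I would use a graph with two internally disjoint shortest $s$--$t$ paths and multiply the outgoing edge of one interior node, altering the fraction $\sigma_{st}(v)/\sigma_{st}$ of shortest paths through that node.

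The main obstacle is bookkeeping rather than conceptual difficulty: for every measure defined on a proper subclass of graphs ($\mathcal{SCG}$ for eigenvector, Katz prestige and closeness; $\mathcal{KG}^a$ for Katz) one must check that the multiplied graph still belongs to the class, so that the restricted form of the axiom (or of the counterexample) is legitimate. The genuinely delicate point is the betweenness counterexample, whose validity hinges on how $\sigma_{st}$ counts shortest paths across parallel edges; I would therefore make the multiplicities explicit so that the changed ratio $\sigma_{st}(v)/\sigma_{st}$ is unambiguous.
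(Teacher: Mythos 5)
Your proposal is correct and follows essentially the same route as the paper: the key ratio invariance $\#_{(u,w)}(G)/\deg^+_u(G)$ together with distance preservation handles beta measure, decay and closeness, the recursive equation plus uniqueness under the normalization handles Katz prestige, and explicit small counterexamples dispose of the negative cases. Your counterexamples are merely leaner variants of the paper's (the paper uses one four-node graph for degree, betweenness and Katz, and the two-node cycle with $k=3$ rather than $k=1$ for eigenvector, arriving at $EV_u(G')=1/3$ instead of your $1/(1+\sqrt{2})$), and your caution about how $\sigma_{st}$ counts shortest paths over parallel edges is well placed---the paper's values $B_v(G)=1/2$ versus $B_v(G')=2/3$ implicitly count paths with edge multiplicity, a convention its node-sequence definition of path leaves unstated.
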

\begin{proof}
Consider an arbitrary graph $G=(V,E)$, node weights $\bs$, and an edge $(u,w) \in E$.
Let $(G',\bs') = ((V, E \sqcup k \cdot \Gamma^+_u(G)), \bs)$.
Note that for any two nodes $u,v \in V$ the same fraction of outgoing edges of $u$ goes to $v$ in $G'$ as in $G$:
\begin{equation}
\label{eq:other:centralities:em}
	\frac{\#_{(u,v)}(G)}{\deg^+_u(G)} = \frac{\#_{(u,v)}(G')}{\deg^+_u(G')}
	\quad \mbox{for every } u,v \in V.
\end{equation}

For beta measure, observe that the centrality of any node $v \in V$ depends solely on the direct predecessors of $v$ and the fraction of their outgoing edges that goes to $v$.
Hence, from equation~\eqref{eq:other:centralities:em} we get that $\beta_v(G,\bs) = \beta_v(G',\bs')$ for every $v \in V$.

For closeness centrality and decay centrality, observe that the centrality of any node $v \in V$ depends solely on the distances to $v$ from other nodes.
Since the distance between any two nodes is the same in $G$ as in $G'$, we get that $C_v(G,\bs) = C_v(G',\bs')$ and $Y^a_v(G,\bs) = Y^a_v(G',\bs')$ for every $v \in V$.

For Seeley index, we have to prove that $SI_v(G,\bs) = SI_v(G',\bs')$ for every $v \in V$.
Let us define $x_v = SI_v(G,b)$ for every $v \in V$.
From Seeley index recursive equation~\eqref{eq:c:katz_prestige} for graph $(G,\bs)$ we get that:
\[ x_v = a \cdot \sum_{(w,v) \in \Gamma^-_v(G)} \frac{x_w}{\deg^+_w(G)} + \bs(v) = \sum_{u \in P^1_v(G)} \frac{\#_{(u,v)}(G)}{\deg_u^+(G)} x_w + \bs(v)
\quad \mbox{for every $v \in V$}. \]
By equation~\eqref{eq:other:centralities:em} this means that:
\[ x_v = \sum_{u \in P^1_v(G')} \frac{\#_{(u,v)}(G')}{\deg_u^+(G')} x_w + \bs'(v) = a \cdot \sum_{(w,v) \in \Gamma^-_v(G')} \frac{x_w}{\deg^+_w(G')} + \bs'(v). \]
As a result, we get that values $(x_v)_{v \in V \setminus \{u\}}$ satisfy the system of Seeley index recursive equations~\eqref{eq:c:katz} for graph $(G',\bs')$.
Thus, $SI_v(G,\bs) = SI_v(G',\bs')$ for every $v \in V$, which means that the axiom is satisfied.

For degree centrality, betweenness centrality, Katz centrality, and Bonacich centrality, consider the following graphs:
\begin{align*}
	(G,\bs) &= \big( \big( \{u,v,v',w\},\lBrace (u,v),(u,v'),(v,w),(v',w) \rBrace \big),[0,1,0,0] \big),\\
	(G',\bs') &= \big( \big( \{u,v,v',w\},\lBrace (u,v),(u,v'),(v,w),(v,w),(v',w) \rBrace \big),[0,1,0,0] \big).
\end{align*}
Graph $(G',\bs')$ is obtained from $(G,\bs)$ by multiplying outgoing edges of node $v$.
Since graphs are acyclic, $(G,\bs),(G',\bs') \in \mathcal{KG}^a$ for every $a \in (0,1)$.
Moreover, $D_w(G,\bs) = 2 \neq 3 = D_w(G',\bs')$, $B_v(G,\bs) = 1/2 \neq 2/3 = B_v(G',\bs')$,  $K^a_w(G,\bs) = a \neq 2a = K^a_w(G',\bs')$, and $BK^a_w(G,\bs) = 1 \neq 2 = BK^a_w(G',\bs')$.
Hence, all four centrality measures do not satisfy Edge Multiplication.

Finally, for eigenvector centrality consider the following graphs:
\begin{align*}
(G,\bs) &= \big( \big( \{u,v\},\lBrace (u,v),(v,u)\rBrace \big),[1,1] \big),\\
(G',\bs') &= \big( \big( \{u,v\},\lBrace (u,v), (u,v), (u,v), (u,v),(v,u)\rBrace \big),[1,1] \big).
\end{align*}
Graph $(G',\bs')$ is obtained from $(G,\bs)$ by multiplying outgoing edges of node $u$.
Observe that $(G,\bs),(G',\bs') \in \mathcal{SCG}$.
Moreover, $\lambda(G)= 1$, $\lambda(G')=1/2$ which implies $EV_u(G,\bs)=EV_v(G,\bs)=1/2$, $EV_u(G',\bs')=1/3$ and $EV_v(G',\bs')= 2/3$.
Hence, eigenvector centrality does not satisfy Edge Multiplication.
\end{proof}

\subsection{Edge Swap}

Once again, let us  start with the formal definition of the axiom restricted to a class of graphs $\mathcal{G}$.

\begin{quote}\textit{\textbf{Edge Swap on $\mathcal{G}$:}
For every graph $G =  (V,E) \in \mathcal{G}$, node weights $\bs$, and edges $(u, u'),  (w, w')  \in  E$ such that $F_u(G,\bs)=F_w(G,\bs)$ and $\deg^+_u(G)=\deg^+_w(G)$, let $E' = E - \lBrace (u,u'), (w,w') \rBrace \sqcup \lBrace (u,w'), (w,u') \rBrace$. If $(V, E') \in \mathcal{G}$, then it holds that
$$ F_v(G,\bs) = F_v \big( \big(V, E' \big), \bs \big) \quad \mbox{for every } v \in V.$$
}\end{quote}

\begin{proposition}
Edge Swap is satisfied by degree centrality, beta measure, eigenvector centrality (on $\mathcal{SCG}$), Seeley index (on $\mathcal{SCG}$) and Katz centrality (on $\mathcal{KG}^a$ for every $a \in (0,1)$).
Edge Swap is not satisfied by decay centrality, betweenness centrality and closeness centrality.
\end{proposition}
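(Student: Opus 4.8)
The plan is to treat the two halves of the proposition separately: verify the restricted Edge Swap directly for the five measures that satisfy it, and refute it by explicit counterexamples for the three that do not. Throughout I would fix a graph $G=(V,E)\in\mathcal{G}$, weights $\bs$, and edges $(u,u'),(w,w')\in E$ with $F_u(G,\bs)=F_w(G,\bs)$ and $\deg^+_u(G)=\deg^+_w(G)$, writing $(G',\bs')$ for the swapped graph; recall that the restricted axiom only demands the conclusion when $(G',\bs')\in\mathcal{G}$, so the measure is defined on $G'$ in every case I consider. The one structural fact used everywhere is that an edge swap preserves the out-degree of every node and alters the incoming edges of only $u'$ and $w'$: node $u'$ trades its edge from $u$ for an edge from $w$, and symmetrically for $w'$.

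For degree centrality this already finishes the argument, since the in-degree of $u'$ (and of $w'$) is unchanged when one incoming edge is merely relabelled and every other node keeps its incoming multiset, so $D_v(G,\bs)=D_v(G',\bs')$ for all $v$ --- notably without even invoking the hypotheses on $u,w$. For beta measure I would track the single affected summand: the term $1/\deg^+_u(G)$ in $\beta_{u'}$ is replaced by $1/\deg^+_w(G')=1/\deg^+_w(G)$, and these coincide precisely because $\deg^+_u(G)=\deg^+_w(G)$, with the same happening at $w'$; here the out-degree hypothesis is exactly what is needed.

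For eigenvector centrality, Katz prestige and Katz centrality I would reuse the uniqueness technique from the Edge Swap part of the proof of Theorem~\ref{theorem:main}: set $x_v=F_v(G,\bs)$ and verify that $x$ satisfies the linear fixed-point system characterising the measure on $(G',\bs')$; uniqueness then forces $F(G',\bs')=F(G,\bs)$. Only the equations at $u'$ and $w'$ change, and in each the contribution through the swapped edge is preserved: for Katz centrality the parallel term $a\cdot x_u$ becomes $a\cdot x_w=a\cdot x_u$ by $F_u=F_w$; for Katz prestige the distributed term $x_u/\deg^+_u(G)$ becomes $x_w/\deg^+_w(G)$, equal by $F_u=F_w$ together with $\deg^+_u=\deg^+_w$; and for eigenvector centrality the analogous computation yields $A'x=\lambda(G)\,x$ for the adjacency matrix $A'$ of $G'$. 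The genuinely delicate point is eigenvector centrality, where I must also confirm that $\lambda(G)$ is the relevant eigenvalue of $G'$: since $(G',\bs')\in\mathcal{SCG}$, Perron--Frobenius gives a unique positive eigenvector up to scale, and as $x$ is strictly positive it must be this Perron eigenvector, so $\lambda(G')=\lambda(G)$ and, after normalising the total centrality to $1$, $EV(G')=EV(G)$. The same normalisation resolves the scalar ambiguity for Katz prestige, whereas Katz centrality already has a unique solution on $\mathcal{KG}^a$.

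For the three violations I would exhibit small counterexamples in which the swap precondition holds for a coincidental rather than structural reason, so that rerouting one edge changes the global distance or shortest-path structure. For decay and betweenness, defined on all graphs, a modest graph with $u,w$ of equal out-degree and equal centrality but with $u'$ and $w'$ in different positions suffices, and one checks that some downstream node's value changes after the swap. For closeness the main obstacle is that the witness must live entirely inside $\mathcal{SCG}$, so both $G$ and $G'$ must be strongly connected while the swap still alters some distance. A convenient witness is the bidirected triangle (all closeness values equal, all out-degrees $2$): swapping the ends of two of its edges can be arranged to keep the graph strongly connected yet turn two of the pairwise distances from $1$ into $2$, changing the closeness of the affected nodes --- and, incidentally, forcing a shortest path through a node that was previously bypassed, which refutes Edge Swap for betweenness on the same example. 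The bulk of the remaining effort is the routine but careful check that each chosen swap preserves both membership in $\mathcal{G}$ and the stated precondition.
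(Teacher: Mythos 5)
Your proposal is correct and follows essentially the same route as the paper's proof: a direct check via preserved in- and out-degrees for degree centrality and beta measure, substituting the old solution into the recursive system of the swapped graph and invoking uniqueness for eigenvector centrality, Katz centrality and Katz prestige, and a symmetric triangle counterexample (strongly connected before and after the swap) refuting the axiom for closeness, decay and betweenness. The only cosmetic differences are that the paper justifies $\lambda(G)=\lambda(G')$ by citing \citet{Kitti:2016} where you argue directly via Perron--Frobenius on the strictly positive solution, and its counterexample is a directed triangle with doubled edges rather than your bidirected triangle; both variants check out.
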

\begin{proof}
Consider arbitrary graph $G=(V,E)$, node weights $\bs$, and edges $(u,u'),(w,w') \in E$.
Let $(G',\bs') = ((V,E - \lBrace (u,u'),(w,w') \rBrace  \sqcup \lBrace (u,w'),(w,u') \rBrace), \bs)$.

For degree centrality, observe that the number of incoming edges of every node is the same in both $G$ and $G'$.
Hence, we get
$D_v(G,\bs) = |\Gamma^-_v(G)| = |\Gamma^-_v(G')| = D_v(G',\bs')$
for every $v \in V$.

For beta measure, the centrality depends only on the number of outgoing edges of the predecessors.
Note that the number of outgoing edges of every node is the same in $G$ and $G'$. Hence, every node with the same set of predecessors, i.e., other than $u'$ and $w'$, has the same beta measure in both graphs: $\beta_v(G,\bs) = \beta_v(G',\bs')$ for every $v \in V \setminus \{u',w'\}$.
For $u'$ observe that $\deg_u^+(G)=\deg_w^+(G')$, hence
\begin{multline*}
	\beta_{u'}(G,\bs) =
	\sum_{(s,u') \in \Gamma^-_{u'}(G) -\lBrace (u,u') \rBrace}
		\frac{1}{\deg^+_s(G)}
	+ \frac{1}{\deg^+_u(G)} = \\
	\sum_{(s,u') \in \Gamma^-_{u'}(G') -\lBrace (w,u') \rBrace}
		\frac{1}{\deg^+_s(G')}
	+ \frac{1}{\deg^+_w(G')} =
	\beta_{u'}(G',\bs').
\end{multline*}
Analogously, $\beta_{w'}(G,\bs) = \beta_{w'}(G',\bs')$.

For eigenvector centrality, Katz centrality and Seeley index we will show that the solutions to the system of recursive equations for graph $(G,\bs)$, i.e., $(x_v)_{v \in V}$, satisfies also the system of recursive equations for graph $(G',\bs')$.
This will prove that the centralities in both graphs are equal.
Observe that for every $v \in V \setminus \{u',w'\}$, the recursive equations for $(G,\bs)$ and $(G',\bs')$ are identical, hence $(x_v)_{v \in V}$ satisfy them.
Therefore, it suffices to consider equations for $u'$ and $w'$.
Moreover, the case for $w'$ is symmetrical to $u'$, hence in what follows we will focus only on node $u'$.
Observe that from the construction of graph $G'$ we have
\begin{equation}
\label{eq:other:centralities:es:1}
\Gamma^-_{u'}(G) - \lBrace (u,u') \rBrace = \Gamma^-_{u'}(G') - \lBrace (w,u') \rBrace, \quad 
\deg^+_{u}(G) = \deg^+_{w}(G') \quad \mbox{and} \quad
x_u = x_w.
\end{equation}
Now, let us consider each of these three centrality measures separately:
\begin{itemize}
\item For eigenvector centrality we have that $x_v = EV_v(G,\bs)$ for every $v \in V$.
From eigenvector centrality recursive equation~\eqref{eq:c:eigenvector} for node $u'$ in graph $(G,\bs)$, we get
\[
x_{u'} = 1/\lambda(G) (\textstyle\sum_{(s,u') \in \Gamma^-_{u'}(G) - \lBrace (u,u') \rBrace} x_s + x_u).
\]
Observe that from~\eqref{eq:other:centralities:es:1}, this equation is equivalent to the recursive equation for $u'$ in graph $(G',\bs')$:
$x_{u'} = 1/\lambda(G) (\sum_{(s,u') \in \Gamma^-_{u'}(G') - \lBrace (w,u') \rBrace} x_s + x_w)$.
Therefore, we get that $(x_v)_{v \in V}$ satisfy eigenvector centrality recursive equation for graph $(G',\bs')$ and $\lambda(G)$ which implies that $\lambda(G) = \lambda(G')$~\cite{Kitti:2016} and $EV_v(G',\bs') = x_v = EV_v(G,\bs)$ for every $v \in V$.
\item For Katz centrality we take $x_v = K^a_v(G,\bs)$ for every $v \in V$.
From Katz centrality recursive equation~\eqref{eq:c:katz} for node $u'$ and graph $(G,\bs)$, we get
$x_{u'} = a \cdot \sum_{(s,u') \in \Gamma^-_{u'}(G) - \lBrace (u,u') \rBrace} x_s + a \cdot x_u + \bs(u')$.
From~\eqref{eq:other:centralities:es:1}, this equation is equivalent to
$x_{u'} = a \cdot \sum_{(s,u') \in \Gamma^-_{u'}(G') - \lBrace (w,u') \rBrace} x_s + a \cdot x_w + \bs(u')$.
Therefore, we get that $(x_v)_{v \in V}$ satisfy Katz centrality recursive equation for graph $(G',\bs')$. 
This implies $K^a_v(G',\bs') = x_v = K^a_v(G,\bs)$ for every $v \in V$.
\item Finally, for Seeley index consider $x_v = SI_v(G,\bs)$ for every $v \in V$.
From Seeley index recursive equation~\eqref{eq:c:katz_prestige} for node $u'$ and graph $(G,\bs)$, we get
$x_{u'} = \sum_{(s,u') \in \Gamma^-_{u'}(G) - \lBrace (u,u') \rBrace} x_s/\deg^+_s(G) + x_u/\deg^+_u(G) + \bs(u')$.
From~\eqref{eq:other:centralities:es:1}, this equation is equivalent to
\[
    x_{u'} = \textstyle\sum_{(s,u') \in \Gamma^-_{u'}(G') - \lBrace (w,u') \rBrace} x_s/\deg^+_s(G') + x_w/\deg^+_w(G') + \bs'(u').
\]
As a result, we obtain that $(x_v)_{v \in V}$ satisfy Seeley index recursive equation for graph $(G',\bs')$.
Hence, $SI_v(G,\bs) = SI_v(G',\bs')$ for every $v \in V$.
\end{itemize}

For Bonacich centrality, consider the following graphs:
\begin{align*}
(G,\bs) &= \big(\big( \{u, u', w, w'\}, \lBrace (u, u'), (w,w') \rBrace \big), [1,0,0,0] \big),\\
(G',\bs') &= \big(\big( \{u, u', w, w'\}, \lBrace (u, w'), (w,u') \rBrace \big), [1,0,0,0] \big).
\end{align*}
Graph $(G',\bs')$ is obtained from $(G,\bs)$ by swapping the ends of both edges.
Both graphs are acyclic, hence $(G,\bs),(G',\bs') \in \mathcal{KG}^a$ for every $a \in (0,1)$.
Also, $\deg^+_u(G)=1=\deg^+_w(G)$ and $BK^a_u(G,\bs)=0=BK^a_w(G,\bs)$.
However, $BK^a_{u'}(G,\bs)=1 \neq 0 = BK^a_{u'}(G',\bs')$.
Thus, Bonacich centrality does not satisfy Edge Swap.

Finally, for closeness centrality, decay centrality and betweenness centrality consider the following graphs:
\begin{align*}
(G,\bs) &= \big( \big( \{u, v, w\}, \lBrace (u,v), (u,v), (v,w), (v,w), (w,u), (w,u) \rBrace \big), [1,1,1] \big), \\
(G',\bs') &= \big( \big( \{u, v, w\}, \lBrace (u,u), (u,v), (v,w), (v,w), (w,u), (w,v) \rBrace \big), [1,1,1] \big).
\end{align*}
Graph $(G',\bs')$ is obtained from $(G,\bs)$ by swapping the ends of edges $(u,v)$ and $(w,u)$.
Observe that both graphs belong to $\mathcal{SCG}$ class.
Moreover, $\deg^+_u(G) = 2 = \deg^+_w(G)$ as well as $C_u(G,\bs) = 1/3 = C_w(G,\bs)$, $Y^a_u(G,\bs) = a+a^2 = Y^a_w(G,\bs)$, and $B_u(G,\bs) = 1 = B_w(G,\bs)$.
However, $C_v(G,\bs) = 1/3 \neq 1/2 = C_v(G',\bs')$, $Y^a_v(G,\bs) = a+a^2 \neq 2a = Y^a_v(G',\bs')$, and $B_u(G,\bs) = 1 \neq 0 = B_u(G',\bs')$.
Therefore, closeness centrality, decay centrality and betweenness centrality do not satisfy Edge Swap.
\end{proof}

\subsection{Node Redirect}
Let us begin with the formulation of the axiom restricted to the class of graphs $\mathcal{G}$.

\begin{quote}\textit{\textbf{Node Redirect on $\mathcal{G}$:}
For every graph $G = (V,E) \in \mathcal{G}$, node weights $\bs$ and out-twins $u,w \in V$, if $R_{u \rightarrow w}(G) \in \mathcal{G}$, then it holds that
$$F_v(G,\bs) = F_v \big(R_{u \rightarrow w}(G, \bs) \big) \quad \mbox{for every } v \in V \setminus \{u,w\}$$
and
$F_u(G,\bs) + F_w(G,\bs) = F_w\big(R_{u \rightarrow w}(G, \bs) \big)$.
}\end{quote}

\begin{proposition}
Node Redirect is satisfied by eigenvector centrality (on $\mathcal{SCG}$), Seeley index (on $\mathcal{SCG}$), Katz centrality (on $\mathcal{KG}^a$ for every $a \in (0,1)$) and Bonacich centrality (on $\mathcal{KG}^a$ for every $a \in (0,1)$).
Node Redirect is not satisfied by degree centrality, beta measure, closeness centrality, decay centrality and betweenness centrality.
\end{proposition}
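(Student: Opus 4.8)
The plan is to establish the four positive cases by the recursive-equation-plus-uniqueness argument already used for Edge Swap and Edge Deletion, and to refute the five negative cases with small explicit counterexamples. For Katz centrality on $\mathcal{KG}^a$, write $(G',\bs') = R_{u \rightarrow w}(G,\bs)$ and set $k_v = K^a_v(G,\bs)$. I would propose the candidate solution $x_v = k_v$ for $v \in V \setminus \{u,w\}$ and $x_w = k_u + k_w$, and verify that $(x_v)_{v \in V \setminus \{u\}}$ satisfies the Katz recursive equations~\eqref{eq:c:katz} for $(G',\bs')$; uniqueness of that solution then yields $K^a_v(G',\bs') = x_v$, which is exactly the two claims of the axiom. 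The verification splits by node type. For a node $z \notin \{u,w\}$ that is not a direct successor of $u$ or $w$, the equation is literally unchanged. For a direct successor $z \neq u,w$, the incoming multiset loses precisely the $\#_{(u,z)}(G)$ edges from $u$; since $u,w$ are out-twins we have $\#_{(u,z)}(G) = \#_{(w,z)}(G)$, so the dropped contribution $a \cdot \#_{(u,z)}(G)\cdot k_u$ is recovered verbatim by substituting $x_w = k_u + k_w$ into the surviving $w$-edges. The decisive step is the equation at $w$ itself: summing the $(G,\bs)$-equations for $u$ and $w$ gives $k_u + k_w = a\big(\sum_{(s,u)\in\Gamma^-_u(G)} k_s + \sum_{(s,w)\in\Gamma^-_w(G)} k_s\big) + \bs(u) + \bs(w)$, and I would match this against $\Gamma^-_w(G') = \big(\Gamma^-_w(G) - \#_{(u,w)}(G)\cdot\lBrace (u,w)\rBrace\big) \sqcup \lBrace (s,w) : (s,u)\in\Gamma^-_u(G),\, s\neq u\rBrace$ together with $\bs'(w) = \bs(u)+\bs(w)$, checking that the edges $(u,w)$ and $(u,u)$ removed on the left are accounted for by the out-twin identities $\#_{(u,w)}(G)=\#_{(w,w)}(G)$ and $\#_{(u,u)}(G)=\#_{(w,u)}(G)$ and the relabelling $x_w = k_u+k_w$.

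Bonacich centrality then follows for free from the identity $BK^a_v = (K^a_v - \bs(v))/a$: for $v \neq u,w$ we get $BK^a_v(G',\bs') = (K^a_v(G',\bs') - \bs'(v))/a = BK^a_v(G,\bs)$, and $BK^a_w(G',\bs') = (K^a_u(G,\bs)+K^a_w(G,\bs) - \bs(u) - \bs(w))/a = BK^a_u(G,\bs)+BK^a_w(G,\bs)$, exactly as the Node Deletion and Edge Deletion propositions derive Bonacich from Katz. Eigenvector centrality and Katz prestige on $\mathcal{SCG}$ use the same candidate $(x_v)$ with $x_w = EV_u(G,\bs)+EV_w(G,\bs)$ (resp. $KP_u(G,\bs)+KP_w(G,\bs)$); here the out-twin equality $\deg^+_u(G) = \deg^+_w(G)$ is what makes the distributed split at common successors cancel, and one checks that retargeting the predecessors of $u$ onto $w$ leaves every out-degree $\deg^+_s$ unchanged. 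Because $x$ is strictly positive, sums to $1$ (the total centrality is preserved by the merge), and satisfies the homogeneous system for $G'$, Perron--Frobenius applied to the strongly connected graph $R_{u\rightarrow w}(G)$ forces the spectral parameter to agree, i.e. $\lambda(G')=\lambda(G)$, and pins down $x = EV(G')$ (resp. $KP(G')$); this is the same appeal to~\cite{Kitti:2016} used in the Edge Swap proof.

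For the negative cases I would exhibit counterexamples. For degree centrality and beta measure take $G = ((\{u,w,x\}, \lBrace (u,x),(w,x)\rBrace), \bs)$, where $u,w$ are out-twins: here $D_x(G)=2$ but $D_x(R_{u\rightarrow w}(G))=1$, and likewise $\beta_x(G)=2 \neq 1 = \beta_x(R_{u\rightarrow w}(G))$, since merging removes one predecessor and one edge. The same graph refutes decay centrality, as $Y^a_x(G) = 2a \neq a = Y^a_x(R_{u\rightarrow w}(G))$. For betweenness, take $G=((\{u,w,z,t\}, \lBrace (u,z),(w,z),(z,t)\rBrace),\bs)$ with out-twins $u,w$: then $z$ lies on the shortest paths $(u,z,t)$ and $(w,z,t)$, giving $B_z(G)=2$, whereas after the redirect only $(w,z,t)$ survives and $B_z(R_{u\rightarrow w}(G))=1$. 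Closeness is defined only on $\mathcal{SCG}$, so here I would use the strongly connected $G=((\{u,w,z\}, \lBrace (u,z),(w,z),(z,u),(z,w)\rBrace),\bs)$, whose redirect $R_{u\rightarrow w}(G)=((\{w,z\}, \lBrace (w,z),(z,w),(z,w)\rBrace),\bs')$ is again strongly connected; there $C_z(G)=1/(1+1)=1/2$ while $C_z(R_{u\rightarrow w}(G))=1/1=1$.

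I expect the main obstacle to be the bookkeeping for the merged node $w$ in the positive cases: one must handle the self-loops $(u,u),(w,w)$ and the mutual edges $(u,w),(w,u)$ correctly, and these are exactly the terms where the out-twin identities must be invoked rather than silently dropped. A secondary subtlety is the closeness counterexample, which must stay inside $\mathcal{SCG}$ both before and after the redirect so that the restricted axiom genuinely applies; the three-node symmetric gadget above is chosen precisely to preserve strong connectivity through the merge.
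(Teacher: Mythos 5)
Your proposal is correct and takes essentially the same route as the paper: the positive cases use the identical candidate-solution-plus-uniqueness argument, with the same out-twin bookkeeping (in particular the identity $\#_{(u,u)}(G)+\#_{(u,w)}(G)=\#_{(w,u)}(G)+\#_{(w,w)}(G)=\#_{(w,w)}(G')$ for self-loops and mutual edges), the same derivation of Bonacich from Katz via $BK^a_v=(K^a_v-\bs(v))/a$, and the same Perron--Frobenius/Kitti appeal to get $\lambda(G)=\lambda(G')$ for eigenvector centrality. The only (cosmetic) divergence is in the negative half, where the paper refutes all five measures with one strongly connected graph --- essentially your closeness gadget --- while you use three separate gadgets; both are valid.
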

\begin{proof}

Consider arbitrary graph $G=(V,E)$, node weights $\bs$, and out-twins $u,w \in V$.
Let $(G',\bs')=R_{u \rightarrow w}(G, \bs)$.

First, consider eigenvector centrality, Katz centrality, and Seeley index.
Let $F$ be one of these centrality measures and let us define $(x_v)_{v \in V \setminus \{u\}}$ as follows: $x_v = F_v(G,\bs)$ for every $v \in V \setminus \{u,w\}$ and $x_w = F_u(G,\bs) + F_w(G,\bs)$.
We will show that $(x_v)_{v \in V \setminus \{u\}}$ is a solution to the system of recursive equations that defines $F$ for graph $(G',\bs')$.
This will prove that $F_v(G,\bs) = F_v(G',\bs')$ for every $v \in V \setminus \{u,w\}$ and $F_u(G,\bs) + F_w(G,\bs) = F_w(G',\bs')$ which means that $F$ satisfies Node Redirect.
Observe that since $u$ and $w$ are out-twins, we have
\begin{equation}\label{eq:other:centralities:nr:1}
\#_{(u,v)}(G) = \#_{(w,v)}(G) = \#_{(w,v)}(G')
\end{equation}
for every $v \in V \setminus \{u,w\}$.
Moreover, incoming edges of $u$ in $G$ are redirected to $w$ in $G'$, hence 
\begin{equation}\label{eq:other:centralities:nr:2}
\#_{(u,u)}(G) + \#_{(u,w)}(G) = \#_{(w,u)}(G) + \#_{(w,w)}(G) = \#_{(w,w)}(G').
\end{equation}

Now, let us consider each of these three centrality measures separately:

\begin{itemize}
\item For eigenvector centrality let $x_v = EV_v(G,\bs)$ for every $v \in V \setminus \{u,w\}$ and $x_w = EV_u(G,\bs) + EV_w(G,\bs)$.
Fix node $v \in V \setminus \{u,w\}$.
From eigenvector centrality recursive equation~\eqref{eq:c:eigenvector} for node $v$ in graph $(G,\bs)$ we get
\[
	x_{v} = 1/\lambda(G) \left(
		\sum_{(s,v) \in \Gamma^-_{v}(G) : s \not \in \{u,w\}} x_s + 
		\#_{(u,v)}(G) \cdot EV_u(G,\bs) + 
		\#_{(w,v)}(G) \cdot EV_w(G,\bs)
	\right).
\]
Since incoming edges of $v$ from nodes other than $u$ and $w$ are the same in $G'$ as in $G$, from~\eqref{eq:other:centralities:nr:1} we get the recursive equation for $v$ in graph $(G',\bs')$:
\[
	x_{v} = 1/\lambda(G) \left(
		\sum_{(s,v) \in \Gamma^-_{v}(G') : s \neq w} x_s + \#_{(w,v)}(G') \cdot x_w
	\right).
\]
Now, let us focus on node $w$.
From eigenvector centrality recursive equation for node $w$ in graph $(G,\bs)$ we get
{\small
\[
	x_{w} = 1/\lambda(G) \left(
		\sum_{(s,u) \in \Gamma^-_{u}(G) \sqcup \Gamma^-_{w}(G) : s \not \in \{u,w\}} x_s + 
		\big(\#_{(u,u)}(G) + \#_{(u,w)}(G) \big) \cdot  \big( EV_u(G,\bs) + EV_w(G,\bs) \big)
	\right).
\]
}
Observe that all incoming edges of $u$ and $w$ from other nodes in $G$ are now incoming edges of $w$ in $G'$.
Therefore, from equation~\eqref{eq:other:centralities:nr:2} we obtain the equation for $w$ in graph $(G',\bs')$, i.e.,
\(
	x_{w} = 1/\lambda(G) \left(
		\sum_{(s,w) \in \Gamma^-_{w}(G') : s \neq w} x_s + \#_{(w,w)}(G') \cdot x_w
	\right).
\)
As a result, we get that $(x_v)_{v \in V \setminus \{u\}}$ satisfy all eigenvector centrality recursive equations for graph $(G',\bs')$ and $\lambda(G)$, which means that $\lambda(G) = \lambda(G')$ \cite{Kitti:2016} and $EV_v(G',\bs') = x_v = EV_v(G,\bs)$ for every $v \in V \setminus \{u\}$.

\item For Katz centrality let us take $x_v = K^a_v(G,\bs)$ for every $v \in V \setminus \{u,w\}$ and $x_w = K^a_u(G,\bs) + K^a_w(G,\bs)$.
Fix $v \in V \setminus \{u,w\}$.
From Katz centrality recursive equation~\eqref{eq:c:katz} for node $v$ in graph $(G,\bs)$ we get
\[
	x_{v} = a \left(
		\sum_{(s,v) \in \Gamma^-_{v}(G) : s \not \in \{u,w\}} x_s + 
		\#_{(u,v)}(G) \cdot K^a_u(G,\bs) + 
		\#_{(w,v)}(G) \cdot K^a_w(G,\bs)
	\right) + \bs(v).
\]
Observe that incoming edges of $v$ from nodes other than $u$ and $w$ are the same in $G$ as in $G'$ and also $\bs(v)=\bs'(v)$. Therefore, from equation~\eqref{eq:other:centralities:nr:1} we get the recursive equation for $v$ in $(G',\bs')$, i.e.,
\(
	x_{v} = a \left(
		\sum_{(s,v) \in \Gamma^-_{v}(G') : s \not \in \{w\}} x_s + \#_{(w,v)}(G') \cdot x_w
	\right) + \bs'(v).
\)
Now, consider node $w$.
From Katz centrality recursive equation for node $w$ in graph $(G,\bs)$ we get
{\small
\begin{multline*}
	x_{w} = 
	a \left(
		\sum_{(s,u) \in \Gamma^-_{u}(G) \sqcup \Gamma^-_{w}(G) : s \not \in \{u,w\}} x_s + 
		\big(\#_{(u,u)}(G) + \#_{(u,w)}(G) \big) \!\cdot\!  \big( K^a_u(G,\bs) \! + \! K^a_w(G,\bs) \big) \!
	\right) \! + \bs(u) + \bs(w).
\end{multline*}
}
Observe that incoming edges of $u$ and $w$ from other nodes in $G$ are incoming edges of $w$ in $G'$.
Moreover, $K^a_u(G,\bs) + K^a_w(G,\bs)=x_w$ and $\bs(u)+\bs(w) = \bs'(w)$. 
Hence, from equation~\eqref{eq:other:centralities:nr:2} we get the recursive equation for $w$ in graph $(G',\bs')$:
\[
	x_{w} = a \left(
		\sum_{(s,w) \in \Gamma^-_{w}(G') : s \not \in \{w\}} x_s + \#_{(w,w)}(G') \cdot x_w
	\right) + \bs'(w).
\]
As a result, we get that $(x_v)_{v \in V \setminus \{u\}}$ satisfy all Katz centrality recursive equations for graph $(G',\bs')$, so $K^a_v(G',\bs') = x_v = K^a_v(G,\bs)$ for every $v \in V \setminus \{u\}$.

\item For Seeley index let $x_v = SI_v(G,\bs)$ for every $v \in V \setminus \{u,w\}$ and $x_w = SI_u(G,\bs) + SI_w(G,\bs)$.
First, observe that the number of outgoing edges of every node is the same in $G$ and $G'$.
Hence,
\begin{equation}
\label{eq:other:centralities:nr:3}
	\deg^+_u(G) = \deg^+_w(G) = \deg^+_w(G') \quad \mbox{and} \quad 
	\deg^+_v(G) = \deg^+_v(G') \quad \mbox{for every } v \in V \setminus \{u,w\}.
\end{equation}
Fix $v \in V \setminus \{u,w\}$.
From Seeley index recursive equation~\eqref{eq:c:katz_prestige} for node $v$ in graph $(G,\bs)$ we obtain
\[
	x_{v} = \left(
		\sum_{(s,v) \in \Gamma^-_{v}(G) : s \not \in \{u,w\}} \frac{x_s}{\deg^+_s(G)} + 
		\#_{(u,v)}(G) \cdot \frac{SI_u(G,\bs)}{\deg^+_u(G)} + 
		\#_{(w,v)}(G) \cdot \frac{SI_w(G,\bs)}{\deg^+_w(G)}
	\right).
\]
Observe that incoming edges of $v$ from nodes other than $u$ and $w$ are the same in graph $G$ and in graph $G'$.
Therefore, from equations~\eqref{eq:other:centralities:nr:1} and \eqref{eq:other:centralities:nr:3} we obtain the recursive equation for $v$ in $(G',\bs')$, i.e.,
\(
	x_{v} = \left(
		\sum_{(s,v) \in \Gamma^-_{v}(G') : s \not \in \{w\}} x_s/\deg^+_s(G') + \#_{(w,v)}(G') \cdot x_w/\deg^+_w(G')
	\right)
\)
It remains to consider node $w$.
From Seeley index recursive equation for graph $(G,\bs)$ we have that
{\small
\[
	x_{w} = \left(
		\sum_{(s,u) \in \Gamma^-_{u}(G) \sqcup \Gamma^-_{w}(G) : s \not \in \{u,w\}} \frac{x_s}{\deg^+_s(G)} + 
		\big(\#_{(u,u)}(G) + \#_{(u,w)}(G) \big) \cdot  \left( \frac{SI_u(G,\bs)}{\deg^+_u(G)} + \frac{SI_w(G,\bs)}{\deg^+_w(G)} \right)
	\right).
\]
}
Observe that incoming edges of $u$ and $w$ from other nodes in $G$ are incoming edges of $w$ in $G'$.
Hence, from the fact that $SI_u(G,\bs) + SI_w(G,\bs)=x_w$ and from equations~\eqref{eq:other:centralities:nr:2} and~\eqref{eq:other:centralities:nr:3} we get the recursive equation for $w$ in $(G',\bs')$:
\[
	x_{w} = 1/\lambda(G) \left(
		\sum_{(s,w) \in \Gamma^-_{w}(G') : s \not \in \{w\}} x_s/\deg^+_s(G') + \#_{(w,w)}(G') \cdot x_w/\deg^+_w(G')
	\right).
\]
As a result, we obtain that $(x_v)_{v \in V \setminus \{u\}}$ satisfy all Seeley index recursive equations for graph $(G',\bs')$.
Therefore, $SI_v(G',\bs') = x_v = SI_v(G,\bs)$ for every $v \in V \setminus \{u\}$.
\end{itemize}

For Bonacich centrality, from the fact that Katz centrality satisfies Node Redirect, for every node $v \in V \setminus \{u,w\}$ we get that:
$$BK^a_v(G,\bs) = \big( K^a_v(G,\bs) - \bs(v) \big) / a = \big( K^a_v(G',\bs') -\bs'(v) \big) / a = BK^a_v(G',\bs')$$
and also:
$$BK^a_u(G,\bs) + BK^a_w(G,\bs) = \big( K^a_u(G,\bs) + K^a_w(G,\bs) -\bs(u) - \bs(w) \big) / a = \big( K^a_w(G',\bs') - \bs'(w) \big) / a = BK^a_w(G',\bs').$$

Finally, for degree centrality, beta measure, closeness centrality, decay centrality and betweenness centrality, let us consider the following graphs:
\begin{align*}
	(G,\bs) &= \big( \big(\{ u,v,w \}, \lBrace (u,v), (w,v), (v,u), (v,w) \rBrace \big), [1,1,1] \big),\\
	(G',\bs') &= \big( \big(\{ v, w \}, \lBrace (w,v), (v,w), (v,w) \rBrace \big), [1,2] \big).
\end{align*}
Graph $(G',\bs')$ is obtained from $(G,\bs)$ by redirecting node $u$ into node $v$.
Observe that indeed $u$ and $w$ are out-twins and both graphs belongs to the $\mathcal{SCG}$ class.
However, $D_v(G,\bs) = 2 \neq 1 = D_v(G',\bs')$, $\beta_v(G,\bs) = 2 \neq 1 = \beta_v(G',\bs')$, $C_v(G,\bs) = 1/2 \neq 1 = C_v(G',\bs')$, $Y^a_v(G,\bs) = 2a \neq a = Y^a_v(G',\bs')$, and $B_v(G,\bs) = 2 \neq 0 = B_v(G',\bs')$.
Therefore, Node Redirect is not satisfied by these centrality measures.
\end{proof}

\subsection{Baseline}

As usual, we begin with the formulation of the axiom restricted to the class of graphs $\mathcal{G}$.

\begin{quote}\textit{
\textbf{Baseline on $\mathcal{G}$:}
For every graph $G = (V,E) \in \mathcal{G}$, node weights $\bs$ and an isolated node $v \in V$ it holds that $F_v(G, \bs) = \bs(v)$.
}
\end{quote}

\begin{proposition}
Baseline is satisfied by Katz centrality (on $\mathcal{KG}^a$ for every $a \in (0,1)$), eigenvector centrality (on $\mathcal{SCG}$), Seeley index (on $\mathcal{SCG}$), and closeness centrality (on $\mathcal{SCG}$).
Baseline is not satisfied by degree centrality, beta measure, decay centrality and betweenness centrality.
\end{proposition}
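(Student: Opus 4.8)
The plan is to split the statement into three groups: the positive case proved by direct substitution (Katz centrality), the positive cases that hold vacuously (the three measures defined only on $\mathcal{SCG}$), and the negative cases, all of which fall to a single counterexample.

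First I would dispatch Katz centrality on $\mathcal{KG}^a$. Fix a graph $G=(V,E) \in \mathcal{KG}^a$, weights $\bs$ and an isolated node $v$. Since $v$ has no incoming edges, $\Gamma^-_v(G) = \emptyset$, so the inner sum in the defining recursive equation~\eqref{eq:c:katz} is empty and $K^a_v(G,\bs) = a \cdot 0 + \bs(v) = \bs(v)$. This is immediate and needs no appeal to uniqueness of the solution.

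Next, for eigenvector centrality, Katz prestige and closeness centrality — all defined only on $\mathcal{SCG}$ — I would invoke the same observation already used for Node Deletion on $\mathcal{SCG}$: a strongly connected graph contains no isolated node (the trivial one-node graph lies outside the domain of all three measures, as it makes the defining formulas in equations~\eqref{eq:c:eigenvector}, \eqref{eq:c:katz_prestige} and the closeness formula ill-defined). Hence the universally quantified Baseline condition on $\mathcal{SCG}$ ranges over an empty set of (graph, isolated node) pairs and is vacuously true, so all three measures satisfy it.

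Finally, for the negative cases — degree centrality, beta measure, decay centrality and betweenness centrality (and, as recorded in Table~\ref{table:axioms}, Bonacich centrality) — a single counterexample suffices. Consider the one-node graph $(G,\bs) = ((\{v\}, \emptyset), [1])$, in which $v$ is isolated and $\bs(v) = 1$; this graph lies in every relevant domain (in particular $\lambda(G) = 0 < 1/a$, so it belongs to $\mathcal{KG}^a$). Each of these measures disregards node weights and assigns $0$ to a node with no incoming edges and no other nodes present: $D_v(G,\bs) = |\Gamma^-_v(G)| = 0$, $\beta_v(G,\bs) = 0$, $Y^a_v(G,\bs) = 0$, $B_v(G,\bs) = 0$, and $BK^a_v(G,\bs) = 0$, each being an empty sum. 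In every case the value is $0 \neq 1 = \bs(v)$, so Baseline fails. The argument presents no genuine obstacle; the only point requiring care is the $\mathcal{SCG}$ group, where one must note explicitly — consistently with the convention adopted earlier — that strong connectivity precludes isolated nodes, so that the restricted axiom is met vacuously rather than by an actual computation.
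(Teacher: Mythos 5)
Your proposal is correct and matches the paper's own proof essentially step for step: direct substitution into the recursive equation~\eqref{eq:c:katz} for Katz centrality, vacuous satisfaction on $\mathcal{SCG}$ since strongly connected graphs contain no isolated node, and the single-node counterexample $((\{v\},\emptyset),[1])$ for all the negative cases (including Bonacich, which the paper's proof also covers). Your extra remark that the one-node graph lies outside the domain of the $\mathcal{SCG}$ measures is a slightly more careful justification of the vacuity than the paper gives, but it is the same argument.
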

\begin{proof}
For Katz centrality, the thesis is straightforward from Katz centrality recursive equation~\eqref{eq:c:katz}: since $v$ does not have any incoming edges we get that $K^a_v(G,\bs) = \bs(v)$.

For eigenvector centrality, Seeley index and closeness centrality,
note that in the class of strongly connected graphs $\mathcal{SCG}$ there is no graph with an isolated node. 
Hence, Baseline on $\mathcal{SCG}$ is trivially satisfied by every centrality measure.

Finally, for degree centrality, beta measure, decay centrality, betweenness centrality, and Bonacich centrality, consider graph $(G,\bs) = ((\{v\},\emptyset)[1])$.
We see that $D_v(G,\bs) = \beta_v(G,\bs) = Y^a_v(G,\bs) = B_v(G,\bs) = BK_v(G,\bs) = 0 \neq 1 = \bs(v)$.
Hence, Baseline is not satisfied by these centrality measures.
\end{proof}

\bibliography{bibliography}

\end{document}